\documentclass[12pt]{article}
\usepackage{amsmath}
\usepackage{graphicx,psfrag,epsf}
\usepackage{enumerate}
\usepackage{natbib}
\usepackage{url} 
\usepackage{amsthm}

\usepackage{amsfonts}
\usepackage{multirow, booktabs}
\usepackage[table]{xcolor}
\usepackage{cleveref}
\usepackage[T1]{fontenc}
\usepackage[utf8]{inputenc}
\usepackage{authblk}
\usepackage[multiple]{footmisc}
\usepackage{blindtext,titlefoot}
\usepackage{sectsty}
\usepackage{bbm}
\usepackage{extarrows}
\usepackage{amsmath,graphicx,centernot}
\usepackage{mathtools}

\newcommand{\bigCI}{\mathrel{\text{\scalebox{1.07}{$\perp\mkern-10mu\perp$}}}}
\newcommand{\nbigCI}{\centernot{\bigCI}}



\addtolength{\oddsidemargin}{-.5in}%
\addtolength{\evensidemargin}{-.5in}%
\addtolength{\textwidth}{1in}%
\addtolength{\textheight}{-.3in}%
\addtolength{\topmargin}{-.8in}%

\usepackage{amsmath, amssymb}
\usepackage{graphicx,psfrag,epsf}
\usepackage{enumerate}
\usepackage{natbib}
\usepackage{url} 
\usepackage{amsthm}
\usepackage{subcaption}
\usepackage{xcolor}
\usepackage{algorithm,algorithmic,refcount} 
\usepackage{graphicx}
\usepackage{caption}
\usepackage{sidecap}
\usepackage{multirow, booktabs}
\usepackage{algorithm,algorithmic,refcount} 
\newcommand{\indep}{\rotatebox[origin=c]{90}{$\models$}}

\newtheorem{Lemma}{Lemma}
\newtheorem*{Lemma*}{Lemma}
\newtheorem{Definition}{Definition}
\newtheorem{Theorem}{Theorem}
\newtheorem*{Theorem*}{Theorem}
\newtheorem{Example}{Example}
\newtheorem{Remark}{Remark}

\newtheorem{Assumption}{Assumption}
\newtheorem{Simulation}{Simulation}

\title{\Large \bf Re-Evaluating Strengthened-IV Designs: Asymptotic Efficiency, Bias Formula, and the Validity and Power of Sensitivity Analyses}
\date{}

\author{$\text{Siyu Heng}^{1, ^*}$, $\text{Bo Zhang}^{1, ^\ast}$, $\text{Xu Han}^{2}$, $\text{Scott A. Lorch}^{1}$, and $\text{Dylan S. Small}^{1, \dagger}$}

\vspace{-0.3 cm}
\affil{$\textit{University of Pennsylvania}^{1} \textit{ and } \textit{Temple University}^{2}$}


\begin{document}

\sectionfont{\bfseries\large\sffamily}%
%

\subsectionfont{\bfseries\sffamily\normalsize}%
%


\def\spacingset#1{\renewcommand{\baselinestretch}%
{#1}\small\normalsize} \spacingset{1}

\maketitle

\let\thefootnote\relax\footnotetext{{\it Keywords:}  Causal inference; Matching; Observational studies; Sensitivity analysis; Weak instruments.}

\let\thefootnote\relax\footnotetext{$^\ast$ The first two authors contributed equally to this work.}

\let\thefootnote\relax\footnotetext{$\dagger$ \textit{Address for correspondence:} Dylan S. Small, Department of Statistics, The Wharton School, University of Pennsylvania, Philadelphia, PA 19104 (e-mail: \textsf{dsmall@wharton.upenn.edu}).}

\vspace{-0.9 cm} 
\begin{abstract}
Instrumental variables (IVs) are extensively used to estimate treatment effects when the treatment and outcome are confounded by unmeasured confounders; however, weak IVs are often encountered in empirical studies and may cause problems. Many studies have considered building a stronger IV from the original, possibly weak, IV in the design stage of a matched study at the cost of not using some of the samples in the analysis. It is widely accepted that strengthening an IV tends to render nonparametric tests more powerful and will increase the power of sensitivity analyses in large samples. In this article, we re-evaluate this conventional wisdom to bring new insights into this topic. We consider matched observational studies from three perspectives. First, we evaluate the trade-off between IV strength and sample size on nonparametric tests assuming the IV is valid and exhibit conditions under which strengthening an IV increases power and conversely conditions under which it decreases power. Second, we derive a necessary condition for a valid sensitivity analysis model with continuous doses. We show that the $\Gamma$ sensitivity analysis model, which has been previously used to come to the conclusion that strengthening an IV increases the power of sensitivity analyses in large samples, does \textit{not} apply to the continuous IV setting and thus this previously reached conclusion may be invalid. Third, we quantify the bias of the Wald estimator with a possibly invalid IV under an oracle and leverage it to develop a valid sensitivity analysis framework; under this framework, we show that strengthening an IV may amplify or mitigate the bias of the estimator, and may or may not increase the power of sensitivity analyses. We also discuss how to better adjust for the observed covariates when building an IV in matched studies.
\end{abstract}


\thispagestyle{empty}

\spacingset{1.45} 

\section{Introduction}
\label{sec: introduction}
\subsection{Instrumental variable methods in matched observational studies}
\label{subsec: encouragement design example}
An instrumental variable (IV) can be thought of as a haphazard encouragement to take some treatment whose only effect on the outcome is through its effect on the treatment. A randomized encouragement design is an ideal prototype (\citealp{holland1988causal}). In a randomized encouragement design, if the encouragement is associated with the treatment, affects the outcome only through the treatment, and is independent of unmeasured confounders, the encouragement is said to be a valid IV. A valid IV can be used to obtain bounds on the average treatment effect (\citealp{robins1994correcting}; \citealp{balke1997bounds}) and, under additional assumptions, to obtain a consistent estimate of the average treatment effect (for a certain subpopulation or the whole population depending on the assumptions) (\citealp{holland1988causal}; \citealp{AIR1996}; \citealp{hernan2006instruments}; \citealp{swanson2018partial}). This ``randomized encouragement'' perspective of an IV can be further combined with matching, a nonparametric technique of adjusting for observed covariates, when the IV is believed to be randomized conditional on a set of observed covariates. In a matched observational study, subjects with similar observed covariates are put in the same matched set, and comparisons are made within these matched sets (\citealp{rubin1973matching}; \citealp{rosenbaum2002observational,rosenbaum2010design}; \citealp{hansen2004full}; \citealp{rubin2008objective}; \citealp{stuart2010matching}; \citealp{zubizarreta2012using}; \citealp{pimentel2015large}). This study design approach to IV analysis has at least three nice features. First, it facilitates blinding like in a randomized trial as the study is designed before looking at the outcomes. Second, it facilitates nonparametric, randomization-based inference under the assumption that the IV is valid, and associated sensitivity analysis to examine how sensitive a conclusion is to a putative IV not being effectively randomly assigned. Third, it provides a unified framework to deal with continuous and binary outcomes; see \cite{baiocchi2012near} for a more detailed introduction.

An example of this matching-based study design approach to IV analysis concerns the following: does delivery of a preterm infant at a high-level neonatal intensive care unit (NICU) compared to a low-level NICU increase infants' length of hospital stay? A high-level NICU is one with a high level of technology, particularly resuscitative capacity, and high volume, whereas a low-level NICU lacks at least one of these features (\citealp{lorch2012differential}). Following \cite{baiocchi2010building}, we consider data on all premature births in Pennsylvania from 1995 to 2004 plus the first six months of 2005. The putative IV leveraged in the study is the excess travel time, defined as the travel time (in minutes) to the nearest high-level NICU minus the time to the nearest low-level NICU. Such an excess distance IV has been widely used in health studies, e.g., \citet{mcclellan1994does}.

Excess travel time would be a valid IV if a mother's risk of having a long length of stay (e.g., due to complications of the birth or the baby experiencing problems) is not related to whether the mother lives close to a high-level NICU. However, high-level NICUs tend to be in urban areas and mothers living in urban areas are on average different from those living in rural areas in ways that might be related to the risk of a long length of stay. To control for confounders of the relationship between living near a high-level NICU and length of stay, we follow \citet{baiocchi2010building} and use optimal \textit{non-bipartite} matching (\citealp{lu2001matching,lu2011optimal}) to pair mothers with similar observed covariates $\mathbf{X}$, including variables related to mother's pregnancy and socioeconomic status, but different excess travel times. Continuous IVs are sometimes dealt with by dichotomizing them into binary IVs according to some specified cut-off (e.g., dichotomizing excess travel time by above or below the median), but this sacrifices information. Non-bipartite matching preserves the continuous nature of an IV since it does not dichotomize the continuous IV into binary IVs. Non-bipartite matching typically divides $2I$ individuals into $I$ non-overlapping pairs of two individuals through minimizing the sum of distances within the $I$ pair. Typically the distance within each pair is defined in a way such that it gets smaller as the observed covariates of two paired individuals get more similar or the IV doses of two paired individuals get more \textit{distinct} (\citealp{lu2001matching,lu2011optimal}). Therefore, through minimizing the sum of within-pairs distances, non-bipartite matching tends to form pairs that are similar in observed covariates but markedly different in IV doses. After applying non-bipartite matching, any mother can in principle be paired with any other mother with similar observed covariates but different IV dose. After matching, each matched pair consists of one mother who lives ``near'' to a high-level NICU (i.e., excess travel time smaller) and another mother who lives ``far'' from a high-level NICU (i.e., excess travel time larger). Here ``near" and ``far" refer to the relative magnitude of two IV doses in each matched pair: if a mother with excess travel time of 30 minutes is paired with a mother with excess travel time of 45 minutes, then she is in the ``near group". If she is instead paired with another mother with excess travel time of 15 minutes, she is in the ``far" group. See Supplementary Material A.1 for more details on statistical matching.

The first three columns of Table~\ref{tbl: balance table} summarize the covariate balance of matched pairs in the ``near group'' and the ``far group'' of the study. After matching, the covariates are well-balanced with standardized differences near zero. One can then apply permutation-based inferential methods to the matched samples and estimate the treatment effects (\citealp{baiocchi2010building}). This matching-based study design approach to continuous IV analyses is also referred to as ``near/far matching'' (\citealp{baiocchi2012near}), and optimal non-bipartite matching is one of the most widely used statistical matching algorithms to implement it. See \citet{zubizarreta2013stronger} and \citet{keele2016strong,keele2018stronger} for other matching algorithms that implement the near/far matching design. Notable empirical studies that use near/far matching include \citet{lorch2012differential}, \citet{goyal2013length}, \citet{neuman2014anesthesia}, \citet{santana2015cisplatin}, \citet{berkowitz2017supplemental, berkowitz2019association}, \citet{lum2017causal}, and \citet{grieve2019analysis}.

\begin{table}[ht]
\caption{Covariate balance and average excess travel time of the near/far matching design $\mathcal{M}_{0}$ that uses all samples and the strengthening-IV design $\mathcal{M}_{1}$ that uses half of the samples. The means of each covariate in ``near'' and ``far'' groups are reported in columns ``Near'' and ``Far''. Std.dif is an abbreviation for ``absolute standardized difference'', i.e., the absolute value of weighted difference in means divided by the pooled standard deviation between the encouraged and control groups before matching; see \citet{rosenbaum2010design}.}
\begin{tabular}{lcccccc}
\hline
 & \multicolumn{3}{c}{Unstrengthened IV} & \multicolumn{3}{c}{Strengthened IV}\\
 Number of matched pairs  & \multicolumn{3}{c}{95,945} & \multicolumn{3}{c}{47,936}\\
 Estimated compliance rate  & \multicolumn{3}{c}{0.21} & \multicolumn{3}{c}{0.42}\\
& Near & Far & Std.dif & Near & Far  & Std.dif \\ \hline
\multirow{2}{*}{\begin{tabular}[c]{@{}l@{}}Excess travel time to\\ high-level NICU, minutes\end{tabular}} &           &          &    &   &     &         \\ 
 & 5.88      & 21.26    & 0.97  &  2.90 &  38.13  &  2.78    \\ 
Covariates &           &          &          \\ 
\hspace{0.5 cm}Birth weight, g & 2586     & 2585     & 0.00 & 2586  &  2585 &  0.00     \\
\hspace{0.5 cm}Gestational age, weeks & 35.13      & 35.13     & 0.00  &  35.16 &  35.15 & 0.00     \\
\hspace{0.5 cm}Gestational diabetes, 1/0 & 0.05     & 0.05    & 0.00  &  0.03  & 0.04 & 0.04     \\ 
\hspace{0.5 cm}Single birth, 1/0  & 0.83     & 0.83    & 0.00  & 0.86 & 0.85  & 0.02     \\ 
\hspace{0.5 cm}Parity   &2.12       & 2.12      & 0.00  & 2.01 & 2.03  &  0.02 \\
\hspace{0.5 cm}Mother's age, years    &28.06       &28.04      &0.00  & 27.53  &  27.32  &  0.04 \\
\hspace{0.5 cm}Mother's education (scale)    &3.69       &3.69      &0.01  & 3.64  &  3.58 & 0.05 \\
\hspace{0.5 cm}Mother's race (white or not), 1/0   &0.70      &0.71      &0.00  & 0.85 & 0.87 & 0.04 \\
\hspace{0.5 cm}Mother's race missing, 1/0  &0.09   &0.09  &0.00 & 0.06  & 0.07  &  0.02 \\
\hspace{0.5 cm}Neighborhood below poverty (fr)  &0.13   &0.12  &0.04  & 0.12 & 0.11 & 0.11 \\
\hline
\end{tabular}
\label{tbl: balance table}
\end{table}

\subsection{Strength of IVs and problems with weak IVs}
\label{subsec: problems with weak IV}
While encouragement creates incentives for subjects to accept the treatment, some may fail to comply with this encouragement. An encouragement is said to be a strong IV if it is strongly associated with the treatment and weak if it is only weakly associated (\citealp{Bound1995}). When the IV is binary, the strength of the IV is measured by the \emph{compliance rate}, defined to be the proportion of individuals who would accept treatment if encouraged but would not accept treatment if not encouraged (\citealp{AIR1996, Imbens_Rosenbaum2005}).

It is well recognized that studies leveraging weak IVs face several problems. With weak IVs, the usual two-stage least squares (2SLS) estimation method leads to invalid inference (\citealp{staiger1994instrumental}; \citealp{Bound1995}; \citealp{stock2002survey}). This problem can be fixed by using a permutation-based method of inference. Second, even valid, permutation-based inference suffers from low finite sample power and excessively long and non-informative confidence intervals (\citealp{Imbens_Rosenbaum2005}). This problem can be fixed by increasing the sample size. A third, and perhaps more worrisome, problem is that weak IVs are invariably sensitive to small or moderate biases, no matter the sample size (\citealp{staiger1994instrumental}; \citealp{small2008war}; \citealp{ertefaie2018quantitative}). With a weak IV, even a small correlation between the IV and unmeasured confounders greatly increases the bias of IV estimators (\citealp{Bound1995}; \citealp{small2008war}). Moreover, the power of a sensitivity analysis, defined as the probability that a study rejects a false null hypothesis when a specified magnitude of hidden bias in the IV is allowed for (\citealp{rosenbaum2004design, Rosenbuam_heter_causality2005}), drastically decreases for a weak IV even if it is only slightly biased; see \cite{small2008war} and \cite{ertefaie2018quantitative} for a detailed account.

\subsection{A strong IV is good, but strengthening an IV may not be}
Weak IVs can be problematic; a strong IV, when available, is preferable. In many practical situations, however, strong IVs are not available. To overcome this conundrum, many works have proposed to strengthen an existing, possibly weak, IV in the design stage of matched observational studies (e.g., \citealp{baiocchi2010building}; \citealp{zubizarreta2013stronger}; \citealp{yang2014dissonant}; \citealp{keele2016strong,keele2018stronger}; \citealp{lehmann2017strengthening}; \citealp{ertefaie2018quantitative}; \citealp{fogarty2019biased}).

Intuitively, ceteris paribus, a larger dose of encouragement creates stronger incentives for subjects to accept treatment (e.g., living nearer to a high-level NICU creates a stronger incentive to attend a high-level NICU). One strategy that exploits this intuition builds a stronger IV from an existing one by making the average matched pair difference in the IV larger, at the cost of throwing away some of the samples. Examples of methodological development include \citet{baiocchi2010building}, \citet{zubizarreta2013stronger}, \citet{yang2014dissonant}, \citet{keele2016strong, keele2018stronger}, and \citet{ertefaie2018quantitative}. Examples of applications that adopt this strategy include \citet{lorch2012differential}, \citet{goyal2013length}, \citet{neuman2014anesthesia}, \citet{santana2015cisplatin}, and \citet{grieve2019analysis}. To illustrate, we followed \citet{baiocchi2010building} and built a stronger IV by forming only half as many matched pairs as the data allows. The last three columns of Table \ref{tbl: balance table} summarize the covariate balance of this strengthened IV.

We will demonstrate there is a fundamental difference between a naturally strong IV and building a stronger IV. We have four objectives in this article. First, we would like to clarify the potential outcomes framework and the related randomization-based inferential procedures in matching-based study design approaches to IV analysis with continuous IVs. Second, we exhibit when strengthening a valid IV improves the finite sample power of some popular nonparametric test statistics, and when it does not. Third, we discuss what constitutes a valid randomization-based sensitivity analysis that examines how the IV (encouragement) assignment deviating from randomization would materially affect the causal conclusion. A surprising consequence of our discussion is that the widely used $\Gamma$ sensitivity analysis model, sometimes called the Rosenbaum bounds, \emph{cannot} be directly applied in the continuous IV setting. We show that, contrary to previously reported conclusions that strengthening an IV increases the power of a sensitivity analysis in large samples, strengthening an IV may or may not increase the power of a sensitivity analysis under our framework and we illuminate the factors which determine whether it increases the power of a sensitivity analysis. \textsf{R} code necessary to reproduce the main results in this article is available via \url{https://github.com/siyuheng/Code-for-IV-to-Strengthen-or-not-to-Strengthen-}.
\section{Review: IV methods in pair-matched studies}
\label{sec: review of binary and continuous IV}
\subsection{The potential outcomes framework and IV assumptions for a binary IV}\label{subsec: review of potential outcome framework}
\vspace{-0.155 cm}
    We first consider a setting with a binary IV. An example of a study with a binary IV is \citet{bronars1994economic} who, in a study of the effect of out-of-wedlock fertility on labor supply, used whether an unwed mother's first birth was to a singleton baby vs. twins (mothers who gave birth to triplets or more were not considered). Suppose there are $I$ matched pairs, each consisting of $2$ individuals. For a study with a binary IV, individual $j$ ($j=1,2$) in matched pair $i$ ($i=1,\dots, I$) is associated with a binary IV (i.e., an encouragement) $Z_{ij}$, a vector of observed covariates $\mathbf{x}_{ij}$, a binary treatment indicator $D_{ij}$, and an outcome of interest $R_{ij}$. Let $\mathbf{Z}=(Z_{11}, \dots, Z_{I2})$, $\mathbf{D}=(D_{11}, \dots, D_{I2})$ and $\mathbf{R}=(R_{11}, \dots, R_{I2})$ denote the vectors of encouragement assignments, treatments, and outcomes for $2I$ subjects, respectively. Let $\mathcal{Z}$ be the collection of all encouragement indicator vectors $\mathbf{Z}$ such that $Z_{i1}+Z_{i2}=1$ for all $i$. 
    
    We consider the potential outcomes framework (\citealp{neyman1923application}; \citealp{rubin1974estimating}) for the IV setting as in \citet{AIR1996}. Let $D_{ij}(\mathbf{Z})$ be the indicator for whether subject $ij$ would receive the treatment or not if the encouragement assignment vector is set to $\mathbf{Z}$, and $R_{ij}(\mathbf{Z}, \mathbf{D})$ the outcome of $ij$ if the encouragement assignment vector is $\mathbf{Z}$ and the treatment assignment vector is $\mathbf{D}$. We assume the stable unit treatment value assumption (SUTVA): if $Z_{ij}=Z_{ij}^{\prime}$, then $D_{ij}(\mathbf{Z})=D_{ij}(\mathbf{Z}^{\prime})$; if $Z_{ij}=Z_{ij}^{\prime}$ and $D_{ij}=D_{ij}^{\prime}$, then $R_{ij}(\mathbf{Z}, \mathbf{D})=R_{ij}(\mathbf{Z}^{\prime}, \mathbf{D}^{\prime})$. SUTVA says that encouragement affects only the subject being encouraged (no interference among units) and that there are no different versions of the encouragement. In addition to SUTVA, we assume that the exclusion restriction holds: $R_{ij}(\mathbf{Z}, \mathbf{D})=R_{ij}(\mathbf{Z}^{\prime}, \mathbf{D})$ for all $\mathbf{Z}, \mathbf{Z}^{\prime}$ and $\mathbf{D}$. Under SUTVA and the exclusion restriction, the following counterfactuals are well defined: $d_{Tij}$ is the treatment subject $ij$ would have if encouraged, $d_{Cij}$ is the treatment the subject would have if not encouraged, $r_{Tij}$ is the outcome the subject would have if encouraged, and $r_{Cij}$ is the outcome the subject would have if not encouraged. The observed treatment $D_{ij}$ and observed outcome $R_{ij}$ satisfy $D_{ij} = Z_{ij}d_{Tij} + (1-Z_{ij})d_{Cij}$ and $R_{ij} = Z_{ij}r_{Tij} + (1-Z_{ij})r_{Cij}$. Write $\mathcal{F}_{0}=\{(\mathbf{x}_{ij}, d_{Tij}, d_{Cij}, r_{Tij}, r_{Cij}): i=1,\dots, I, j=1,2 \}$.
 
        In addition to SUTVA and the exclusion restriction, there are three more assumptions commonly used in a randomized encouragement design: (1) IV Random Assignment: $P(Z_{i1}=1\mid \mathcal{F}_{0}, \mathcal{Z})=P(Z_{i2}=1\mid \mathcal{F}_{0}, \mathcal{Z})=1/2$ for all $i$, possibly conditional on $\mathbf{x}_{i1}=\mathbf{x}_{i2}$; (2) Positive Correlation (Between IV and Treatment): $E(D\mid Z=1,\mathbf{X}=\mathbf{x})>E(D\mid Z=0, \mathbf{X}=\mathbf{x})$ for all $\mathbf{x}$; (3) Monotonicity: $d_{Tij}\geq d_{Cij}$ for all $i,j$. The IV random assignment assumption can be satisfied if the IV is physically randomly assigned as in a randomized encouragement study or if it is independent of any unmeasured confounders conditional on the covariates $\mathbf{X}$ (so effectively randomly assigned). In an IV analysis, a subject belongs to one of the following four classes: 1) an always-taker if $(d_{Tij}, d_{Cij})=(1,1)$; 2) a complier if $(d_{Tij}, d_{Cij})=(1,0)$; 3) a never-taker if $(d_{Tij}, d_{Cij})=(0,0)$; 4) a defier if $(d_{Tij}, d_{Cij})=(0,1)$. The monotonicity assumption excludes defiers. For a detailed discussion of these assumptions, see \citet{AIR1996} and \citet{Baiocchi_ivtutorial2014}. An IV that satisfies SUTVA, exclusion restriction, IV random assignment, and positive correlation is called a valid IV; otherwise, it is invalid.

\subsection{Randomization inference with a binary IV in pair-matched studies}\label{subsec: randomization inference with a binary IV}
   In a randomization inference for a randomized encouragement design, the only probability distribution that enters the inference is the distribution of encouragement assignments that describes the encouragement assignment mechanism; potential outcomes under encouragement or control are held fixed (\citealp{fisher1937design}; \citealp{Imbens_Rosenbaum2005}). A randomization-based inferential procedure can be conducted by looking at the probability that a test statistic $T$ is greater than or equal to the observed value $t$:
\begin{equation*}
	P(T \geq t\mid \mathcal{F}_{0}, \mathcal{Z})=\sum_{\mathbf{z}\in \mathcal{Z}}\mathbbm{1}(T(\mathbf{z}, \mathbf{R}) \geq t) \cdot P(\mathbf{Z}=\mathbf{z}\mid \mathcal{F}_{0}, \mathcal{Z})=\frac{|\{\mathbf{z}\in \mathcal{Z}: T(\mathbf{z}, \mathbf{R}) \geq t\}|}{|\mathcal{Z}|},
\end{equation*}
where $T$ is a statistic that tests a null hypothesis concerning the counterfactuals $(d_{Tij}, d_{Cij}$, $r_{Tij}, r_{Cij})$, and $P(\mathbf{Z}=\mathbf{z}\mid \mathcal{F}_{0}, \mathcal{Z})=1/|\mathcal{Z}|=1/2^{I}$ for all $\mathbf{z} \in \mathcal{Z}$ with a valid binary IV.

\citet{small2008war} considered testing the null hypothesis $H_0: \beta = \beta_0$ in the following model: 
\begin{equation}\label{model: prop model}
    r_{Tij} - r_{Cij} = \beta (d_{Tij} - d_{Cij}).
\end{equation}
Model (\ref{model: prop model}) is called a proportional treatment effect model because the effect of the encouragement on the outcome is proportional to its effect on the treatment. Under model (\ref{model: prop model}), we have $r_{Tij}-\beta d_{Tij}=r_{Cij}-\beta d_{Cij}\overset{\Delta}{=}\epsilon_{ij}$. Let $Y_{i}=(Z_{i1}-Z_{i2})(R_{i1}-R_{i2})$ and $S_{i}=(Z_{i1}-Z_{i2})(D_{i1}-D_{i2})$. We have $Y_{i}-\beta S_{i} = (Z_{i1}-Z_{i2})(\epsilon_{i1}-\epsilon_{i2})\overset{\Delta}{=} \epsilon_{i}$. Under the IV random assignment assumption, $\epsilon_{i}$ takes on value $|\epsilon_{i1}-\epsilon_{i2}|$ or $-|\epsilon_{i1}-\epsilon_{i2}|$ with equal probability $1/2$, and in order to test $H_{0}: \beta=\beta_{0}$, it suffices to test whether $Y_{i}-\beta_{0} S_{i}=\epsilon_{i}+(\beta-\beta_{0})S_{i}$ is symmetrically distributed about zero, which can be done, for instance, by applying the Wilcoxon signed rank test or the sign test to $Y_{i}-\beta_{0}S_{i}$ (\citealp{lehmann2004elements}). Randomization-based inferential procedures could also use the randomization distribution of a combined quantile average (\citealp{rosenbaum1999using}) or the sample mean of $Y_{i}-\beta_{0} S_{i}$ (\citealp{Imbens_Rosenbaum2005}), yielding a Wald estimator $\widehat{\beta}_{\text{IV}}$ (\citealp{wald1940fitting}; \citealp{small2008war}) through solving the estimating equation $\mathbb{E}(Y_{i}-\beta S_{i})=(\mathbb{E}[R_{ij}\mid Z_{ij}=1]-\mathbb{E}[R_{ij}\mid Z_{ij}=0])-\beta(\mathbb{E}[D_{ij}\mid Z_{ij}=1]-\mathbb{E}[D_{ij}\mid Z_{ij}=0])=0$:
\begin{align*}
  &  \widehat{\beta}_{\text{IV}} = \frac{\widehat{\mathbb{E}}[\ R_{ij}\mid Z_{ij}=1\ ]-\widehat{\mathbb{E}}[\ R_{ij}\mid Z_{ij}=0\ ]}{\widehat{\mathbb{E}}[\ D_{ij}\mid Z_{ij}=1\ ]-\widehat{\mathbb{E}}[\ D_{ij}\mid Z_{ij}=0 \ ]}= \frac{\sum_{i=1}^{I}(Z_{i1}-Z_{i2})(R_{i1}-R_{i2})}{\sum_{i=1}^{I}(Z_{i1}-Z_{i2})(D_{i1}-D_{i2})}.
\end{align*}
More recently, \citet{baiocchi2010building} developed a randomization-based inferential procedure for IV analysis that allows for heterogeneous treatment effects in pair-matched studies, and \citet{kang2016full} further generalized it to the full matching design, allowing for matching with different numbers of controls across matched strata. 

\subsection{Near/far matching: embedding continuous IVs into pair-matched studies}
\label{subsec: review of continuous IV}
The NICU study described in Section~\ref{sec: introduction} involves a continuous IV, excess travel time. We clarify how to adapt the randomization-based IV analysis for binary IVs to the setting of continuous IVs in this section. A matching-based study design approach to IV analysis (e.g., near/far matching) typically pairs subjects with similar observed covariates $\mathbf{X}$, but different continuous IVs $\widetilde{Z}$. For instance, in the NICU study, two mothers with similar socioeconomic and health status, but different excess travel times to the nearest high-level NICU are paired together. Let $\widetilde{Z}_{ij}$ denote the continuous IV dose of individual $j$ in matched pair $i$. In each matched pair $i$, a binary encouragement indicator $Z_{ij}$ is constructed out of $\widetilde{Z}_{ij}$: $Z_{i1}=1$ (encouraged) and $Z_{i2}=0$ (not encouraged, i.e., control) if $\widetilde{Z}_{i1} < \widetilde{Z}_{i2}$ and vice versa. As emphasized in Section~\ref{subsec: encouragement design example}, the binary encouragement indicator $Z_{ij}$ is constructed based on the relative magnitude of $\widetilde{Z}_{i1}$ and $\widetilde{Z}_{i2}$ after matching, not according to some prespecified cut-off before matching. In the NICU study, the continuous IV $\widetilde{Z}$ is the excess travel time to the nearest high-level NICU, and the individual with a \textit{smaller} $\widetilde{Z}$ in each matched pair is seen as being encouraged to deliver at a hospital with a high-level NICU and assigned $Z = 1$.

To develop a valid randomization-based inference, one needs to carefully define the encouragement being manipulated/randomized and the fixed potential outcomes under encouragement/control. In their original paper that leveraged excess travel time as an IV to study high-level NICU's effect on neonatal death, \citet{baiocchi2010building} asked:
\begin{quote}
    What would have happened to a mother and her newborn had she lived either close to or far from a high-level NICU? Here there are two responses, $(r_{Tij}, r_{Cij})$ or $(d_{Tij}, d_{Cij})$, where $r_{Tij}$ and $d_{Tij}$ are observed from the $j$-th subject in pair $i$ under [encouragement], $Z_{ij} = 1$, whereas $r_{Cij}$ and $d_{Cij}$ are observed from the $j$-th subject in pair $i$ under control [no encouragement], $Z_{ij} = 0$.
\end{quote}
This notion of encouragement and control has been adopted in much subsequent work (\citealp{baiocchi2010building}; \citealp{zubizarreta2013stronger}; \citealp{keele2016strong,keele2018stronger}; \citealp{lehmann2017strengthening}; \citealp{ertefaie2018quantitative}; \citealp{fogarty2019biased}). One way to interpret this statement is the following: for each subject $ij$, the encouragement that may be manipulated/randomized is the binary encouragement indicator $Z_{ij} = 1$, corresponding to the subject living close to a high-level NICU, and $Z_{ij} = 0$, corresponding to the subject living far from a high-level NICU. However, the potential outcomes $(d_{Tij}, d_{Cij}, r_{Tij}, r_{Cij})$ are not fixed under this definition of encouragement and control. For example, say mother $ij$ is paired to a mother who has an excess travel time of 30 minutes. Consider whether mother $ij$ has an excess travel time of 5 or 15 minutes. In either case, mother $ij$ will be the encouraged mother in the pair. Mother $ij$ might be willing to travel 5 more minutes to go to a high-level NICU but not 15 more minutes. Say mother $ij$ would have a longer length of stay if she went to a high-level NICU rather than a low-level NICU. Then SUTVA (\citealp{rubin1980discussion}) is violated because there are versions of the encouragement which lead to different potential outcomes.

To have well-defined potential outcomes and SUTVA hold, the encouragement (control) assignment in a pair needs to define an exact dose of the IV that is received, e.g., exact excess travel time in the NICU study.  This can be achieved by considering a randomized encouragement assignment in which in pair $i$, if the excess travel times are $\widetilde{Z}_{i1}$, $\widetilde{Z}_{i2}$, a fair coin is flipped and if the coin lands heads, subject $i1$ is assigned IV $\widetilde{Z}_{i1}$ and subject $i2$ is assigned IV $\widetilde{Z}_{i2}$ and vice versa if the coin lands tails. Let $\widetilde{\mathbf{Z}}=(\widetilde{Z}_{11}. \dots, \widetilde{Z}_{I2})$ be the dose assignment (i.e., the continuous IV) vector, $D_{ij}(\widetilde{\mathbf{Z}})$ the indicator for whether subject $ij$ receives the treatment or not given $\widetilde{\mathbf{Z}}$, and $R_{ij}(\widetilde{\mathbf{Z}}, \mathbf{D})$ the outcome of subject $ij$ under $\widetilde{\mathbf{Z}}$ and $\mathbf{D}$. Similar to the binary IV case, SUTVA and the exclusion restriction (\citealp{holland1988causal}; \citealp{rosenbaum1989sensitivity}) state that: (1). $\widetilde{Z}_{ij}=\widetilde{Z}_{ij}^{\prime}$ implies $D_{ij}(\widetilde{\mathbf{Z}})=D_{ij}(\widetilde{\mathbf{Z}}^{\prime})$, and $\widetilde{Z}_{ij}=\widetilde{Z}_{ij}^{\prime}$ and $D_{ij}=D_{ij}^{\prime}$ together imply $R_{ij}(\widetilde{\mathbf{Z}}, \mathbf{D})=R_{ij}(\widetilde{\mathbf{Z}}^{\prime}, \mathbf{D}^{\prime})$; (2). $R_{ij}(\widetilde{\mathbf{Z}}, \mathbf{D})=R_{ij}(\widetilde{\mathbf{Z}}^{\prime}, \mathbf{D})$ for all $\widetilde{\mathbf{Z}}, \widetilde{\mathbf{Z}}^{\prime}$ and $\mathbf{D}$. Let $\widetilde{\mathbf{Z}}_{\vee}=(\widetilde{Z}_{11}\vee \widetilde{Z}_{12}, \widetilde{Z}_{21}\vee \widetilde{Z}_{22}, \dots, \widetilde{Z}_{I1}\vee \widetilde{Z}_{I2})$ and $\widetilde{\mathbf{Z}}_{\wedge}=(\widetilde{Z}_{11}\wedge \widetilde{Z}_{12}, \widetilde{Z}_{21}\wedge \widetilde{Z}_{22}, \dots, \widetilde{Z}_{I1}\wedge \widetilde{Z}_{I2})$, where $a \vee b= \max(a,b)$ and $a \wedge b = \min(a,b)$, represent the maximum and minimum of two doses in each matched pair. Under SUTVA and the exclusion restriction assumption, potential outcomes $(d_{Tij}, d_{Cij}, r_{Tij}, r_{Cij})$ are now well-defined and fixed after matching:
\begin{equation}
\label{def: potential outcome with continuous IV}
\begin{split}
    &d_{Tij}\overset{\Delta}{=} (D_{ij} \mid \widetilde{Z}_{ij}=\widetilde{Z}_{ij}\wedge \widetilde{Z}_{ij^{\prime}}), \quad d_{Cij}\overset{\Delta}{=} (D_{ij} \mid \widetilde{Z}_{ij}=\widetilde{Z}_{ij}\vee \widetilde{Z}_{ij^{\prime}}), \\
    &r_{Tij}\overset{\Delta}{=} (R_{ij} \mid \widetilde{Z}_{ij}=\widetilde{Z}_{ij}\wedge \widetilde{Z}_{ij^{\prime}}), \quad r_{Cij}\overset{\Delta}{=} (R_{ij} \mid \widetilde{Z}_{ij}=\widetilde{Z}_{ij}\vee \widetilde{Z}_{ij^{\prime}}),
\end{split}
\end{equation}
for $j \neq j^{\prime}$. That is, instead of defining the potential outcomes of subject $ij$ with respect to whether $Z_{ij}=\mathbbm{1}(\widetilde{Z}_{ij}<\widetilde{Z}_{ij^{\prime}}) = 1$ or $0$, which violates SUTVA as discussed above, we define potential outcomes with respect to the continuous dose. We still write $\mathcal{F}_{0}=\{(\mathbf{x}_{ij}, d_{Tij}, d_{Cij}, r_{Tij}, r_{Cij}): i=1,\dots, I, j=1,2 \}$, where $d_{Tij}$, $d_{Cij}$, $r_{Tij}$, $r_{Cij}$ are defined as in (\ref{def: potential outcome with continuous IV}). In a randomization inference with the continuous IV $\widetilde{Z}$, the only probability distribution that enters statistical inference is the conditional probability $P(\widetilde{Z}_{i1} = \widetilde{Z}_{i1} \wedge \widetilde{Z}_{i2}, \widetilde{Z}_{i2} = \widetilde{Z}_{i1}\vee \widetilde{Z}_{i2}\mid \mathcal{F}_{0}, \widetilde{\mathbf{Z}}_{\vee}, \widetilde{\mathbf{Z}}_{\wedge})$ that characterizes the underlying dose assignment mechanism in each matched pair $i$. By conditioning on $\widetilde{\mathbf{Z}}_{\vee}$ and $\widetilde{\mathbf{Z}}_{\wedge}$, we only know the maximum and minimum of $(\widetilde{Z}_{i1}, \widetilde{Z}_{i2})$ for each $i$, but not which individual has which value. When $\widetilde{Z}$ is a valid IV that is physically randomized as in a randomized encouragement design, then this dose assignment mechanism is created by the experimenter and known to us: $P(\widetilde{Z}_{i1} = \widetilde{Z}_{i1} \wedge \widetilde{Z}_{i2}, \widetilde{Z}_{i2} = \widetilde{Z}_{i1}\vee \widetilde{Z}_{i2}\mid \mathcal{F}_{0}, \widetilde{\mathbf{Z}}_{\vee}, \widetilde{\mathbf{Z}}_{\wedge}) = P(\widetilde{Z}_{i1} = \widetilde{Z}_{i1} \vee \widetilde{Z}_{i2}, \widetilde{Z}_{i2} = \widetilde{Z}_{i1}\wedge \widetilde{Z}_{i2}\mid \mathcal{F}_{0}, \widetilde{\mathbf{Z}}_{\vee}, \widetilde{\mathbf{Z}}_{\wedge}) = 1/2$, and forms the ``reasoned basis for inference'' in Fisher's phrase. 

\section{Strengthening a valid IV: strength versus sample size}
\label{sec: strengthen valid IV}
 \citet{small2008war} leveraged a matching-based study design approach to reanalyze a study by \citet{angrist1994world} concerning the effects of military services during World War II on subsequent earnings, in which men from different birth cohorts who are otherwise similar are ``encouraged'' or ``discouraged'' to serve in the military depending on when they turned $18$, e.g., men who turned $18$ in $1944$ were more encouraged than men who turned $18$ in $1946$. By selecting men from different birth cohorts, IVs of different strength and with unequal sample sizes can be created. Such a trade-off between IV strength and sample size also naturally emerges when part of the sample is discarded in order to forge a stronger IV in the design stage of matched observational studies. Given valid IVs of different sample sizes and distinct strength, which should one prefer in the context of nonparametric testing in matched studies? 
 
\subsection{Quantitative evaluation of the trade-off between strength and sample size for a valid IV}\label{subsec: Quantitative evaluation of the trade-off between IV strength and sample size}
We consider testing the null hypothesis $H_0: \beta = \beta_0$ in model (\ref{model: prop model}) in a matched pair study using the Wilcoxon signed rank test or the sign test as introduced in Section~\ref{sec: review of binary and continuous IV}. Recall that the error term $\epsilon_{i}=Y_{i}-\beta S_{i}$ takes on value $|\epsilon_{i1}-\epsilon_{i2}|$ or $-|\epsilon_{i1}-\epsilon_{i2}|$ with equal probability. We assume that $\epsilon_{i}, i = 1,...,I$ are i.i.d. realizations from a distribution $F$ where $F$ is symmetric about zero given that the IV is valid. Following \citet{ertefaie2018quantitative}, we consider a model that says an individual is an always-taker with probability $\iota_A$, a complier with probability $\iota_C$ (i.e., the compliance rate), and a never-taker with probability $\iota_N$. We have $\iota_A + \iota_N + \iota_C = 1$ as there are no defiers under the monotonicity assumption. Denote the mean of $r_{Cij}$ for compliers to be $\mu_C$, for always-takers $\mu_A$, and for never-takers $\mu_N$. We assume that the compliance categories of different individuals are independent, and the distribution of $r_{Cij}$ for always-takers and that for never-takers are both location shifts from that of compliers. Let $I$ be the number of matched pairs. We now derive the asymptotic relative efficiency (ARE) of testing $H_0: \beta = \beta_0$ in model (\ref{model: prop model}) when there are two valid IVs of different strength.

\begin{Theorem}[Quantitative trade-off between IV strength and sample size] \label{thm: ARE in independent case} 
Consider a sequence of testing problems consisting of a null hypothesis $H_{0}: \beta=\beta_{0}$ versus $H_{1}=\beta_{n}$. Suppose that $\beta_{n}=\beta_{0}+\Delta/\sqrt{n} + o(1/\sqrt{n})$ and without loss of generality, suppose that $\Delta>0$. Let $\pi_{I}(\cdot)$ be the power function of the one-sided Wilcoxon signed rank test (or the sign test) when testing the proportional treatment effect model in a randomized encouragement design using $I$ matched pairs. Let $I_{n}$ be the minimum number of matched pairs needed such that $\pi_{I_{n}}(\beta_0)\leq \alpha$ and $\pi_{I_{n}}(\beta_n)\geq \gamma$ for some $\alpha \in (0,1)$ and $\gamma \in (\alpha,1)$. Let $I_{n,1}$ and $I_{n,2}$ be the number of matched pairs needed for two given sequences of tests using two different valid IVs: one with parameters $\varrho_{1}=(\iota_{C,1},\iota_{A,1},\iota_{N,1},\mu_{C,1}, \mu_{A,1}, \mu_{N,1})$ and the other with parameters $\varrho_{2}=(\iota_{C,2},\iota_{A,2},\iota_{N,2},\mu_{C,2}, \mu_{A,2}, \mu_{N,2})$. Suppose that the compliance structure $(d_{Tij},d_{Cij})$ is independent of $r_{Cij}$ for both IVs, implying that $\mu_{C,1} = \mu_{A,1} = \mu_{N,1}$ and $\mu_{C,2} = \mu_{A,2} = \mu_{N,2}$. Let $f(x)$ be the density function of the error term $\epsilon_{i}$. Suppose that $f$ is continuously differentiable, $\underset{x \to \infty}{\lim} f(x)=0$, and $f \in L^{2}$. For both the Wilcoxon signed rank test and the sign test, we have
\begin{equation*}
    \lim_{n \rightarrow \infty}\frac{I_{n,1}}{I_{n,2}}=\frac{\iota_{c,2}^{2}}{\iota_{c,1}^{2}}.
\end{equation*}
\end{Theorem}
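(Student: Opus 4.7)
The plan is to invoke the Pitman asymptotic relative efficiency (ARE) framework. Under the contiguous alternatives $\beta_n = \beta_0 + \Delta/\sqrt{n} + o(1/\sqrt{n})$, I will compute the Pitman efficacy of each test as a function of the compliance rate $\iota_C$ and show that it factors as a density-dependent constant times $\iota_C$; the standard ARE-to-sample-size formula then yields the target ratio because the density-dependent constants cancel. To set up, write $W_i := Y_i - \beta_0 S_i = \epsilon_i + \theta_n S_i$ under the truth $\beta_n$, with $\theta_n = \Delta/\sqrt{n} + o(1/\sqrt{n})$. The hypothesized independence of the compliance structure $(d_{Tij}, d_{Cij})$ from $r_{Cij}$, together with $\mu_C = \mu_A = \mu_N$, implies that $\epsilon_i$ is independent of the score $S_i \in \{-1,0,1\}$; a direct enumeration over pair types in $\{A,C,N\}^2$ gives $P(S_i = 1) = (\iota_A + \iota_C)(\iota_C + \iota_N)$ and $P(S_i = -1) = \iota_A \iota_N$, which telescope to $E[S_i] = \iota_C$.

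Next I would compute the efficacies. For the sign test $T_I^{\mathrm{sgn}} = \sum_{i=1}^I \mathbbm{1}(W_i > 0)$, conditioning on $S$ gives $\mu_I(\theta) = I\, E_S[F(\theta S)]$, so $\mu_I'(0) = I f(0) \iota_C$, with null variance $I/4$; this yields efficacy $e_{\mathrm{sgn}} = 2 f(0) \iota_C$. For the Wilcoxon signed rank statistic I would use the Walsh-average representation $W^+ = \sum_{i<j} \mathbbm{1}(W_i + W_j > 0) + \sum_i \mathbbm{1}(W_i > 0)$ and differentiate: the dominant contribution comes from $(I-1) f^\ast(0) \sum_i S_i$ where $f^\ast(0) = \int f(x)^2\, dx$ is the density of $\epsilon_1 + \epsilon_2$ at zero, giving $\mu_I'(0) \sim I^2 f^\ast(0) \iota_C$; combined with the null variance $I(I+1)(2I+1)/24 \sim I^3/12$, this yields $e_{\mathrm{wsr}} = \sqrt{12}\, f^\ast(0)\, \iota_C$. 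The regularity conditions on $f$ (continuous differentiability, vanishing at infinity, $L^2$) justify the interchange of derivative and expectation and ensure $f^\ast(0)$ is finite.

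Finally, standard rank-test theory under contiguous alternatives gives asymptotic power $\Phi(\sqrt{I/n}\, \Delta\, e - z_{1-\alpha})$ when $I$ pairs are used for a test with efficacy $e$. Setting this equal to $\gamma$ yields $I_{n,j} \sim n (z_\gamma + z_{1-\alpha})^2/(\Delta^2 e_j^2)$ for each IV $j \in \{1,2\}$, and since $f(0)$ (or $f^\ast(0)$) cancels in the ratio of squared efficacies, $I_{n,1}/I_{n,2} \to (e_2/e_1)^2 = \iota_{C,2}^2/\iota_{C,1}^2$. The main technical hurdle is establishing asymptotic normality of the Wilcoxon signed rank statistic under the local alternative with random, heterogeneous shifts $\theta_n S_i$; this requires a Hoeffding projection of $W^+$ combined with a law of large numbers for $(1/I)\sum_i S_i \to \iota_C$, showing that the higher-order U-statistic remainder is $o_p(\sqrt{I^3})$. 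The same projection-plus-LLN argument delivers the sign-test conclusion as a simpler special case.
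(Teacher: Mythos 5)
Your proposal is correct, and it reaches the result by a genuinely different computational route than the paper, even though both ultimately live in the Pitman ARE framework. The paper (Supplementary Material B.1--B.2) first proves a general version (Theorem~5) that allows the compliance structure to be correlated with $r_{Cij}$: it forms the mixture distribution $G=\sum_{i=1}^{8}G_i$ of $Y_i-\beta_0 S_i$ over all ordered pairs of compliance types, invokes Lehmann's Examples 3.3.5 and 3.3.6 to reduce the ARE to $\{\varphi'(\beta_0)\}^2$ (resp.\ $\{\mu'(\beta_0)\}^2$), grinds through all $64$ cross-terms $\int G_i(-x)g_j(x)\,dx$ with the convolution identity $F*f'=f*f$, and only then specializes to $\mu_A=\mu_C=\mu_N$, where the polynomial in $(\iota_C,\iota_A,\iota_N)$ collapses via $(\iota_C+\iota_A+\iota_N)^3=1$ to $2\{f*f\}(0)\,\iota_C$. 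You instead exploit the independence hypothesis from the outset: conditioning on $S_i\in\{-1,0,1\}$ with $\epsilon_i\bigCI S_i$ and $\mathbb{E}[S_i]=(\iota_A+\iota_C)(\iota_C+\iota_N)-\iota_A\iota_N=\iota_C$ turns the efficacy computation into a one-line differentiation, giving $2f(0)\iota_C$ for the sign test and $\sqrt{12}\,f^*(0)\,\iota_C$ for the Wilcoxon test (your $f^*(0)=\int f^2$ equals the paper's $\{f*f\}(0)$ by symmetry of $f$, so the two constants agree up to the factor absorbed by the null variance normalization). What your route buys is transparency and brevity, and it makes visible \emph{why} the density-dependent constant cancels; what it gives up is the general Theorem~5, since when the $\mu$'s differ, $\epsilon_i$ is no longer independent of $S_i$ and the conditioning trick breaks down. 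Your flagged technical hurdle --- asymptotic normality of $W^+$ under the local alternative with random shifts $\theta_n S_i$, via Hoeffding projection plus a law of large numbers for $I^{-1}\sum_i S_i$ --- is real but standard, and the paper handles the analogous issue implicitly by treating $Y_i-\beta_0 S_i$ as i.i.d.\ draws from the mixture $G_\beta$ and citing Lehmann's location-family results, so the two proofs rest on the same underlying regularity machinery.
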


Proofs of all theorems in this paper are in Supplementary Material B.

For a more general version of Theorem~\ref{thm: ARE in independent case} where the compliance structure $(d_{Tij},d_{Cij})$ can be correlated with $r_{Cij}$ for both IVs, see Theorem~\ref{thm: complete ARE in general case} in Supplementary Material B.1.

Theorem~\ref{thm: ARE in independent case} helps explain some empirical findings in the literature. For instance, \citet{small2008war} found the $95\%$ confidence interval obtained from inverting the Wilcoxon signed rank test based on $14,000$ pairs of veterans born in the $1926/1928$ cohort (estimated compliance rate $0.51$) is $[-1,445, -500]$, which is shorter than that based on $28,000$ pairs of veterans born in the $1924/1926/1928$ cohort (estimated compliance rate $0.27$) (95\% CI: [-1,524, -261]). This result is not surprising in light of Theorem~\ref{thm: ARE in independent case}: when we compare two IVs in matched studies, without any additional information, we would prefer the one with significantly larger
\begin{equation*}
    \text{``effective sample size"}= \text{sample size}\times (\text{compliance rate})^{2}. 
\end{equation*}
The effective sample size of birth time for the $1926/1928$ cohort to participate in the Vietnam War is $14,000\times 0.51^{2} \approx 3,641$, which is much larger than that for the $1924/1926/1928$ cohort ($28,000\times 0.27^{2}\approx 2,041$). Therefore, testing with the $1926/1928$ cohort should be more powerful than testing with the $1924/1926/1928$ cohort. Theorem~\ref{thm: ARE in independent case} also resonates with similar results in the two-stage least squares (2SLS) literature: the variance of the 2SLS estimator obtained from $N$ samples is at least as large as the variance from having a sample of $N \times \iota_C^2$ (effective sample size) known compliers, and the standard error, or the length of the confidence interval, at least scales proportional to $\sqrt{N}\times \iota_C$ (\citealp{Imbens_CACE1994}; \citealp{Baiocchi_ivtutorial2014}).


\subsection{Simulations}\label{subsec: Simulations for ARE}
We assess how accurately the asymptotic result in Theorem~\ref{thm: ARE in independent case} captures the trade-off between strength and sample size for a valid IV under fixed alternatives via simulations. For a fixed IV strength $\iota_C$, fixed proportions of always-takers and never-takers, and a fixed $\alpha$ level, we use Monte Carlo simulations to determine the sample size needed to achieve a specified power $\gamma$ when the effect size is $\beta - \beta_0 = 0.1$. We use binary search to search for the sample size and $20,000$ simulated datasets to estimate the power for each sample size. Specifically, we consider three different pairs of IVs: a pair of IVs that are slightly different in strength with $(\iota_{C, 1} = 0.5, \iota_{C, 2} = 0.6)$, a pair moderately different with $(\iota_{C, 1} = 0.4, \iota_{C, 2} = 0.7)$, and a pair vastly different with $(\iota_{C, 1} = 0.3, \iota_{C, 2} = 0.8)$. Table \ref{tbl: ARE simulation} summarizes the sample size ratio determined via simulation, and contrasts it to the ARE when the error is normal or Laplace, and the level is set to be $0.05$. We see simulation results agree very well with ARE, with relative error less than $1\%$, and the sample size ratio is independent of nuisance parameters $\iota_A$, $\iota_N$, and error distribution. Figure \ref{fig: power pair 1 normal}, \ref{fig: power pair 2 normal}, and \ref{fig: power pair 3 normal} plot power against sample size for three pairs of IVs when the error is normal, and \ref{fig: normal sample size ratio} plots the ratio of sample sizes needed for the stronger and weaker IVs in each pair to attain the same power. The ratio of needed sample size agrees very well with the theoretical value $\iota^2_{C, 2}/\iota^2_{C, 1}$ (three background lines in Figure~\ref{fig: normal sample size ratio}) for all cases. Similar plots when the error is Laplace can be found in Supplementary Material C.1.

\begin{table}[ht]
\caption{Comparing simulated sample size ratio to ARE using the (one-sided) Wilcoxon signed rank test when compliance $(d_{Tij}, d_{Cij})$ is independent of response under control $r_{Cij}$. The errors follow standard normal distribution, $\alpha = 0.05$, $\beta-\beta_{0}=0.1$, and $\iota_N = 1 - \iota_A - \iota_C$. We repeat the simulation $20,000$ times to approximate the power. `SIM' and `THEO' denote the simulated and theoretical ratios of the two required sample sizes respectively.}
\label{tbl: example table one}
\begin{tabular}{cccccccccccccc}
\hline 
&\multicolumn{2}{c}{Parameters} & & \multicolumn{3}{c}{Normal error} & &\multicolumn{3}{c}{Laplace error} \\ 
\cline{5-7} \cline{9-11} 
&Power  &$\iota_A$ && $\iota_C = 0.5$ & $\iota_C = 0.6$ & SIM & & $\iota_C = 0.5$ & $\iota_C = 0.6$ & SIM & THEO\\  \cline{2-3} 

&\multirow{2}{*}{0.8} &0   && 2590 & 1790 & 1.45 &  &1670  &1160  &1.44  & 1.44 \\ 
&&$(1 - \iota_C)/2$  && 2596 & 1803 & 1.44 &  &1675  &1177  &1.42  & 1.44\\ [0.3em] \cline{2-3} 

&\multirow{2}{*}{0.7} &0  &&1980  &1370   &1.45  & &1265  &890  &1.42  &1.44\\ 
&&$(1 - \iota_C)/2$ &&1972  & 1378 &1.43  & &1261 &868  &1.45 &1.44\\ [0.3em] \cline{2-3} 

&\multirow{2}{*}{0.6} &0  && 1510 & 1047 & 1.44  &  &978  &677 &1.44   & 1.44\\ 
&&$(1 - \iota_C)/2$ && 1517 & 1046  & 1.45  &  &974 &683  &1.43   & 1.44\\ 

\hline 
&Power  &$\iota_A$ && $\iota_C = 0.4$ & $\iota_C = 0.7$ & SIM &  & $\iota_C = 0.4$ & $\iota_C = 0.7$ & SIM & THEO\\  \cline{2-3} 

&\multirow{2}{*}{0.8} &0   &&4071  &1328  &3.07  &  &2595  & 838 &3.10  & 3.06\\ 
&&$(1 - \iota_C)/2$  &&4036  & 1310 &3.08  &  &2577  &844  &3.05  & 3.06\\ [0.3em] \cline{2-3} 

&\multirow{2}{*}{0.7} &0  &&3092  &1014  &3.05  & &1956  &647 &3.02 &3.06\\ 
&&$(1 - \iota_C)/2$ &&3096 &1019  &3.04 & &1977 &644 &3.07 &3.06\\ [0.3em] \cline{2-3} 

&\multirow{2}{*}{0.6} &0  &&2347  &769  &3.05 &  &1538 &502  &3.06  & 3.06\\ 
&&$(1 - \iota_C)/2$ &&2355  &769  &3.06 &  &1510  &491  &3.08   & 3.06\\  \hline 

&Power  &$\iota_A$ && $\iota_C = 0.3$ & $\iota_C = 0.8$ & SIM & & $\iota_C = 0.3$ & $\iota_C = 0.8$ & SIM & THEO\\  \cline{2-3} 

&\multirow{2}{*}{0.8} &0   &&7246  &1009  &7.17  &  &4594  &648  &7.09  & 7.11\\ 
&&$(1 - \iota_C)/2$  &&7210  &1015  &7.10  &  &4590  &645  &7.12  & 7.11\\ [0.3em] \cline{2-3} 

&\multirow{2}{*}{0.7} &0  &&5515  &775 &7.12  & &3500  &493  &7.10  &7.11\\ 
&&$(1 - \iota_C)/2$ &&5500 &775 &7.10  & &3542 &496 &7.14  &7.11\\ [0.3em] \cline{2-3} 

&\multirow{2}{*}{0.6} &0  &&4216  &590  &7.15 &  &2683  &379  &7.08  & 7.11\\ 
&&$(1 - \iota_C)/2$ && 4225 &595   &7.10   &  &2700  &383  &7.05  & 7.11\\  \hline

\end{tabular}
\label{tbl: ARE simulation}
\end{table}

\begin{figure}[ht]
\caption{ Panels (a), (b), and (c): power against sample size for three pairs of IVs with different strength: $\beta - \beta_0 = 0.1$, normal error, $\alpha = 0.05$. Panel (d): sample sizes needed to obtain a fixed power for the stronger and weaker IV in each pair. Lines with slopes equal to $\iota^2_{C, 2}/\iota^2_{C, 1}$ are imposed.}
\label{fig: two pics normal error}
\begin{subfigure}{.5\linewidth}
\centering
\includegraphics[width = 6.5 cm, height = 5 cm]{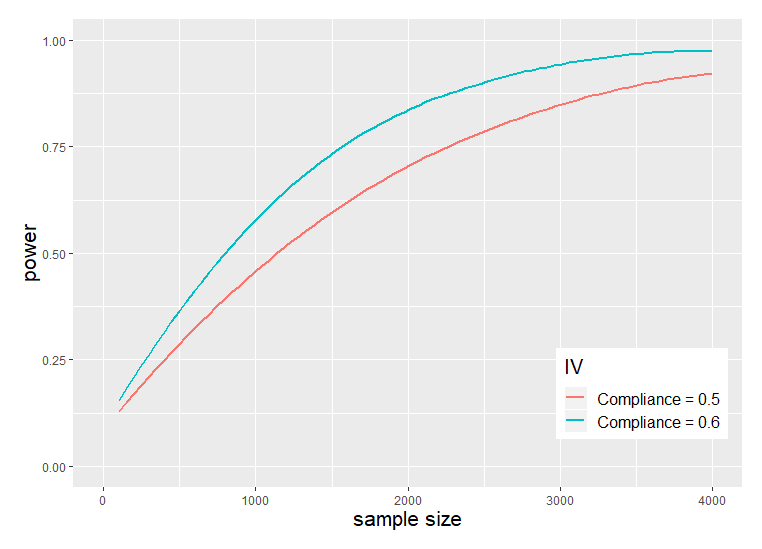}
\caption{Pair 1: $(\iota_{C, 1} = 0.5, ~\iota_{C, 2} = 0.6)$ }\label{fig: power pair 1 normal}
\end{subfigure}%
\begin{subfigure}{.5\linewidth}
\centering
\includegraphics[width = 6.5 cm, height = 5 cm]{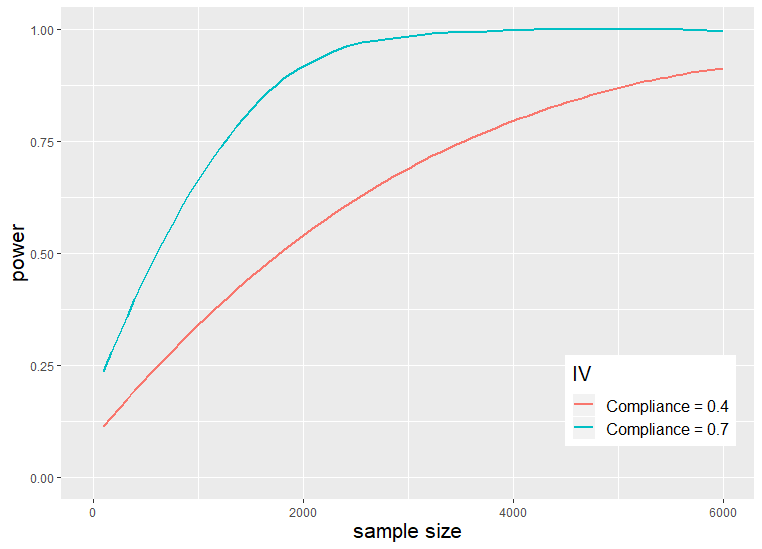}
\caption{Pair 2: $(\iota_{C, 1} = 0.4, ~\iota_{C, 2} = 0.7)$ }\label{fig: power pair 2 normal}
\end{subfigure}\\[1ex]
\begin{subfigure}{.5\linewidth}
\centering
\includegraphics[width = 6.5 cm, height = 5 cm]{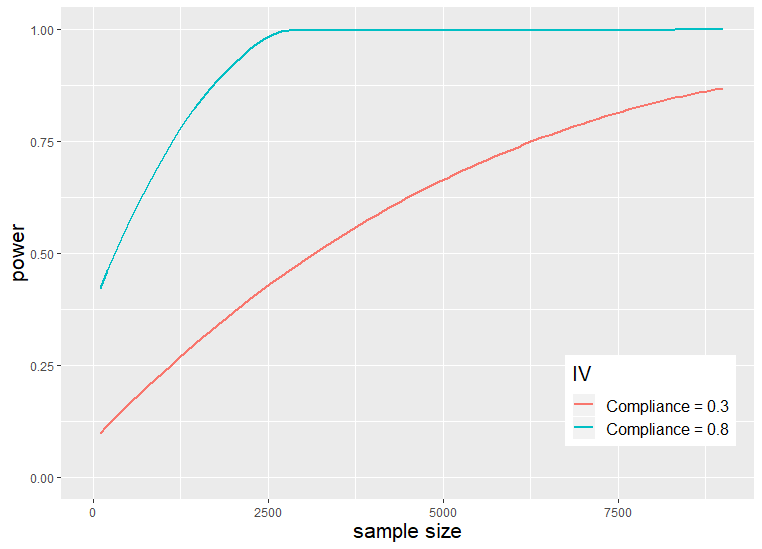}
\caption{Pair 3: $(\iota_{C, 1} = 0.3, ~\iota_{C, 2} = 0.8)$ }\label{fig: power pair 3 normal}
\end{subfigure}%
\begin{subfigure}{.5\linewidth}
\centering
\includegraphics[width = 7 cm, height = 5 cm]{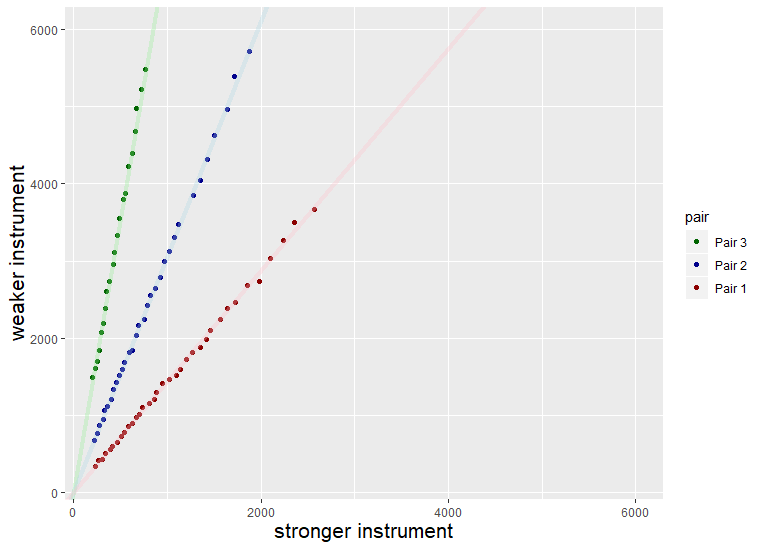}
\caption{Sample size ratio}\label{fig: normal sample size ratio}
\end{subfigure}
\end{figure}

\section{What constitutes a valid sensitivity analysis with continuous IVs?}
\label{sec: limitation of Gamma and AOB}
When an IV is effectively randomized, then when asking whether we should strengthen it, the key question is how the power compares before and after the IV is strengthened; this was the question we considered in Section \ref{sec: strengthen valid IV}. In observational studies, however, it is often unrealistic to assume the encouragement is randomized, even after a set of observed covariates are controlled for. A sensitivity analysis in matched observational studies asks how a departure from random assignment of encouragement would affect the causal conclusion drawn from a primary analysis that assumes the encouragement is effectively randomized. In this section, we review the $\Gamma$ sensitivity analysis model developed for a binary IV, and discuss what constitutes a valid model for describing the biased dose assignment mechanism when IVs are continuous. A rather surprising consequence of our result is that the $\Gamma$ sensitivity analysis model, despite being used extensively in the literature (e.g., \citealp{baiocchi2010building}; \citealp{zubizarreta2013stronger}; \citealp{keele2016strong}; \citealp{ertefaie2018quantitative}), is \emph{not} a valid model for biased dose assignment mechanism when IVs are continuous.

\subsection{Review: the $\Gamma$ sensitivity analysis model for a binary IV}
\label{subsec: randomization and Gamma}

In a randomized encouragement experiment with a binary encouragement, we have $P(\mathbf{Z}=\mathbf{z}\mid \mathcal{F}_{0}, \mathcal{Z})=1/|\mathcal{Z}|=1/2^{I}$ for all $\mathbf{z} \in \mathcal{Z}$. The model of \citet[Chapter~4]{rosenbaum2002observational} quantifies how failing to match on a potential unmeasured confounder $u_{ij}$ in each matched pair would bias the encouragement assignment probability and has been extensively used in the causal inference literature as a sensitivity analysis framework. Let $\mathcal{F}_{1}=\mathcal{F}_{0}\cup \{u_{ij}: i=1,\dots, I, j=1,2 \}=\{(\mathbf{x}_{ij}, u_{ij}, d_{Tij}, d_{Cij}, r_{Tij}, r_{Cij}): i=1,\dots, I, j=1,2 \}$ and $\pi_{ij}=P(Z_{ij}=1\mid \mathcal{F}_{1})$ denote the probability that individual $j$ in matched pair $i$ receives the encouragement in the possible presence of $u_{ij}$. The Rosenbaum-type sensitivity analysis considers the following logit model of $\pi_{ij}$ (\citealp{rosenbaum2002observational}):
\begin{equation} \label{eqn: rosenbaum sens model}
    \log \left(\frac{\pi_{ij}}{1-\pi_{ij}} \right)=\theta(\mathbf{x}_{ij})+\gamma u_{ij}, \quad \text{where $u_{ij}\in [0,1]$,}
\end{equation}
where $\theta(\mathbf{x}_{ij})$ is an arbitrary unknown function of $\mathbf{x}_{ij}$ and $\gamma\geq 0$ is a sensitivity parameter. Note that the constraint $u_{ij}\in [0,1]$ is no more restrictive than assuming a bounded support of $u_{ij}$ and is only imposed to make the sensitivity parameter $\gamma$ more interpretable. It is straightforward to show that under model (\ref{eqn: rosenbaum sens model}), for two individuals $i1$ and $i2$ in the same matched pair $i$ with $\mathbf{x}_{i1}=\mathbf{x}_{i2}$, the ratio of their odds of receiving the encouragement is bounded by $\Gamma=\exp(\gamma)\geq 1$:
\begin{equation*} 
	\frac{1}{\Gamma}\leq \frac{\pi_{i1}(1-\pi_{i2})}{\pi_{i2}(1-\pi_{i1})}\leq \Gamma, \quad \text{for all $i$} ~\text{with $\mathbf{x}_{i1}=\mathbf{x}_{i2}$}.
\end{equation*}
When the encouragement $\mathbf{Z}$ is randomized in each pair, we have $\Gamma = 1$. The more $\Gamma$ deviates from $1$, the more encouragement assignment potentially deviates from randomization. Assuming that the treatment assignment is independent across matched pairs, model (\ref{eqn: rosenbaum sens model}) implies the following probability distribution of the vector of encouragement assignment indicators (assuming $\mathbf{x}_{i1} = \mathbf{x}_{i2}$) with $\gamma=\log(\Gamma)$: 
\begin{equation}
	P(\mathbf{Z}=\mathbf{z}\mid \mathcal{F}_{1}, \mathcal{Z})=\prod_{i=1}^{I}\frac{\exp(\gamma \sum_{j=1}^{2}z_{ij}u_{ij})}{\sum_{j=1}^{2}\exp(\gamma u_{ij})}, \quad \mathbf{z} \in \mathcal{Z}, \ 0 \leq u_{ij} \leq 1.
	\label{eqn: sens model gamma equiv}
\end{equation}
Model (\ref{eqn: rosenbaum sens model}) along with its implication (\ref{eqn: sens model gamma equiv}) are known as the $\Gamma$ sensitivity analysis model. Under the null hypothesis $H_0$ and (\ref{eqn: sens model gamma equiv}), the permutation distribution of the test statistic becomes
\begin{equation*}
   P(T \geq t \mid \mathcal{F}_{1}, \mathcal{Z})=\sum_{\mathbf{z}\in \mathcal{Z}}\mathbbm{1}(T(\mathbf{z}, \mathbf{R}) \geq t) \cdot \prod_{i=1}^{I}\frac{\exp(\gamma \sum_{j=1}^{2}z_{ij}u_{ij})}{\sum_{j=1}^{2}\exp(\gamma u_{ij})}.
\end{equation*}
In a sensitivity analysis for a one-sided test with $\Gamma=\exp(\gamma)$, researchers are interested in the ``worst-case'' p-value reported by a test statistic $T$ given its observed value $t$:
\begin{equation*}
\max_{0 \leq u_{ij}\leq 1}P(T \geq t \mid \mathcal{F}_{1}, \mathcal{Z})=\max_{0 \leq u_{ij}\leq 1}\sum_{\mathbf{z}\in \mathcal{Z}}\mathbbm{1}(T(\mathbf{z}, \mathbf{R}) \geq t) \cdot \prod_{i=1}^{I}\frac{\exp(\gamma \sum_{j=1}^{2}z_{ij}u_{ij})}{\sum_{j=1}^{2}\exp(\gamma u_{ij})}.
\end{equation*}
In practice, an empirical researcher gradually increases the sensitivity parameter $\Gamma$, computes the worst-case p-value for each $\Gamma$, and reports the largest $\Gamma$ such that this worst-case p-value exceeds some prespecified level $\alpha$. Such a changepoint $\Gamma$ is known as the \emph{sensitivity value} (\citealp{zhao2019sensitivityvalue}) and informs the audience of the magnitude of potential hidden bias needed to alter the causal conclusion. \citet{rosenbaum2002observational} showed that for a large class of nonparametric test statistics, the worst-case p-value in a pair-matched study is obtained when the subject with higher response in each matched pair has an unmeasured confounder $u = 1$ and the other has $u = 0$. 
\subsection{A necessary condition for a valid sensitivity analysis model}
\label{subsec: limitations of Gamma}


 Using the notation introduced in Section~\ref{subsec: review of continuous IV}, the dose assignment mechanism entails the conditional probability $P(\widetilde{Z}_{i1} = \widetilde{Z}_{i1} \wedge \widetilde{Z}_{i2}, \widetilde{Z}_{i2} = \widetilde{Z}_{i1}\vee \widetilde{Z}_{i2}\mid \mathcal{F}_{0}, \widetilde{\mathbf{Z}}_{\vee}, \widetilde{\mathbf{Z}}_{\wedge})$, which reduces to a constant $1/2$ when the IV random assignment assumption holds and otherwise does not. When there is unmeasured confounding $U$ such that the IV random assignment assumption does not hold, some parsimonious model is often assumed to describe this biased dose assignment mechanism $P(\widetilde{Z}_{i1} = \widetilde{Z}_{i1} \wedge \widetilde{Z}_{i2}, \widetilde{Z}_{i2} = \widetilde{Z}_{i1}\vee \widetilde{Z}_{i2}\mid \mathcal{F}_{1}, \widetilde{\mathbf{Z}}_{\vee}, \widetilde{\mathbf{Z}}_{\wedge})$. For example, \citet{baiocchi2010building} proposed to model it using the $\Gamma$ sensitivity analysis model for binary IVs reviewed in Section~\ref{subsec: randomization and Gamma} by embedding the continuous IV $\widetilde{Z}_{ij}$ into the binary encouragement indicator $Z_{ij}=\mathbbm{1}(\widetilde{Z}_{ij}<\widetilde{Z}_{ij^{\prime}})$ after matching and then directly applying model (\ref{eqn: sens model gamma equiv}) to $Z_{ij}$:
 \begin{align*}
     &\quad P(\widetilde{Z}_{i1} = \widetilde{Z}_{i1} \wedge \widetilde{Z}_{i2}, \widetilde{Z}_{i2} = \widetilde{Z}_{i1}\vee \widetilde{Z}_{i2}\mid \mathcal{F}_{1}, \widetilde{\mathbf{Z}}_{\vee}, \widetilde{\mathbf{Z}}_{\wedge})\\
     &= P(Z_{i1} =\mathbbm{1}(\widetilde{Z}_{i1} < \widetilde{Z}_{i2})=1, \widetilde{Z}_{i2} = \mathbbm{1}(\widetilde{Z}_{i1}>\widetilde{Z}_{i2})=0\mid \mathcal{F}_{1}, \widetilde{\mathbf{Z}}_{\vee}, \widetilde{\mathbf{Z}}_{\wedge})\\
     &=\frac{\exp(\gamma u_{i1})}{\exp(\gamma u_{i1})+\exp(\gamma u_{i2})}.
 \end{align*}
 This approach was adopted in much subsequent work.

A valid sensitivity analysis model for describing the biased dose assignment mechanism in each matched pair should correspond to a valid data generating process in the population \emph{prior to matching}. Theorem~\ref{thm: impossible} states that such a valid sensitivity analysis model \textit{must} incorporate the information about the minimum and maximum continuous doses in each pair. A surprising consequence of Theorem \ref{thm: impossible} is that the $\Gamma$ sensitivity analysis model, which does \emph{not} incorporate continuous doses, cannot be directly applied to the continuous IV settings to model the biased IV (encouragement) dose assignment mechanism.

\begin{Theorem}[An exclusion principle]\label{thm: impossible}
Let $\widetilde{Z}$ be a continuous IV, $\mathbf{X}$ a vector of observed covariates, and $U$ a hypothesized unmeasured confounder that can bias the IV assignment probability. Denote the conditional density function $f(\widetilde{Z}=\widetilde{z}\mid \mathbf{X}=\mathbf{x}, U=u)$ as $\xi(\widetilde{z}, \mathbf{x}, u)$, with $(\widetilde{z}, \mathbf{x}, u)\in \mathbb{R}^{p+2}$ where $p$ is the dimension of $\mathbf{X}$. Let $\mathcal{G}$ denote the collection of all nonnegative functions $g(\widetilde{z}, \mathbf{x}, u)$ such that $g(\widetilde{z}, \mathbf{x}, u)=\eta(\mathbf{x}, u) \zeta (\widetilde{z}, \mathbf{x}) \vartheta(\widetilde{z},u)$ for some nonnegative functions $\eta$, $\zeta$, and $\vartheta$, where $\vartheta$ is a function that has continuous second partial derivatives over the support of $g$. If $\xi(\widetilde{z}, \mathbf{x}, u) \in \mathcal{G}$, then the following two statements (S1) and (S2) cannot hold true simultaneously: (S1) The dose assignment probability in each matched pair $i$, i.e., $P(\widetilde{Z}_{i1}=\widetilde{Z}_{i1}\wedge \widetilde{Z}_{i2}, \widetilde{Z}_{i2}=\widetilde{Z}_{i1}\vee \widetilde{Z}_{i2} \mid \mathcal{F}_{1}, \widetilde{\mathbf{Z}}_{\vee}, \widetilde{\mathbf{Z}}_{\wedge})$, is biased by the hypothesized unmeasured confounder $U$ (i.e., $\widetilde{Z}\nbigCI U \mid \mathbf{X}$); (S2) The dose assignment probability $P(\widetilde{Z}_{i1}=\widetilde{Z}_{i1}\wedge \widetilde{Z}_{i2}, \widetilde{Z}_{i2}=\widetilde{Z}_{i1}\vee \widetilde{Z}_{i2} \mid \mathcal{F}_{1}, \widetilde{\mathbf{Z}}_{\vee}, \widetilde{\mathbf{Z}}_{\wedge})$ does not depend on $(\widetilde{Z}_{i1}\wedge \widetilde{Z}_{i2}, \widetilde{Z}_{i1}\vee \widetilde{Z}_{i2})$ in each matched pair.
\end{Theorem}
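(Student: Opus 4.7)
The plan is to assume both the factorization $\xi \in \mathcal{G}$ and the invariance condition (S2), and then derive that $\widetilde{Z}\bigCI U\mid \mathbf{X}$, which directly contradicts (S1). The first step is to write the within-pair dose assignment probability explicitly in terms of $\xi$. Inside a pair $i$ that is matched exactly on $\mathbf{x}$, conditioning on the unordered pair $(\widetilde{\mathbf{Z}}_\wedge,\widetilde{\mathbf{Z}}_\vee)=(a,b)$ with $a<b$ and on $\mathcal{F}_1$ (which fixes $u_{i1},u_{i2}$), an elementary likelihood-ratio calculation using the pre-matching independence of the two subjects' doses given $(\mathbf{X},U)$ yields
\[
P(\widetilde{Z}_{i1}=a,\widetilde{Z}_{i2}=b\mid \mathcal{F}_1,\widetilde{\mathbf{Z}}_\vee,\widetilde{\mathbf{Z}}_\wedge) \;=\; \frac{\xi(a,\mathbf{x},u_{i1})\,\xi(b,\mathbf{x},u_{i2})}{\xi(a,\mathbf{x},u_{i1})\,\xi(b,\mathbf{x},u_{i2})+\xi(a,\mathbf{x},u_{i2})\,\xi(b,\mathbf{x},u_{i1})}.
\]
Substituting $\xi=\eta\zeta\vartheta$ and cancelling the common $\eta(\mathbf{x},u_{i1})\eta(\mathbf{x},u_{i2})\zeta(a,\mathbf{x})\zeta(b,\mathbf{x})$ factor between numerator and denominator reduces this probability to $\vartheta(a,u_{i1})\vartheta(b,u_{i2})\,/\,[\vartheta(a,u_{i1})\vartheta(b,u_{i2})+\vartheta(a,u_{i2})\vartheta(b,u_{i1})]$.

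Next, statement (S2) requires this expression to be independent of $(a,b)$, equivalently that the log-ratio $\log\vartheta(a,u_{i1})-\log\vartheta(a,u_{i2})+\log\vartheta(b,u_{i2})-\log\vartheta(b,u_{i1})$ is a function of $(u_{i1},u_{i2})$ alone. Because $\vartheta$ is strictly positive on its support and $C^{2}$ by hypothesis, I can differentiate in $a$ to obtain $\partial_a\log\vartheta(a,u_{i1})=\partial_a\log\vartheta(a,u_{i2})$ for arbitrary $u_{i1},u_{i2}$. Hence $\partial_a\log\vartheta(a,u)$ does not depend on $u$, and integrating gives $\log\vartheta(z,u)=\alpha(z)+\beta(u)$ for some univariate functions $\alpha,\beta$, i.e.\ $\vartheta(z,u)=\vartheta_1(z)\vartheta_2(u)$ separates multiplicatively.

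Plugging this separation back into the original factorization yields $\xi(\widetilde{z},\mathbf{x},u)=\bigl[\eta(\mathbf{x},u)\vartheta_2(u)\bigr]\cdot\bigl[\zeta(\widetilde{z},\mathbf{x})\vartheta_1(\widetilde{z})\bigr]$. The density normalization $\int \xi(\widetilde{z},\mathbf{x},u)\,d\widetilde{z}=1$ then forces $\eta(\mathbf{x},u)\vartheta_2(u)=\bigl(\int \zeta(\widetilde{z},\mathbf{x})\vartheta_1(\widetilde{z})\,d\widetilde{z}\bigr)^{-1}$, which is a function of $\mathbf{x}$ alone. Therefore $\xi(\widetilde{z},\mathbf{x},u)$ in fact does not depend on $u$, so $\widetilde{Z}\bigCI U\mid \mathbf{X}$, contradicting (S1).

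The step I expect to demand the most care is the derivation of the initial likelihood-ratio formula for the dose assignment probability inside a matched pair: I would obtain it by writing the joint pre-matching density of $(\widetilde{Z}_{i1},\widetilde{Z}_{i2})$ as the product $\xi(\widetilde{z}_{i1},\mathbf{x},u_{i1})\,\xi(\widetilde{z}_{i2},\mathbf{x},u_{i2})$ (exploiting pre-matching independence across subjects and exact matching on $\mathbf{x}$) and then conditioning on the event that the unordered pair of realized doses equals $\{a,b\}$; the symmetry between the two labellings gives the two terms in the denominator. After that, the remainder is just a small functional equation for $\vartheta$ whose strictly positive $C^{2}$ solutions must separate multiplicatively, and density normalization then collapses all remaining $u$-dependence.
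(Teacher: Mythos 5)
Your proposal is correct and follows essentially the same route as the paper's proof: reduce the within-pair assignment probability to an odds ratio that, after cancelling the $\eta$ and $\zeta$ factors, depends only on $\vartheta$, use (S2) to force multiplicative separability of $\vartheta$, and conclude $\widetilde{Z}\bigCI U\mid\mathbf{X}$. The only cosmetic difference is in the separability step, where you differentiate the log-ratio in $a$ and integrate directly, while the paper passes to the mixed partial $\partial^{2}\ln\vartheta/\partial\widetilde{z}\,\partial u\equiv 0$ and cites the standard PDE fact that smooth solutions are additively separable.
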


\begin{Remark}\rm \label{remark: an exclusion principle}
Note that $\mathcal{G}$ contains models that describe dose assignment mechanisms in the population prior to matching, not in each pair after matching. $\mathcal{G}$ covers a large class of biased and unbiased IV dose assignment models in the literature. For example, the class of all dose assignment models for a valid continuous IV, i.e., $f(\widetilde{Z}=\widetilde{z}\mid \mathbf{X}=\mathbf{x})$, is a subclass of $\mathcal{G}$ and corresponds to taking $\eta(\mathbf{x}, u)=1$, $\zeta (\widetilde{z}, \mathbf{x})=f(\widetilde{Z}=\widetilde{z}\mid \mathbf{X}=\mathbf{x})$, and $\vartheta(\widetilde{z},u)=1$. The partially linear model $\widetilde{Z} = h(\mathbf{X}) + \gamma \cdot U + \epsilon$ where $U$ has an additive effect on $\widetilde{Z}$ and $\epsilon$ is normally distributed belongs to $\mathcal{G}$. The semiparametric model considered in \citet{rosenbaum1989sensitivity} also belongs to $\mathcal{G}$ with $\vartheta(\widetilde{z}, u)=\exp(\gamma\cdot \widetilde{z}u)$ for some constant $\gamma$. See Supplementary Material A.2 for more details.
\end{Remark}

Theorem~\ref{thm: impossible} is proved in Supplementary Material B.3 and we only give some intuition here. Suppose that (S2) is true and $f(\widetilde{Z}=\widetilde{z}\mid \mathbf{X}=\mathbf{x})$ factors into $\eta(\mathbf{x}, u) \zeta (\widetilde{z}, \mathbf{x}) \vartheta(\widetilde{z},u)$ for some functions $\eta$, $\zeta$, and $\vartheta$, where $\vartheta$ is smooth. It can be shown that $\partial^{2} \ln \vartheta(\widetilde{z}, u)/\partial \widetilde{z} \partial u \equiv 0$, which implies that $\vartheta(\widetilde{z}, u)$ is separable, i.e., there exist $\vartheta_{1}$ and $\vartheta_{2}$ such that $\vartheta(\widetilde{z}, u)=\vartheta_{1}(\widetilde{z})\vartheta_{2}(u)$. Therefore, $f(\widetilde{Z}=\widetilde{z}\mid \mathbf{X}=\mathbf{x}, U=u)=\eta(\mathbf{x}, u) \zeta (\widetilde{z}, \mathbf{x})\vartheta_{1}(\widetilde{z})\vartheta_{2}(u)\propto \zeta (\widetilde{z}, \mathbf{x})\vartheta_{1}(\widetilde{z})$, which implies that $\widetilde{Z}\indep U \mid \mathbf{X}$ and thus (S1) cannot possibly hold. In other words, to conduct a sensitivity analysis with continuous IVs through modelling the biased encouragement dose assignment probability $P(\widetilde{Z}_{i1}=\widetilde{Z}_{i1}\wedge \widetilde{Z}_{i2}, \widetilde{Z}_{i2}=\widetilde{Z}_{i1}\vee \widetilde{Z}_{i2} \mid \mathcal{F}_{1}, \widetilde{\mathbf{Z}}_{\vee}, \widetilde{\mathbf{Z}}_{\wedge})$ in each matched pair after matching, we must incorporate the information of doses $(\widetilde{Z}_{i1}\wedge \widetilde{Z}_{i2}, \widetilde{Z}_{i1}\vee \widetilde{Z}_{i2})$ into that biased encouragement dose assignment model to avoid mathematical inconsistencies. Note that the $\Gamma$ sensitivity analysis model does not incorporate the information of doses $(\widetilde{Z}_{i1}\wedge \widetilde{Z}_{i2}, \widetilde{Z}_{i1}\vee \widetilde{Z}_{i2})$, and therefore does not constitute a valid sensitivity analysis model when the IV is continuous according to Theorem~\ref{thm: impossible}. We further illustrate this point using the NICU data in Supplementary Material A.3.

Previous theoretical development for a strengthened IV in the presence of biased encouragement dose assignment has assumed the $\Gamma$ sensitivity analysis model. Conclusions drawn under this model, e.g., strengthening an IV increases the power of a sensitivity analysis in large samples, may no longer apply when IVs are continuous in light of previous discussion. In the next section, we quantify the bias in inference from a biased continuous IV and use it to develop a valid sensitivity analysis framework in the following section, and then use this framework to re-evaluate previous conclusions about strengthening IVs.

\section{Assessing an IV estimator via asymptotic oracle bias (AOB)}
\label{sec: AOB}
When the IV is valid and the monotonicity assumption holds, \citet{AIR1996} showed that the Wald estimator, which coincides with two stage least squares when the treatment and IV are binary, nonparametrically estimates the average treatment effect among compliers (CATE): $\mathbb{E}\{r_{T}-r_{C}~|~(d_{T}, d_{C})=(1,0)\}$. Recall the Wald estimator in a matched study is:
\begin{equation}
    \widehat{\beta}_{\text{IV}} = \frac{\widehat{\mathbb{E}}[\ R_{ij}\mid Z_{ij}=1\ ]-\widehat{\mathbb{E}}[\ R_{ij}\mid Z_{ij}=0\ ]}{\widehat{\mathbb{E}}[\ D_{ij}\mid Z_{ij}=1\ ]-\widehat{\mathbb{E}}[\ D_{ij}\mid Z_{ij}=0 \ ]}= \frac{\sum_{i=1}^{I}(Z_{i1}-Z_{i2})(R_{i1}-R_{i2})}{\sum_{i=1}^{I}(Z_{i1}-Z_{i2})(D_{i1}-D_{i2})},
    \label{eqn: wald estimator in matched study}
\end{equation}
where $Z_{i1}=\mathbbm{1}(\widetilde{Z}_{i1}<\widetilde{Z}_{i2})$ and $Z_{i2}=\mathbbm{1}(\widetilde{Z}_{i1}>\widetilde{Z}_{i2})$, assuming no ties of $\widetilde{Z}$ in each matched pair. We let $\widehat{\iota}_{C}=I^{-1}\sum_{i=1}^{I}(Z_{i1}-Z_{i2})(D_{i1}-D_{i2})$ denote the average encouraged-minus-control difference in the treatment indicator $D$. Note that $\widehat{\iota}_C$ consistently estimates the compliance rate under the IV random assignment and monotonicity assumptions (\citealp{AIR1996}). We will refer to $\widehat{\iota}_{C}$ as the estimated compliance rate (\citealp{ertefaie2018quantitative}).

Our new framework evaluates the bias of the Wald estimator and serves as the basis for a new sensitivity analysis framework to be discussed in Section~\ref{sec: bias due to U and SA}. Our framework is distinct from previous frameworks in that we incorporate the information contained in the original continuous IV $\widetilde{Z}$. We first clarify some concepts, notation, and assumptions. Suppose that there are $N$ unmatched samples. Let $\widetilde{Z}_{n}$, $\mathbf{X}_{n}$, $D_{n}$, and $R_{n}$ denote the continuous IV, the covariates, the treatment indicator, and the outcome of individual $n$, $n=1,\dots,N$. Suppose that $(\widetilde{Z}_{n}, \mathbf{X}_{n}, D_{n}, R_{n})$ are i.i.d. random vectors, $n=1,\dots,N$. Consider the following partially linear model for the outcome $R$:
\begin{equation}\label{eqn: Y model original}
    R_{n}= \beta\cdot {D}_{n}+ f(\mathbf{X}_{n})+\widetilde{\epsilon}_{n},\quad n=1,\dots, N,
\end{equation}
where $f(\mathbf{X}_{n})$ is an arbitrary unknown function of $\mathbf{X}_{n}$ and $\beta$ denotes a constant additive treatment effect. In practice, the treatment effect could be heterogeneous (\citealp{AIR1996}), and the Wald estimate is nonparametric and does not rely on the homogeneous treatment effect assumption. We adopt this constant treatment effect model to make our sensitivity analysis framework more transparent and easier to implement. Without loss of generality, we assume that $\widetilde{\epsilon} \ \indep \ \mathbf{X}$. When the IV $\widetilde{Z}$ is valid, we have $\widetilde{Z}\ \indep \ \widetilde{\epsilon} \mid  \mathbf{X}$. We below consider the possibility that $\widetilde{Z} \nbigCI \widetilde{\epsilon} \mid \mathbf{X}$ in which case the IV is not valid. This can occur when, for example, there is an unmeasured IV-outcome confounder (\citealp{garabedian2014potential}). In this case, an IV analysis assuming $\widetilde{Z}\ \indep \ \widetilde{\epsilon} \mid  \mathbf{X}$ (i.e., no IV-outcome unmeasured confounding) may not lead to valid statistical inference (\citealp{Bound1995}; \citealp{small2008war}). We follow \citet{imai2010identification} and model residuals $\widetilde{\epsilon}_{n}$ as $\widetilde{\epsilon}_{n}=\delta \cdot U_{\text{tot},n}+\epsilon_{n}$, $n=1,\dots,N$, such that $\epsilon_{n}$ are i.i.d. random disturbances with $\mathbb{E}[\epsilon_{n}] = 0$, $\mbox{Var}[\epsilon_{n}] = \sigma^{2}$, and $\{\epsilon_{n}: n=1,\dots, N\} \ \indep \ \{(\widetilde{Z}_{n}, \mathbf{X}_{n}, U_{\text{tot},n}): n=1,\dots, N\}$. That is, each $U_{\text{tot},n}$ includes the total impact of all potential IV-outcome and treatment-outcome unmeasured confounders of individual $n$. Under this formulation, we allow $\widetilde{Z}\nbigCI   U_{\text{tot}} \mid \mathbf{X}$. Therefore, we can rewrite (\ref{eqn: Y model original}) as 
\begin{equation}\label{eqn: Y model}
    R_{n}= \beta\cdot {D}_{n}+ f(\mathbf{X}_{n})+\delta\cdot U_{\text{tot},n} +\epsilon_{n},\quad n=1,\dots, N.
\end{equation}

Model (\ref{eqn: Y model}) is imposed on samples before matching. To derive the asymptotic results for matched samples, we need to furthermore clarify some concepts and assumptions concerning a matching algorithm. A matching algorithm $\mathcal{M}$ acting on $N$ unmatched samples can be viewed as a mapping $\mathcal{M}_{N}$ from the information of $N$ individuals $\widetilde{\mathcal{F}}=\{q_{n}=(\widetilde{Z}_{n}, \mathbf{X}_{n}, U_{\text{tot},n}): n=1,\dots, N\} $ to a partition $\Pi_I$ that divides $2I$ matched individuals into $I$ matched pairs $\{ \{q_{i1}, q_{i2}\}: i=1,\dots,I\}$. A valid matching algorithm involves only the information of covariates $\mathbf{X}_{n}$ and continuous scores $\widetilde{Z}_{n}$. Definition~\ref{definition: matching alg is valid} formalizes this statement.

\begin{Definition}[A valid matching algorithm $\mathcal{M}$]
\label{definition: matching alg is valid}
A matching algorithm $\mathcal{M}$ is said to be \emph{valid} if the output of $\mathcal{M}$ (i.e., matched sets) is determined only by the information on the observed covariates and the IV.
\end{Definition}

Conditional on $\{\mathbf{X}_n, n = 1,...,N\}$ and $\{\widetilde{Z}_n, n = 1,...,N\}$, a valid matching algorithm $\mathcal{M}$ is independent of $\mathbf{U}_{\text{tot}}=(U_{\text{tot},1}, \dots, U_{\text{tot},N})$, i.e.,
\begin{equation*}
    \mathcal{M} \ \indep \ \mathbf{U}_{\text{tot}} ~|~ \{(\mathbf{X}_{n}, \widetilde{Z}_{n}): n=1,\dots, N\}.
\end{equation*}
For examples of matching algorithms applied to continuous IVs that satisfy Definition \ref{definition: matching alg is valid}, see \citet{baiocchi2010building}, \citet{zubizarreta2013stronger}, \citet{yang2014dissonant}, \citet{keele2016strong, keele2018stronger}, \citet{ertefaie2018quantitative}, and \citet{fogarty2019biased}.

To illuminate the main issues involved in strengthening IVs, we will focus on the bias of the Wald estimator under model (\ref{eqn: Y model}). We call this the \textit{oracle bias} where oracle refers to evaluating the bias under the oracle knowledge of model (\ref{eqn: Y model}) holding. Theorem~\ref{thm: asymp. Wald} gives the asymptotic normality of $\widehat{\beta}_{\text{IV}}$, which will be leveraged to develop a valid sensitivity analysis framework in Section~\ref{sec: bias due to U and SA}.
\begin{Theorem}[Asymptotic normality]
Suppose the outcome satisfies model (\ref{eqn: Y model}), and a \emph{valid} matching algorithm $\mathcal{M}$ is applied to $N$ samples and yields $I$ matched pairs. As $N \rightarrow \infty$ and consequently $I \rightarrow \infty$, the Wald estimator $\widehat{\beta}_{\text{IV}}$ satisfies
\begin{equation*}
\frac{\widehat{\beta}_{\text{IV}}-\beta-\mbox{Bias}(\widehat{\beta}_{\text{IV}})}{\mbox{sd}(\widehat{\beta}_{\text{IV}})}  \overset{\mathcal{L}}{\longrightarrow} \mathcal{N}(0,1), 
\end{equation*}
where 
\[\mbox{sd}(\widehat{\beta}_{\text{IV}})=\sqrt{2} \sigma\cdot [\sqrt{I}\cdot \widehat{\iota}_{C}]^{-1},
\]and 
\[\mbox{Bias}(\widehat{\beta}_{\text{IV}})=\frac{\sum_{i=1}^{I}(Z_{i1}-Z_{i2})[f(\mathbf{X}_{i1})-f(\mathbf{X}_{i2})]}{\sum_{i=1}^{I}(Z_{i1}-Z_{i2})(D_{i1}-D_{i2})}+\delta \cdot \frac{\sum_{i=1}^{I}(Z_{i1}-Z_{i2})(U_{\text{tot},i1}-U_{\text{tot},i2})}{\sum_{i=1}^{I}(Z_{i1}-Z_{i2})(D_{i1}-D_{i2})}.
\]
\label{thm: asymp. Wald}
\end{Theorem}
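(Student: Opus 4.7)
The plan is to substitute the outcome model~(\ref{eqn: Y model}) directly into the Wald estimator~(\ref{eqn: wald estimator in matched study}) and separate the result into a deterministic bias term (depending on the matched covariates, IV, treatment, and unmeasured confounders) and a residual noise term driven entirely by $\{\epsilon_n\}$. Writing $W_i := (Z_{i1}-Z_{i2})(\epsilon_{i1}-\epsilon_{i2})$, the straightforward algebra gives
\begin{equation*}
\widehat{\beta}_{\text{IV}} - \beta - \mbox{Bias}(\widehat{\beta}_{\text{IV}}) \;=\; \frac{\sum_{i=1}^I W_i}{I\,\widehat{\iota}_C}.
\end{equation*}
Dividing by $\mbox{sd}(\widehat{\beta}_{\text{IV}}) = \sqrt{2}\sigma/(\sqrt{I}\,\widehat{\iota}_C)$, the factors of $\widehat{\iota}_C$ cancel neatly and the normalized quantity reduces to $(\sqrt{2I}\sigma)^{-1}\sum_{i=1}^I W_i$. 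The theorem therefore reduces to showing that this last expression converges in distribution to $\mathcal{N}(0,1)$.

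For this I would apply a conditional central limit theorem. Let $\mathcal{G}$ be the $\sigma$-algebra generated by $\{(\mathbf{X}_n, \widetilde{Z}_n, U_{\text{tot},n}, D_n) : n=1,\dots,N\}$. By the model assumption, $\{\epsilon_n\}$ is independent of this collection, and since the matching algorithm $\mathcal{M}$ is valid in the sense of Definition~\ref{definition: matching alg is valid}, both the pairing itself and the signs $S_i := Z_{i1}-Z_{i2}\in\{\pm 1\}$ are $\mathcal{G}$-measurable. Hence, conditional on $\mathcal{G}$, the $\epsilon$-values among the $2I$ matched units remain i.i.d., and $W_i = S_i(\epsilon_{i1}-\epsilon_{i2})$ is equal in distribution to $\epsilon_{i1}-\epsilon_{i2}$ because the difference of two i.i.d.\ random variables is symmetric about zero. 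The $W_i$ are therefore conditionally i.i.d.\ with mean $0$ and variance $2\sigma^2$, and the classical Lindeberg--L\'evy CLT yields
\begin{equation*}
\frac{1}{\sqrt{2I}\,\sigma}\sum_{i=1}^I W_i \;\overset{\mathcal{L}}{\longrightarrow}\; \mathcal{N}(0,1)
\end{equation*}
conditionally on $\mathcal{G}$. Since the limit law does not depend on $\mathcal{G}$, an application of dominated convergence to the conditional characteristic function upgrades this to unconditional convergence, finishing the proof when combined with the algebraic cancellation above.

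The main technical point that needs care is establishing that the pair-level summands $W_i$ behave as if i.i.d.\ after an \emph{arbitrary} valid matching algorithm is applied. This hinges on two facts: (i) Definition~\ref{definition: matching alg is valid} guarantees that $\mathcal{M}$ uses no information from $\{\epsilon_n\}$, so the i.i.d.\ structure of the residuals is preserved within the matched $2I$ units; and (ii) the random sign $S_i$ can be absorbed via the symmetry of $\epsilon_{i1}-\epsilon_{i2}$, which removes any residual dependence between $W_i$ and the $\mathcal{G}$-measurable outputs of $\mathcal{M}$. Once these are in place, the rest of the argument is purely algebraic: no limit theorem for $\widehat{\iota}_C$ is required beyond its nonzeroness (which is already implicit in the existence of $\widehat{\beta}_{\text{IV}}$ and the stated $\mbox{sd}$), because the Studentization is designed so that $\widehat{\iota}_C$ cancels exactly.
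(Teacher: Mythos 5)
Your decomposition and limit argument are essentially the paper's own proof: the authors likewise substitute model (\ref{eqn: Y model}) into (\ref{eqn: wald estimator in matched study}), observe that the Studentized quantity collapses to $(\sqrt{2I}\,\sigma)^{-1}\sum_{i=1}^I(\epsilon_{Ti}-\epsilon_{Ci})$ with $\epsilon_{Ti}-\epsilon_{Ci}=(Z_{i1}-Z_{i2})(\epsilon_{i1}-\epsilon_{i2})=W_i$, argue via Definition~\ref{definition: matching alg is valid} that $\{\epsilon_n\}\ \indep\ \mathcal{M}$ so these summands are i.i.d.\ with mean $0$ and variance $2\sigma^2$, and invoke the classical CLT. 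Your conditioning formalism just makes explicit what the paper leaves terse.

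One concrete imprecision: you put $D_n$ into the conditioning $\sigma$-algebra $\mathcal{G}$ and assert that ``by the model assumption, $\{\epsilon_n\}$ is independent of this collection.'' The model only assumes $\{\epsilon_n\}\ \indep\ \{(\widetilde{Z}_n,\mathbf{X}_n,U_{\text{tot},n})\}$; independence of $\epsilon$ from $D$ is \emph{not} assumed, and indeed the paper's own simulations (Supplementary C.2) generate $D$ from a disturbance correlated with the outcome noise. Conditional on a $\mathcal{G}$ containing $D$, the residuals need not be i.i.d.\ or mean-zero, so your symmetry step and the conditional CLT as written do not go through. The repair is immediate and costs nothing: condition only on $\{(\mathbf{X}_n,\widetilde{Z}_n,U_{\text{tot},n})\}$ --- the pairing and the signs $S_i$ are measurable with respect to $(\mathbf{X},\widetilde{Z})$ alone, $D$ has already cancelled out of the Studentized statistic, and the rest of your argument is unchanged.
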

The oracle bias of $\widehat{\beta}_{\text{IV}}$ is a finite-sample quantity. If we further assume that the data are realizations from some superpopulation model, we can derive the \textit{asymptotic oracle bias} (AOB) of $\widehat{\beta}_{\text{IV}}$ under regularity conditions. The AOB formula yields more insights into when strengthening an IV reduces the bias and when it does not. 

Suppose that each $q_{n}=(\widetilde{Z}_{n}, \mathbf{X}_{n}, U_{\text{tot},n})$, $n=1,\dots, N$, is an i.i.d. realization from a distribution $G$. For the encouraged individual in matched pair $i$, i.e., the one with the smaller $\widetilde{Z}$ in pair $i$, the associated information vector $q_{Ti}=(\widetilde{Z}_{Ti}, \mathbf{X}_{Ti}, U_{\text{tot},Ti})$ follows the marginal distribution $G_{\mathcal{M}_{N}, T}$ induced by $G$ and the matching algorithm $\mathcal{M}$ applied to $\{q_{n}, n=1, \dots, N\}$. Analogously, the information vector $q_{Ci}=(\widetilde{Z}_{Ci}, \mathbf{X}_{Ci}, U_{\text{tot},Ci})$ associated with each control in pair $i$, i.e., the individual with the larger $\widetilde{Z}$ in pair $i$, follows a marginal distribution $G_{\mathcal{M}_{N}, C}$. To stress, $G_{\mathcal{M}_{N}, T}$ and $G_{\mathcal{M}_{N}, C}$ depend on $G$, the matching algorithm $\mathcal{M}$, and sample size $N$. Assumption~\ref{assumption: conv of matched dist} says $G_{\mathcal{M}_{N}, T}$ and $G_{\mathcal{M}_{N}, C}$ have well-defined limits as $N \rightarrow \infty$, and Assumption~\ref{assumption: conv of sample quantities} requires convergence in probability of some sample quantities. 
\begin{Assumption}[Convergence of the distribution of matched samples]
\label{assumption: conv of matched dist}
As $N \rightarrow \infty$ and consequently $I\rightarrow \infty$, we have $G_{\mathcal{M}_{N}, T} \xrightarrow{\mathcal{L}} G_{\mathcal{M}, T}$ and $G_{\mathcal{M}_{N}, C} \xrightarrow{\mathcal{L}} G_{\mathcal{M}, C}$ for some distributions $G_{\mathcal{M}, T}$ and $G_{\mathcal{M}, C}$.
\end{Assumption}
\begin{Assumption}[Convergence of sample means of matched data]
\label{assumption: conv of sample quantities}
Let $\mathbb{E}_{\mathcal{M},T}[\cdot ]$ and $\mathbb{E}_{\mathcal{M},C}[\cdot ]$ denote the expectations under distributions $G_{\mathcal{M}, T}$ and $G_{\mathcal{M}, C}$. As $N\rightarrow \infty$ and correspondingly $I\rightarrow \infty$, we have the following regularity conditions hold: $\frac{1}{I}\sum_{i=1}^{I}D_{Ti} \xrightarrow{p} \mathbb{E}_{\mathcal{M}, T}[D]$; $\frac{1}{I}\sum_{i=1}^{I}D_{Ci} \xrightarrow{p} \mathbb{E}_{\mathcal{M}, C}[D]$; $\frac{1}{I}\sum_{i=1}^{I}f(\mathbf{X}_{Ti})\xrightarrow{p} \mathbb{E}_{\mathcal{M}, T}[f(\mathbf{X})]$; $\frac{1}{I}\sum_{i=1}^{I}f(\mathbf{X}_{Ci})\xrightarrow{p} \mathbb{E}_{\mathcal{M}, C}[f(\mathbf{X})]$; $\frac{1}{I}\sum_{i=1}^{I}U_{\text{tot},Ti} \xrightarrow{p} \mathbb{E}_{\mathcal{M}, T}[U_{\text{tot}}]$ and $\frac{1}{I}\sum_{i=1}^{I}U_{\text{tot},Ci} \xrightarrow{p} \mathbb{E}_{\mathcal{M}, C}[U_{\text{tot}}]$. 
\end{Assumption}

Theorem~\ref{thm: asymp bias} derives the AOB of $\widehat{\beta}_{\text{IV}}$.

\begin{Theorem}[The asymptotic oracle bias of the Wald estimator]
\label{thm: asymp bias}
Suppose the outcome satisfies model (\ref{eqn: Y model}). A valid matching algorithm $\mathcal{M}$ is applied to $N$ samples and yields $I$ matched pairs. Suppose Assumptions \ref{assumption: conv of matched dist} and \ref{assumption: conv of sample quantities} hold. As $N \rightarrow \infty$ and consequently $I \rightarrow \infty$, the Wald estimator $\widehat{\beta}_{\text{IV}}$ satisfies:
\begin{equation}
   \widehat{\beta}_{\text{IV}}-\beta \overset{p}{\longrightarrow} \frac{\mathbb{E}_{\mathcal{M},T}[f(\mathbf{X})]-\mathbb{E}_{\mathcal{M},C}[f(\mathbf{X})]}{\mathbb{E}_{\mathcal{M}, T}[D]-\mathbb{E}_{\mathcal{M}, C}[D]}+\delta \cdot \frac{\mathbb{E}_{\mathcal{M}, T}[U_{\text{tot}}]-\mathbb{E}_{\mathcal{M},C}[U_{\text{tot}}]}{\mathbb{E}_{\mathcal{M},T}[D]-\mathbb{E}_{\mathcal{M},C}[D]},
   \label{eqn: asymp bias}
\end{equation}
where $\mathbb{E}_{\mathcal{M},T}[\cdot]$ and $\mathbb{E}_{\mathcal{M},C}[\cdot ]$ denote the expectations under distributions $G_{\mathcal{M}, T}$ and $G_{\mathcal{M}, C}$.
\end{Theorem}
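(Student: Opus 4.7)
The plan is to rewrite the Wald estimator in the ``treated vs.\ control'' (T/C) notation already used in Assumption~\ref{assumption: conv of sample quantities}, substitute the outcome model (\ref{eqn: Y model}) into the numerator, and then apply the assumptions term by term. Because $Z_{i1}=\mathbbm{1}(\widetilde{Z}_{i1}<\widetilde{Z}_{i2})$ and $Z_{i2}=\mathbbm{1}(\widetilde{Z}_{i1}>\widetilde{Z}_{i2})$, we have the identity $(Z_{i1}-Z_{i2})(W_{i1}-W_{i2})=W_{Ti}-W_{Ci}$ for any variable $W$ (since flipping which member of the pair is ``$1$'' in the indexing flips \emph{both} $(Z_{i1}-Z_{i2})$ and $(W_{i1}-W_{i2})$). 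Hence
\begin{equation*}
\widehat{\beta}_{\text{IV}}=\frac{\frac{1}{I}\sum_{i=1}^{I}(R_{Ti}-R_{Ci})}{\frac{1}{I}\sum_{i=1}^{I}(D_{Ti}-D_{Ci})}.
\end{equation*}

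Next I plug in (\ref{eqn: Y model}), which yields $R_{Ti}-R_{Ci}=\beta(D_{Ti}-D_{Ci})+[f(\mathbf{X}_{Ti})-f(\mathbf{X}_{Ci})]+\delta[U_{\text{tot},Ti}-U_{\text{tot},Ci}]+(\epsilon_{Ti}-\epsilon_{Ci})$, so
\begin{equation*}
\widehat{\beta}_{\text{IV}}-\beta=\frac{\frac{1}{I}\sum_{i}[f(\mathbf{X}_{Ti})-f(\mathbf{X}_{Ci})]}{\frac{1}{I}\sum_{i}(D_{Ti}-D_{Ci})}+\delta\cdot\frac{\frac{1}{I}\sum_{i}(U_{\text{tot},Ti}-U_{\text{tot},Ci})}{\frac{1}{I}\sum_{i}(D_{Ti}-D_{Ci})}+\frac{\frac{1}{I}\sum_{i}(\epsilon_{Ti}-\epsilon_{Ci})}{\frac{1}{I}\sum_{i}(D_{Ti}-D_{Ci})}.
\end{equation*}
By Assumption~\ref{assumption: conv of sample quantities}, the denominator converges in probability to $\mathbb{E}_{\mathcal{M},T}[D]-\mathbb{E}_{\mathcal{M},C}[D]$, and the numerators of the first two ratios converge in probability to $\mathbb{E}_{\mathcal{M},T}[f(\mathbf{X})]-\mathbb{E}_{\mathcal{M},C}[f(\mathbf{X})]$ and $\mathbb{E}_{\mathcal{M},T}[U_{\text{tot}}]-\mathbb{E}_{\mathcal{M},C}[U_{\text{tot}}]$, respectively. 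Provided the limiting compliance rate $\mathbb{E}_{\mathcal{M},T}[D]-\mathbb{E}_{\mathcal{M},C}[D]$ is nonzero (which is implicit in the theorem for the right-hand side of (\ref{eqn: asymp bias}) to be meaningful), the continuous mapping theorem and Slutsky's theorem deliver the first two summands in (\ref{eqn: asymp bias}).

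The only nontrivial step is handling the residual term $\frac{1}{I}\sum_{i}(\epsilon_{Ti}-\epsilon_{Ci})$. The key observation is that by construction $\{\epsilon_{n}\}\ \indep\ \{(\widetilde{Z}_{n},\mathbf{X}_{n},U_{\text{tot},n})\}$, and by Definition~\ref{definition: matching alg is valid} a valid matching algorithm $\mathcal{M}$ is a function of $(\widetilde{Z}_{n},\mathbf{X}_{n})$ alone, so $\mathcal{M}\ \indep\ \{\epsilon_{n}\}$. Conditional on $\mathcal{M}$ (i.e., on which indices are labeled $Ti$ versus $Ci$), the $\epsilon_{Ti}$ and $\epsilon_{Ci}$ are still i.i.d.\ draws from the same mean-zero, finite-variance distribution; hence by the weak law of large numbers $\frac{1}{I}\sum_{i}(\epsilon_{Ti}-\epsilon_{Ci})\overset{p}{\to}0$, and dominated convergence removes the conditioning. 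A final application of Slutsky's theorem then kills the third summand, completing the proof.

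The main obstacle, as I see it, is precisely this conditioning argument: one must verify that after matching, the selected $\epsilon$'s remain i.i.d.\ mean-zero despite the data-dependent labeling. The independence $\mathcal{M}\ \indep\ \{\epsilon_{n}\}$ guaranteed by the validity of $\mathcal{M}$ is exactly the property that resolves this, and it is why Definition~\ref{definition: matching alg is valid} was stated as it was. The remaining steps are routine applications of Assumption~\ref{assumption: conv of sample quantities}, the continuous mapping theorem, and Slutsky's theorem.
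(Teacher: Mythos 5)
Your proposal is correct and follows essentially the same route as the paper's own proof in Supplementary Material B.5: the same three-term decomposition after substituting model (\ref{eqn: Y model}), the same use of Assumption~\ref{assumption: conv of sample quantities} for the $f(\mathbf{X})$ and $U_{\text{tot}}$ terms, and the same appeal to $\mathcal{M} \ \indep \ \{\epsilon_n\}$ (via Definition~\ref{definition: matching alg is valid}) to conclude that the relabeled errors $\epsilon_{Ti}, \epsilon_{Ci}$ remain i.i.d.\ mean-zero so the residual term vanishes. No gaps.
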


\section{Comparing IV designs: two sources of bias}
\label{sec: two sources of bias}
Consider expression (\ref{eqn: asymp bias}) which gives the asymptotic bias of the Wald estimator in the presence of IV-outcome unmeasured confounding under model (\ref{eqn: Y model}). This bias is decomposed into two components: a bias due to the residual imbalance of $\mathbf{X}$ after matching, which is captured by the term $(\mathbb{E}_{\mathcal{M},T}[f(\mathbf{X})]-\mathbb{E}_{\mathcal{M},C}[f(\mathbf{X})])/(\mathbb{E}_{\mathcal{M}, T}[D]-\mathbb{E}_{\mathcal{M}, C}[D])$, and a bias due to failing to adjust for $U_{\text{tot}}$, which is captured by the term $(\mathbb{E}_{\mathcal{M}, T}[U_{\text{tot}}]-\mathbb{E}_{\mathcal{M},C}[U_{\text{tot}}])/(\mathbb{E}_{\mathcal{M},T}[D]-\mathbb{E}_{\mathcal{M},C}[D])$. We now leverage expression (\ref{eqn: asymp bias}) to compare two designs.
\subsection{Bias due to residual imbalance after matching}
\label{subsec: Bias due to residual imbalance after matching}
In practice, exact matching on the observed covariates may not be attainable and there will be residual imbalance due to not perfectly matching/balancing the observed covariates. Suppose we are now in a favorable situation where there is no unmeasured confounder, i.e., $\delta = 0$. The AOB of the Wald estimator then reduces to
\begin{equation}
\mbox{AOB}(\widehat{\beta}_{\text{IV}}) = \frac{\mathbb{E}_{\mathcal{M},T}[f(\mathbf{X})]-\mathbb{E}_{\mathcal{M},C}[f(\mathbf{X})]}{\mathbb{E}_{\mathcal{M},T}[D]-\mathbb{E}_{\mathcal{M},C}[D]}. \label{eqn: AOB due to X}  
\end{equation}

Let $\mathcal{M}_0$ be a matching algorithm that constructs an IV, $\mathcal{M}_1$ a strengthening-IV algorithm that constructs a stronger IV,
and $\widehat{\beta}_{\text{IV}, \mathcal{M}_0}$ and $\widehat{\beta}_{\text{IV}, \mathcal{M}_1}$ be the Wald estimators computed from $\mathcal{M}_0$ and $\mathcal{M}_1$, respectively. Consider a linear model $f(\mathbf{X})=\mathbf{X}^{T}\mathbf{\gamma}$ for some $\mathbf{\gamma}=(\gamma_{1}, \dots, \gamma_{p})^{T}$. Suppose there are $p$ observed covariates, $\{X_1, X_2, ..., X_p\}$, all of which, except possibly $X_p$, are well-balanced for both matching algorithms $\mathcal{M}_{0}$ and $\mathcal{M}_1$. By 
``$X_i$ is well-balanced'', we mean the marginal means of $X_i$ in the ``near group'' and ``far group'' are approximately equal. This information can be easily read off from the balance table (e.g., Table~\ref{tbl: balance table}). 

Magnitudes of biases due to not well balancing the observed covariate $X_p$ for designs $\mathcal{M}_0$ and $\mathcal{M}_1$, under this linear $f(\cdot)$, are
\begin{equation}
    |\mbox{AOB}(\widehat{\beta}_{\text{IV}, f(\cdot), \mathcal{M}_0})| = \left|\gamma_{p} \cdot \frac{\mathbb{E}_{\mathcal{M}_{0},T}[X_{p}]-\mathbb{E}_{\mathcal{M}_{0},C}[X_{p}]}{\mathbb{E}_{\mathcal{M}_{0},T}[D]-\mathbb{E}_{\mathcal{M}_{0},C}[D]}\right|
    \label{eqn: bias expression f linear 1}
\end{equation}
and 
\begin{equation}
     |\mbox{AOB}(\widehat{\beta}_{\text{IV}, f(\cdot), \mathcal{M}_1})| = \left|\gamma_{p} \cdot \frac{\mathbb{E}_{\mathcal{M}_{1},T}[X_{p}]-\mathbb{E}_{\mathcal{M}_{1},C}[X_{p}]}{\mathbb{E}_{\mathcal{M}_{1},T}[D]-\mathbb{E}_{\mathcal{M}_{1},C}[D]}\right|,
     \label{eqn: bias expression f linear 2}
\end{equation}
respectively. Expressions (\ref{eqn: bias expression f linear 1}) and (\ref{eqn: bias expression f linear 2}) suggest that in this simple scenario, if the marginal mean of $X_p$ is well-balanced in the ``near group'' and ``far group'' for both designs, the strengthening-IV algorithm $\mathcal{M}_1$ would largely reduce the bias due to imperfect matching through increasing the compliance (the denominator). On the other hand, if the balance of $X_p$ significantly deteriorates after the IV is strengthened, both the numerator $|\mathbb{E}_{\mathcal{M},T }[X_{p}]-\mathbb{E}_{\mathcal{M},C}[X_{p}]|$ and the denominator $|\mathbb{E}_{\mathcal{M},T }[D]-\mathbb{E}_{\mathcal{M},C }[D]|$ would grow larger in $\mathcal{M}_1$ compared to $\mathcal{M}_0$, and it is unclear whether the magnitude of bias due to not perfectly balancing $X_p$ becomes larger or smaller.

In practice, $f(\cdot)$ is not linear and many covariates are not exactly matched. Without knowing the actual functional form $f(\cdot)$, little definite can be said. The biases due to not perfectly matching on $X_{1}, \dots, X_{p}$ may add up and be very large or they may happen to cancel each other. However, practitioners should keep in mind that the strengthening-IV algorithm may introduce bias into the Wald estimator as a by-product of inferior balance on observed covariates. One useful strategy is to make sure that covariate balance after strengthening is no worse than that in the original design. In Supplementary Material A.7, we discuss how to incorporate this objective into existing matching algorithms and illustrate the approach using a subset of the NICU study data. 

\subsection{Bias due to the unmeasured confounding}\label{subsec: Bias due to the unmeasured confounder}
When exact or near-exact matching on the observed covariates $\mathbf{X}$ is possible, the conditional distributions of $\mathbf{X}$ in the encouragement group (``near'' group) and the control group (``far'' group) are the same and the magnitude of the AOBs for the two designs reduces to 
\[
|\mbox{AOB}(\widehat{\beta}_{\text{IV}, \mathcal{M}_0})| = \left|\delta \cdot \frac{\mathbb{E}_{\mathcal{M}_0,T}[U_{\text{tot}}]-\mathbb{E}_{\mathcal{M}_0,C}[U_{\text{tot}}]}{\mathbb{E}_{\mathcal{M}_0,T}[D] - \mathbb{E}_{\mathcal{M}_0,C}[D]}\right|
\] and
\[|\mbox{AOB}(\widehat{\beta}_{\text{IV}, \mathcal{M}_1})| = \left|\delta \cdot \frac{\mathbb{E}_{\mathcal{M}_1,T}[U_{\text{tot}}] - \mathbb{E}_{\mathcal{M}_1,C}[U_{\text{tot}}]}{\mathbb{E}_{\mathcal{M}_1,T}[D] - \mathbb{E}_{\mathcal{M}_1,C}[D]}\right|.
\] The \emph{ratio of the magnitude of bias due to unmeasured confounding} is 
\begin{equation}\label{equ: ratio of the bias via AOB}
    \begin{split}
        \Delta &= \frac{|\mbox{AOB}(\widehat{\beta}_{\text{IV}, \mathcal{M}_1})|}{ |\mbox{AOB}(\widehat{\beta}_{\text{IV}, \mathcal{M}_0})|} = \left|\frac{\mathbb{E}_{\mathcal{M}_{0},T}[D]-\mathbb{E}_{\mathcal{M}_{0},C}[D]}{\mathbb{E}_{\mathcal{M}_{1},T}[D]-\mathbb{E}_{\mathcal{M}_{1},C}[D]}\right| \times \left|\frac{\mathbb{E}_{\mathcal{M}_{1},T}[U_{\text{tot}}]-\mathbb{E}_{\mathcal{M}_{1},C}[U_{\text{tot}}]}{\mathbb{E}_{\mathcal{M}_{0},T}[U_{\text{tot}}]-\mathbb{E}_{\mathcal{M}_{0},C}[U_{\text{tot}}]}\right|.
    \end{split}
\end{equation}
The first factor in expression (\ref{equ: ratio of the bias via AOB}) is the ratio of encouraged-minus-control difference in the treatment indicator $D$ after matching using algorithms $\mathcal{M}_0$ and $\mathcal{M}_1$, while the second factor captures the ratio of average encouraged-minus-control difference in $U_{\text{tot}}$ after matching. Whether or not the magnitude of bias contributed by $U_{\text{tot}}$ increases when strengthening an IV depends on \textit{both} ratios. Strengthening an IV typically makes the first ratio in (\ref{equ: ratio of the bias via AOB}) less than 1 but may render the second ratio in (\ref{equ: ratio of the bias via AOB}) greater than 1. Example \ref{example: nicu U} highlights this point using the NICU data, and a simulation is provided in Supplementary Material C.2 for further illustration.

\begin{Example}[Bias of the Wald estimator in NICU data]\label{example: nicu U}
\rm
Consider a hypothetical NICU dataset where, contrary to the fact, the single birth indicator is not collected. The analyst would not know this and the best he or she could do is to match on all observed covariates. We therefore match on all covariates except the single birth indicator and examine the bias due to this unmeasured confounder in the near/far matching algorithm ($\mathcal{M}_0$) without strengthening the IV and the strengthening-IV algorithm ($\mathcal{M}_1$) that only forms half as many matched pairs as the data allows. The estimated compliance rate is $0.21$ in $\mathcal{M}_0$ and $0.42$ in $\mathcal{M}_1$. The average encouraged-minus-control differences in the unmatched single birth indicator are $1.89 \times 10^{-3}$ for design $\mathcal{M}_0$ and $5.01 \times 10^{-3}$ for design $\mathcal{M}_1$. Put together, the magnitude of bias evaluates to $\delta \times 9.16 \times 10^{-3}$ for design $\mathcal{M}_0$ and $\delta \times 1.19 \times 10^{-2}$ for $\mathcal{M}_1$, and hence the ratio of bias is $\Delta = 1.30 > 1$. In this hypothetical scenario, strengthening the IV ($\mathcal{M}_1$) amplifies the bias of the Wald estimator under model (\ref{eqn: Y model}) due to the unmeasured confounder (i.e., the single birth indicator). 
\end{Example}

\section{A valid AOB-based model-assisted sensitivity analysis framework}
\label{sec: bias due to U and SA}
Suppose that we have controlled for all observed covariates available in a study via matching. What can be said about an IV design's sensitivity to unmeasured confounding? We answer this question by developing a valid model-assisted sensitivity analysis. 

\subsection{A general sensitivity analysis framework}\label{subsec: AOB sensitivity analysis}
A sensitivity analysis outputs a range of plausible values of the parameter of interest when an assumption is allowed to be violated up to a certain magnitude. Consider the data generating process (\ref{eqn: Y model}) described in Section~\ref{sec: AOB}. Our goal is to output a range of plausible values of $\beta$ when unmeasured confounding is present (i.e., $\delta \neq 0$). Below, we derive and describe a unified sensitivity analysis framework based on the AOB formula. There are three key ingredients in our framework.

\textbf{1) A parametric model that relates $U_{\text{tot}}$ to $\widetilde{Z}$ and $\mathbf{X}$:} To still make inference on $\beta$ in the presence of unmeasured confounding, we need to quantify the bias due to it. Recall the AOB of $\widehat{\beta}_{\text{IV}}$ (conditional on $\mathbf{X}_{i1} = \mathbf{X}_{i2}$) is
\[
\mbox{AOB}(\widehat{\beta}_{\text{IV}}) = \delta \cdot \frac{\mathbb{E}_{\mathcal{M},T}[U_{\text{tot}}] - \mathbb{E}_{\mathcal{M},C}[U_{\text{tot}}]}{\mathbb{E}_{\mathcal{M},T}[D] - \mathbb{E}_{\mathcal{M},C}[D]},
\]
where the denominator can be directly estimated from data and the numerator involves expectations of $U_{\text{tot}}$ in the encouragement group and the control group. The following key observation helps estimate these two expectations:
\begin{equation}
    \begin{split}
    &\mathbb{E}_{\mathcal{M},T}[U_{\text{tot}}]-\mathbb{E}_{\mathcal{M},C}[U_{\text{tot}}] \\
    = ~&\mathbb{E}_{\mathcal{M},T}\{\mathbb{E}_{\mathcal{M},T}[U_{\text{tot}} \mid \widetilde{Z}, \mathbf{X}]\} - \mathbb{E}_{\mathcal{M},C}\{\mathbb{E}_{\mathcal{M},C}[U_{\text{tot}} \mid \widetilde{Z}, \mathbf{X}]\} \\
   = ~&\mathbb{E}_{\mathcal{M},T}\{\mathbb{E}[U_{\text{tot}} \mid \widetilde{Z}, \mathbf{X}]\} - \mathbb{E}_{\mathcal{M},C}\{\mathbb{E}[U_{\text{tot}} \mid \widetilde{Z}, \mathbf{X}] \}~\text{by Definition 1},
    \end{split}
    \label{eqn: key obs in SA}
\end{equation}
where $\mathbb{E}[U_{\text{tot}} \mid \widetilde{Z}, \mathbf{X}$] refers to the expectation of $U_{\text{tot}}$ given $\widetilde{Z}$ and $\mathbf{X}$ prior to matching in the population. Equation (\ref{eqn: key obs in SA}) suggests that in order to estimate the bias contributed by $U_{\text{tot}}$ after matching, it suffices to posit a model for $U_{\text{tot}} \mid \widetilde{Z}, \mathbf{X}$ before matching, provided that the matching algorithm $\mathcal{M}$ is valid (Definition~\ref{definition: matching alg is valid}).

\textbf{2) A sensitivity zone:} Consider a parametric or possibly semiparametric model of $U_{\text{tot}} \mid \widetilde{Z}, \mathbf{X}$ parametrized by some finite-dimensional parameter $\boldsymbol{\theta}$. Given modeling assumptions on $U_{\text{tot}} \mid \widetilde{Z}, \mathbf{X}$, we may identify $\boldsymbol{\theta}$ from the observed data. However, the identification is typically very weak and these parameters would better be regarded as sensitivity parameters (\citealp{Copas1997}; \citealp{Scharfstein1999}; \citealp{Imbens2003}). A sensitivity zone refers to a set of plausible values of sensitivity parameters and is denoted as $\boldsymbol{I}$. The simplest form of a sensitivity zone is the Cartesian product of intervals or general point sets of each coordinate of $\boldsymbol{\theta}$ and the scalar $\delta$. In many situations, however, it helps to reparametrize the sensitivity parameters from $\boldsymbol{\theta}$ to parameters that are easier to understand and communicate (\citealp{Imbens2003}; \citealp{rosenbaum2009amplification}; \citealp{cinelli2018making}). 

\textbf{3) A sensitivity interval:} The output of our sensitivity analysis is a $100(1-\alpha)\%$ finite-sample \emph{sensitivity interval} (SI) of $\beta$ given a sensitivity zone $\boldsymbol{I}$, which contains the true $\beta$ with probability at least $1-\alpha$ for any unknown true sensitivity parameter $(\delta, \boldsymbol{\theta}) \in \boldsymbol{I}$. In Supplementary Material A.5, we described in detail how to construct a $100(1-\alpha)\%$ confidence interval of $\beta$ given the true $(\delta, \boldsymbol{\theta})$ by treating $\{(U_{\text{tot},i1}, U_{\text{tot},i2}), i = 1,...,I\}$ as missing covariates and adopting the multiple imputation (MI) paradigm (\citealp{rubin1987multiple}; \citealp{ichino2008temporary}). For each fixed $(\delta, \boldsymbol{\theta}) \in \boldsymbol{I}$, a $100(1-\alpha)\%$ confidence interval of $\beta$ is
\begin{equation*}
  \text{CI}(\delta, \boldsymbol{\theta})=\Big [K^{-1}\sum_{k = 1}^K \widehat{\beta}_{\text{IV}, k}-t_{\nu}^{-1}(1-\alpha/2)\cdot \widehat{\sigma}_{\text{total}}, \ K^{-1}\sum_{k = 1}^K \widehat{\beta}_{\text{IV}, k}+t_{\nu}^{-1}(1-\alpha/2)\cdot \widehat{\sigma}_{\text{total}} \Big],
\end{equation*}
where there are $K$ imputed datasets, 
\begin{equation*}
   \widehat{\beta}_{\text{IV}, k} =\widehat{\beta}_{\text{IV}} - \delta \cdot \frac{\sum_{i=1}^{I}(Z_{i1}-Z_{i2})(U_{\text{tot},i1}^{(k)}-U_{\text{tot},i2}^{(k)})}{\sum_{i=1}^{I}(Z_{i1}-Z_{i2})(D_{i1}-D_{i2})}, \quad k=1,\dots, K,
\end{equation*}
with $\{(U_{\text{tot},i1}^{(k)}, U_{\text{tot},i2}^{(k)}), i = 1,...,I\}$ being the $k$-th imputed set of $U_{\text{tot}}$ according to the specified model for $U_{\text{tot}} \mid \widetilde{Z}, \mathbf{X}$ given the sensitivity parameter $(\delta, \boldsymbol{\theta})$, and $K^{-1}\sum_{k = 1}^K \widehat{\beta}_{\text{IV}, k}$ is the pooled estimate for $\beta$ under $(\delta, \boldsymbol{\theta})$. The estimated total variance of the pooled estimate $K^{-1}\sum_{k = 1}^K \widehat{\beta}_{\text{IV}, k}$ is
\begin{equation}
    \widehat{\sigma}_{\text{total}}^{2}= \frac{1 + K^{-1}}{K - 1}\sum_{k = 1}^K \Big(\widehat{\beta}_{\text{IV}, k} - K^{-1}\sum_{k = 1}^K \widehat{\beta}_{\text{IV}, k}\Big)^2 + K^{-1}\sum_{k=1}^{K} \frac{2\widehat{\sigma}^{2}_{k}}{I \cdot \widehat{\iota}^2_C},
\label{eqn: total variance}
\end{equation}
and $t_{\nu}^{-1}$ is the quantile function of the Student's t-distribution with degrees of freedom 
\begin{equation*}
    \nu = (K - 1)\Bigg [1 + \frac{K^{-1}\sum_{k=1}^{K}2\widehat{\sigma}^{2}_{k}\cdot [I\cdot \widehat{\iota}^2_C]^{-1}}{\frac{1 + K^{-1}}{K - 1}\sum_{k = 1}^K \Big(\widehat{\beta}_{\text{IV}, k} - K^{-1}\sum_{k = 1}^K \widehat{\beta}_{\text{IV}, k}\Big)^2}\Bigg ]^2.
\end{equation*}
A $100(1-\alpha)\%$ sensitivity interval (SI) of $\beta$ given the sensitivity zone $\boldsymbol{I}$ is then $\text{SI}=\bigcup_{(\delta, \boldsymbol{\theta}) \in \boldsymbol{I}}\text{CI}(\delta, \boldsymbol{\theta})$.

In expression (\ref{eqn: total variance}), $\widehat{\sigma}_k^2$ is an estimate of $\sigma^2$ based on the $k$-th imputed dataset. One estimation strategy is to posit a model for $f(\mathbf{X}_{n})$ in (\ref{eqn: Y model}); in Supplementary Material A.6, we show that $\sigma^2$ can also be estimated from matched pair data without positing a model for $f(\mathbf{X}_{n})$, but positing some other additional models, including a model relating $D$, ${\bf{X}}$, $U_{\text{tot}}$ and $\widetilde{Z}$. The additional model(s) make the sensitivity analysis model-assisted. The primary analysis, the Wald estimator after matching (i.e., $\widehat{\beta}_{IV}$ defined in (\ref{eqn: wald estimator in matched study})), is nonparametric. Model-assisted sensitivity analyses that complement nonparametric primary analyses have been proposed in other contexts (\citealp{rosenbaum2009amplification}; \citealp{nattino2018model}).

\subsection{An example of AOB-based model-assisted sensitivity analysis}
\label{subsec: example of sens model and application to NICU}
We illustrate the proposed AOB-based model-assisted sensitivity analysis framework by describing a concrete model for $U_{\text{tot}} \mid \widetilde{Z}, \mathbf{X}$ and associated parametrization of sensitivity parameters. Consider the following simple and parsimonious model: 
\begin{equation}
    \begin{split}
        &\widetilde{Z} \sim H(\cdot), \text{ such that } \mathbb{E}_{\widetilde{Z} \sim H(\cdot)}[\widetilde{Z}] = 0 \text{ and } \text{Var}_{\widetilde{Z} \sim H(\cdot)}[\widetilde{Z}] = 1,\\
        &U_{\text{tot}} \sim \text{Bern} \bigg(\frac{\exp(\lambda_0 + \lambda_1 \widetilde{Z})}{1 + \exp(\lambda_0 + \lambda_1 \widetilde{Z})}\bigg).
    \end{split}
    \label{model: sens model}
\end{equation}
According to model (\ref{model: sens model}), there exists a binary unmeasured confounder in the population, and $\widetilde{Z}$ has mean zero and unit variance in the population, marginally. To incorporate observed covariates $\mathbf{X}$, we specify that model (\ref{model: sens model}) holds when $\widetilde{Z}$ is replaced with $\overset{\approx}{Z} = (\widetilde{Z} - \mathbb{E}[\widetilde{Z} \mid \mathbf{X}])/\sigma(\widetilde{Z} \mid \mathbf{X})$, the standardized residual of $\widetilde{Z}$ after the effect of $\mathbf{X}$ is taken out. In practice, we may fit a linear regression of $\widetilde{Z}$ on $\mathbf{X}$, take the residual, and scale it to have mean $0$ and variance $1$. It facilitates interpretation to reparametrize the sensitivity parameters $(\lambda_0, \lambda_1)$ into $(\tau, \lambda_1)$, where $\tau = P(U_{\text{tot}} =1 \mid \widetilde{Z} > \mbox{Median}(\widetilde{Z})) - P(U_{\text{tot}} = 1 \mid \widetilde{Z} < \mbox{Median}(\widetilde{Z}))$. In this way, $\tau$ quantifies how important $U$ is to $\widetilde{Z}$ conditional on $\widetilde{Z}$ being either some constant $c$ above or below the median of $\widetilde{Z}$, and $\lambda_{1}$ quantifies how fast $P(U_{\text{tot}}=1 \mid \widetilde{Z}=\mbox{Median}(\widetilde{Z})+c)$ changes as $c$ changes for $c>0$. One nuance in this reparametrization is that the map from $(\lambda_0, \lambda_1)$ to $(\tau, \lambda_1)$ is not surjective: there are combinations of $(\tau, \lambda_1)$ that do not correspond to any choice of $(\lambda_0, \lambda_1)$. After this reparametrization, there are a total of three sensitivity parameters in this sensitivity analysis model: $(\delta, \tau, \lambda_1)$. Let $\boldsymbol{\Delta}$ be a collection of plausible values for $\delta$, and similarly $\boldsymbol{T}$ for $\tau$ and $\boldsymbol{\Lambda}$ for $\lambda_1$. For the sensitivity zone, we consider $\text{I}(\boldsymbol{\Delta}, \boldsymbol{T}, \boldsymbol{\Lambda}) = \boldsymbol{\Delta} \times \boldsymbol{T} \times \boldsymbol{\Lambda}$.

\subsection{Comparing IV designs via the power of sensitivity analyses}
When the treatment indeed has an effect on the outcome of interest, we would not be able to recognize this from observational data and if a large enough amount of unmeasured confounding is considered, a sensitivity analysis will say the observed treatment effect might be purely due to this unmeasured confounding rather than a genuine treatment effect. The situation where there is a genuine treatment effect and no unmeasured confounding is called a \emph{favorable situation} (\citealp{heller2009split}; \citealp{rosenbaum2010design}). In such a favorable situation, we would like a sensitivity analysis to conclude that the treatment effect is insensitive to a small or moderately large amount of unmeasured confounding. \emph{The power of a sensitivity analysis} is the probability that we are able to make such a statement when unmeasured confounding is limited up to some extent (\citealp{rosenbaum2004design, rosenbaum2010design}).

We examine power in our model-assisted sensitivity analysis framework in this section. We consider the sensitivity analysis discussed in Section~\ref{subsec: example of sens model and application to NICU}. Let $(R_{n}, D_{n}, \widetilde{Z}_{n}, X_{1n}, X_{2n}$, $X_{3n})$, $n=1,\dots, 1000$, be i.i.d. data generated from the following model:
\begin{align}
\label{model: simulated power via AOB}
&(X_{1n}, X_{2n}, X_{3n}) \sim \mathcal{N}(0, \mathbf{I}_3), \ \epsilon_{n}\overset{\text{i.i.d.}}{\sim} \mathcal{N}(0,1), \nonumber\\
    & \widetilde{Z}_{n} \overset{\text{i.i.d.}}{\sim} \mathcal{N}(0 , 1),\ D_{n}\sim \text{Bern$\bigg(\frac{\exp(\xi \widetilde{Z}_{n})}{1+\exp(\xi \widetilde{Z}_{n})}\bigg)$},\\
    &R_{n}=\beta \cdot D_{n}+ 0.2\cdot X_{1n}+ 0.5\cdot \log (|X_{2n}|)+0.3\cdot \sin(X_{3n}) + \epsilon_{n} \nonumber,
\end{align}
where $\widetilde{Z}$ is a continuous IV, $D$ a binary treatment, $(X_{1}, X_{2}, X_{3})$ three observed covariates, and $R$ an observed outcome. We conducted a sensitivity analysis under model (\ref{model: sens model}) with reparametrized sensitivity parameters $(\tau, \lambda_1, \delta)$.

We calculate the finite sample power of this sensitivity analysis under the \textit{favorable situation} (\ref{model: simulated power via AOB}) for two designs: an ``unstrengthened'' IV design $\mathcal{M}_0$ that uses all data and forms $500$ matched pairs with similar $(X_1, X_2, X_{3})$ but distinct $\widetilde{Z}$, and a ``strengthened'' IV design $\mathcal{M}_1$ that only forms half as many matched pairs as the data allows ($250$ matched pairs) where the pairs are still similar in $(X_1, X_2, X_{3})$ but are even more distinct in $\widetilde{Z}$; see Supplementary Material A.1 for details on matching. For each simulated dataset and a fixed sensitivity zone $\boldsymbol{I} = \{\tau\} \times \{\lambda_1\} \times [-\delta_{\sup}, \delta_{\sup}]$, we calculate the $95\%$ sensitivity interval $\text{SI}=\bigcup_{(\delta, \boldsymbol{\theta}) \in \boldsymbol{I}}\text{CI}(\delta, \boldsymbol{\theta})$ as described in Section~\ref{subsec: AOB sensitivity analysis}, and reject the null hypothesis of no treatment effect (i.e., $\beta=0$ in the hypothesized outcome model (\ref{eqn: Y model})) if the SI does not contain the origin (to focus on main issues in illuminating the effect of strengthening the IV, we assume the true $\sigma = 1$ is known). Table~\ref{tab: simulated power via AOB} summarizes the results for various sensitivity zones (indexed by $(\delta_\text{sup}, \lambda_1)$ with $\tau = 0.01$) and different data generating processes of $D$ (indexed by $\xi$). The estimated compliance rate $\widehat{\iota}_{C}$, average bias, and average standard deviation for both designs are also reported in the table.

\begin{table}[ht]

\caption{ The simulated power of unstrengthened-IV design $\mathcal{M}_{0}$ and strengthened-IV design $\mathcal{M}_{1}$ for various $\beta$, $\delta_{\sup}$, $\xi$, and $\lambda_{1}$. $\mathcal{M}_{0}$ uses all the 1000 samples and forms 500 matched pairs. $\mathcal{M}_{1}$ only forms half as many matched pairs as the data allows ($250$ matched pairs). We set $\tau=0.01$. Estimated average bias and average sampling standard deviation (SD) $\widehat{\sigma}_{\text{total}}$ are reported in the parenthesis as (Bias/SD). We also report the estimated compliance rate $\widehat{\iota}_{C}$. Not shown in the table, the simulated size (i.e., setting $\beta = \delta_{\sup}=0$) is 0.053 for $\mathcal{M}_{0}$ and 0.054 for $\mathcal{M}_{1}$. 2000 simulations were done for these settings and each setting in the table.}
\begin{tabular}{c c c c c c c c c c} 
 \hline
 \multirow{2}{*}{Compliance rate $\widehat{\iota}_{C}$}& \multicolumn{2}{c}{$\xi=1.0$}& \multicolumn{2}{c}{$\xi=1.2$}& \multicolumn{2}{c}{$\xi=4.0$}& \multicolumn{2}{c}{$\xi=5.0$}\\
\cmidrule(r){2-3} \cmidrule(r){4-5} \cmidrule(r){6-7} \cmidrule(r){8-9} 
   & $\mathcal{M}_{0}$ & $\mathcal{M}_{1}$ & $\mathcal{M}_{0}$ & $\mathcal{M}_{1}$ & $\mathcal{M}_{0}$ & $\mathcal{M}_{1}$ & $\mathcal{M}_{0}$ & $\mathcal{M}_{1}$ \\ [0.5ex]
 \hline
  & 0.24  & 0.41 & 0.27 &  0.46 & 0.45  & 0.75 &  0.46 &  0.78    \\
 \hline
 \smallskip
 \smallskip
\end{tabular}
 \begin{tabular}{c c c c c c} 
 \hline
 \multirow{2}{*}{Power (Bias/SD)} & \multicolumn{2}{c}{$\xi=1.0$}& \multicolumn{2}{c}{$\xi=1.2$} \\
\cmidrule(r){2-3} \cmidrule(r){4-5}
 $\beta=0.8, \delta_{\sup}=0.5$  & $\mathcal{M}_{0}$ & $\mathcal{M}_{1}$ & $\mathcal{M}_{0}$ & $\mathcal{M}_{1}$  \\ [0.5ex] 
 \hline
 $\lambda_{1}=1.0$  & 0.70 (0.14/0.27)  & 0.83 (0.14/0.22)  & 0.80 (0.12/0.24)   & 0.93 (0.12/0.20)  \\
  \hline
 $\lambda_{1}=1.5$  & 0.66 (0.16/0.27)  & 0.81 (0.16/0.22)  & 0.78 (0.14/0.24)   & 0.91 (0.14/0.20)  \\
  \hline
 $\lambda_{1}=2.0$  & 0.64 (0.18/0.27)  & 0.80 (0.19/0.22) & 0.76 (0.16/0.24)  & 0.90 (0.16/0.20)  \\
  \hline
 $\lambda_{1}=2.5$ & 0.60 (0.21/0.27)  & 0.75 (0.21/0.22) & 0.73 (0.19/0.24)  & 0.87 (0.19/0.20) \\ 
  \hline
 $\lambda_{1}=3.0$ & 0.57 (0.23/0.27) & 0.71 (0.24/0.22) & 0.70 (0.20/0.24) & 0.84 (0.21/0.20) \\
  \hline
 \multirow{2}{*}{Power (Bias/SD)} & \multicolumn{2}{c}{$\xi=4.0$}& \multicolumn{2}{c}{$\xi=5.0$} \\
\cmidrule(r){2-3} \cmidrule(r){4-5}
 $\beta=4, \delta_{\sup}=10$  & $\mathcal{M}_{0}$ & $\mathcal{M}_{1}$ & $\mathcal{M}_{0}$ & $\mathcal{M}_{1}$  \\ [0.5ex] 
 \hline
 $\lambda_{1}=6.0$  & 0.67 (3.21/0.27)  & 0.52 (3.37/0.26)  & 0.78 (3.08/0.26)  & 0.62 (3.27/0.25)  \\
  \hline
 $\lambda_{1}=9.0$  & 0.54 (3.42/0.24)  & 0.39 (3.61/0.23)  & 0.68 (3.29/0.23) & 0.51 (3.50/0.22)  \\
  \hline
 $\lambda_{1}=12.0$  & 0.52 (3.51/0.22)  & 0.36 (3.70/0.21) & 0.66 (3.38/0.21) & 0.50  (3.57/0.20)  \\
  \hline
 $\lambda_{1}=15.0$ & 0.51 (3.54/0.21)  & 0.36 (3.74/0.19) & 0.65 (3.42/0.20) & 0.45 (3.62/0.19) \\ 
  \hline
 $\lambda_{1}=18.0$ & 0.50 (3.57/0.20) & 0.36 (3.76/0.19) & 0.63 (3.45/0.19) & 0.48 (3.65/0.18) \\
 \hline
\end{tabular}
\label{tab: simulated power via AOB}
\end{table}

Table~\ref{tab: simulated power via AOB} suggests that under our framework, the strengthening-IV design sometimes but does \emph{not} always increase the power of a sensitivity analysis. Whether or not strengthening-IV designs would increase or decrease the power of a sensitivity analysis depends on how strengthening affects both bias and variance. When the sampling variability is small compared to the bias (e.g., when the sample size $I$ is very large or the association between the unmeasured confounder and the outcome $\delta$ is very large), the power is primarily determined by the bias of the Wald estimator. If a strengthening-IV design $\mathcal{M}_1$ amplifies the bias compared to an unstrengthening-IV design $\mathcal{M}_0$, we would expect $\mathcal{M}_1$ to have a smaller power of sensitivity analysis; if $\mathcal{M}_1$ mitigates the bias compared to $\mathcal{M}_0$, we would expect $\mathcal{M}_{1}$ to have a larger power. See Section \ref{subsec: Bias due to the unmeasured confounder} for a detailed discussion on when a strengthening-IV design would amplify or mitigate the bias. Table~\ref{tab: simulated power via AOB} exhibits cases (with $\beta = 4$ and $\delta_{\sup}=10$) where the standard deviation of the estimate is small compared to the bias (e.g., SD = $0.27$ and Bias = $3.21$ for $\mathcal{M}_{0}$ when $\xi=4.0$ and $\lambda_1 = 6.0$), and the power is driven by the bias. For instance, when $\xi = 4.0$ and $5.0$, we see that the strengthening-IV design $\mathcal{M}_1$ amplifies the bias compared to $\mathcal{M}_0$, and as a result it decreases the power of a sensitivity analysis. Our observation and conclusion here is distinct from previous results derived from the $\Gamma$ sensitivity analysis model: \citet{ertefaie2018quantitative} showed that, in a favorable situation, testing using a strong IV (e.g., a ``strengthened'' IV) always exhibits a larger insensitivity to unmeasured confounding (i.e., larger power of sensitivity analysis) asymptotically (known as the \emph{design sensitivity}), compared to doing the same test but with a weaker IV. 

When the biases of both designs are comparable, the variance of the estimator contributes significantly to the power. Table~\ref{tab: simulated power via AOB} exhibits cases (with $\beta = 0.8$ and $\delta_{\sup} = 0.5$) where the biases of both designs are about the same but the standard deviations are quite different (e.g., when $\xi = 1.0$ and $\lambda_1 = 1.0$, the estimated biases under both designs are $0.14$; the standard deviation is $0.27$ for $\mathcal{M}_{0}$ and $0.22$ for $\mathcal{M}_{1}$). In such cases, a design with smaller sampling variability often exhibits a larger power (e.g., $\mathcal{M}_1$ is more powerful than $\mathcal{M}_0$ when $\xi = 1.0$ and $\lambda_1 = 1.0$).

\subsection{Sensitivity analysis for the NICU study}
\label{subsec: Sensitivity analysis for the NICU study}
We apply the proposed sensitivity analysis model to the NICU data described in Section~\ref{sec: introduction}. The outcome of interest is infants' length of stay at hospital. For infants who died, we imputed the $99\%$ upper quantile of survivors length of stay; this is a burden of illness approach to analyzing the length of stay in which death is considered to be a very bad outcome (\citealp{chang1994reduction}; \citealp{lin2017placement}). The $95\%$ confidence interval of the naive Wald estimator assuming no unmeasured confounding is $[0.78, 1.90]$ for a near/far matching design $\mathcal{M}_0$ that does not strengthen the IV, and $[1.04, 1.80]$ for the strengthening-IV design $\mathcal{M}_1$. Thus, if the IV of excess travel time is valid, we have evidence that high-level NICUs increase infants' length of stay at a hospital.  

We next consider doing a sensitivity analysis under model (\ref{model: sens model}). For each $(\tau, \lambda_1, \delta)$ combination, we estimate $\sigma$ assuming $f(\mathbf{X}_n)$ in (\ref{eqn: Y model}) is linear in $\mathbf{X}_n$, and construct the sensitivity interval according to the MI paradigm discussed in Section \ref{subsec: AOB sensitivity analysis}. For each $(\tau, \lambda_1)$ combination, Figure \ref{subfig: nicu sens plot not strengthen} reports the largest $\Delta$ (denoted as $\Delta_{\text{sup}}$), where $\Delta$ is defined as $\delta$ divided by the standard deviation of the outcome, such that the $95\%$ sensitivity interval does not contain $0$ for design $\mathcal{M}_0$, and Figure \ref{subfig: nicu sens plot strengthen} plots the same information for design $\mathcal{M}_1$. In both figures, darker shades correspond to larger $\Delta_\text{sup}$, i.e., a larger association between the outcome and the hypothesized unmeasured confounder needed in order to nullify the conclusion that high-level NICUs increase length of stay. 
 
\begin{figure}[!htb]
  \centering
  \caption{ Sensitivity plots of the NICU study for the ``unstrengthened'' IV design $\mathcal{M}_0$ and the ``strengthened'' IV design $\mathcal{M}_1$. x-axis is $\lambda_1$ and y-axis is $\tau$. Darker shades correspond to larger values of $\Delta_\text{sup}$. White lines are isopleths.}
  \begin{subfigure}{.5\textwidth}
    \centering
    \includegraphics[width = 6.0 cm, height = 4.5 cm]{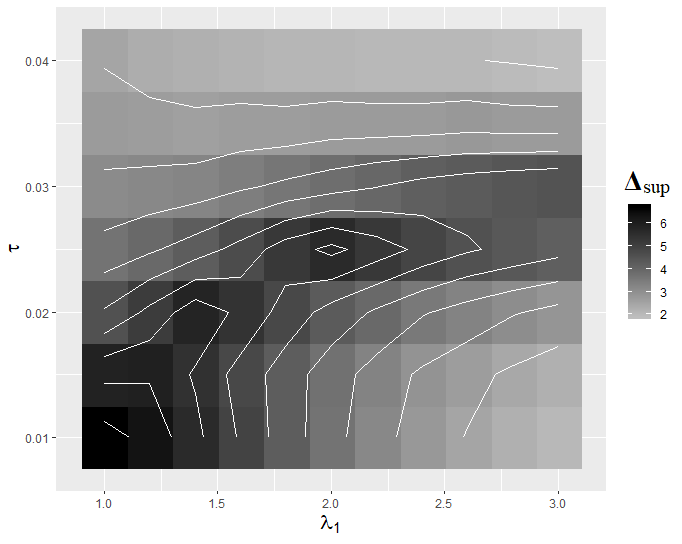}
    \caption{ }
    \label{subfig: nicu sens plot not strengthen}
  \end{subfigure}%
  \hspace{0 cm}
  \begin{subfigure}{.5\textwidth}
    \centering
    \includegraphics[width = 6.0 cm, height = 4.5 cm]{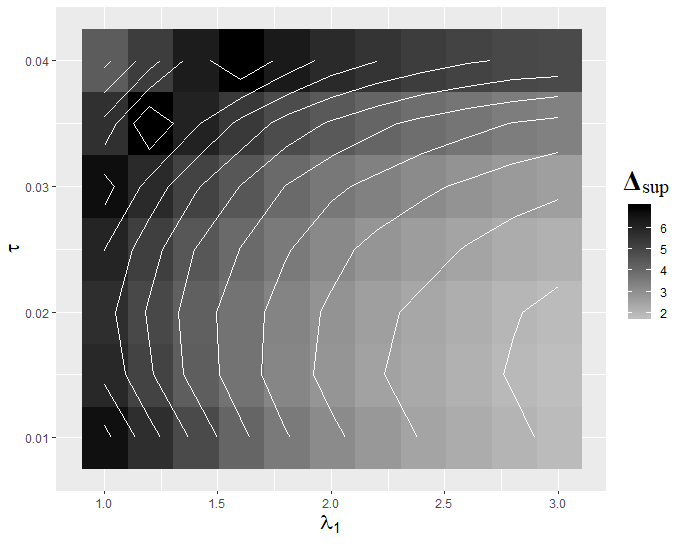}
    \caption{}
    \label{subfig: nicu sens plot strengthen}
  \end{subfigure}
  \label{fig: sens plot NICU}
\end{figure}
\thispagestyle{empty}

For either design, we consider a sensitivity zone of $\tau$ from $0.01$ to $0.04$ and $\lambda_1$ from $1$ to $3$. A $\lambda_1$ as large as $3$ means that for one standard deviation increase in $\overset{\approx}{Z}$, the standardized residual of $\widetilde{Z}$, we expect to see an $\exp (3)\approx 20$-fold increase in the odds of $U_{\text{tot}}= 1$ versus $U_{\text{tot}} = 0$. On the other hand, $\tau$ encodes the difference in $P(U_{\text{tot}} = 1)$ conditional on $\overset{\approx}{Z}$ being above or below its median. To get a sense of how large $\tau$ could be, we may look at the value of $\tau$ if $U_{\text{tot}}$ were an observed covariate. This value equals 0.01 for the covariate ``single birth'' and $0.06$ for the covariate ``white''. The combination $\tau \times \lambda_1 = [0.01, 0.04] \times [1, 3]$ thus encodes a wide range of possible relationships between $U_{\text{tot}}$ and $\widetilde{Z}$, and Figure~\ref{fig: sens plot NICU} suggests that it would require a moderate to enormous association between the unmeasured confounder and the outcome in order to nullify the causal conclusion for both designs.

As we compare Figure~\ref{subfig: nicu sens plot not strengthen} to \ref{subfig: nicu sens plot strengthen}, we observe that the strengthened IV does not always exhibit larger insensitivity: there are $(\tau \times \lambda_1)$ combinations such that $\mathcal{M}_0$ exhibits a larger insensitivity (darker shade) and vice versa. Recall that small $\lambda_1$ means that $P(U_{\text{tot}} = 1 \mid \widetilde{Z} = \text{Median}(\widetilde{Z}) + c)$ does not change rapidly as $c$ changes, which corresponds to the strength of unmeasured confounding not being much modified by the continuous IV $\widetilde{Z}$. Therefore, not surprisingly, when $\lambda_1$ is small, we observe the strengthened IV seems to be more robust to unmeasured confounding compared to the unstrengthened IV.

\section{Discussion: the implications and some practical advice}
\label{sec: conclusion}
In this article, we study the implications of building a stronger IV from an existing, but possibly weak, IV in the design stage of a matched analysis. We spell out the potential outcomes framework in a matching-based study design approach to continuous IV analysis, and  clarify what constitutes a valid sensitivity analysis model for randomization-based inference in this setting. In particular, we demonstrate, via theory and real data examples, that the extensively used $\Gamma$ sensitivity analysis model is in general \textit{not} valid in the continuous IV setting. This observation implies that theoretical analyses based on the $\Gamma$ sensitivity analysis model that are valid in a binary IV setting, for instance conclusions concerning the power of a sensitivity analysis, do not necessarily extend to the continuous IV setting. We developed a new framework to assess IV designs' sensitivity to unmeasured confounding based on evaluating the bias of the Wald estimator. Indeed, under our framework, strengthening an IV may not increase the power of sensitivity analyses. 

We now provide some practical guidance for empirical researchers based on our results. When an IV is believed to be valid conditional on the set of observed covariates, efficiency of testing is an important criterion. Our first set of results quantifies the relative efficiency of performing hypothesis testing using IVs of different strength: in order for a weaker IV (with compliance rate $\iota_{c, 1}$) to attain the same power as a stronger IV (with compliance rate $\iota_{c, 2}$ such that $\iota_{c, 2} > \iota_{c, 1}$), the weaker IV needs to have a sample size that is at least $(\iota_{c, 2}^2/\iota_{c, 1}^2)$ times the sample size of the stronger IV. If an empirical researcher has multiple binary IVs, he or she may build a stronger IV by defining one aggregate binary IV out of them. For instance, one may define encouragement to be ``doubly encouraged'' or ``multiply encouraged'', meaning a person is encouraged if two or more IVs simultaneously prod her into accepting the IV. A valid and potentially stronger IV with smaller sample size will be forged, and our theory provides some insight into when this is a good idea. As for a continuous IV, one can always forge a stronger IV using the strategy illustrated in \citet{baiocchi2010building}, i.e. optimal non-bipartite matching with sinks. Our result suggests that practitioners may throw away as much as three-fourths of the sample if they believe doing so may double the compliance rate, or half of the data if doing so may increase the compliance rate by roughly $40\%$. 

When there is concern that the putative IV may be invalid because of IV-outcome unmeasured confounding, we found that strengthening an IV in the design stage may amplify or mitigate the bias of the Wald estimator due to not controlling for the unmeasured confounding. A strengthening-IV design is favorable if it helps reduce the bias due to unmeasured confounding. In practice, we do not get to observe the IV-outcome unmeasured confounders. One sensible strategy is to sequentially leave out each observed covariate and calculate how omitting the covariate affects the bias as in Example~\ref{example: nicu U}. One might pay particular attention to the results from leaving out observed covariates that are thought to be related to unmeasured confounders of greatest concern. There is also bias due to inexact matching in reality. Strengthening an IV, valid or not, may potentially amplify this bias. We recommend that empirical researchers make sure that covariate balance is no worse in the strengthening-IV design compared to the original design, which can be achieved in some circumstances via a two-step matching algorithm described in our Supplementary Material A.7.

\section*{Acknowledgements}
The authors would like to thank Hyunseung Kang and Jos\'e Zubizarreta for helpful discussions, Michael Baiocchi for sharing the code for the NICU study, and participants in the causal inference reading group of the University of Pennsylvania for helpful comments. 

\bibliographystyle{apalike}
\bibliography{strengthenIV}

\clearpage

\bigskip
  \bigskip
  \bigskip
  \begin{center}
    {\Large \bf Web-based Supplementary Material for ``Re-Evaluating Strengthened-IV Designs: Asymptotic Efficiency, Bias Formula, and the Validity and Power of Sensitivity Analyses''}
 \end{center}
  \medskip

\begin{abstract}
    Supplementary Material A contains detailed discussion on near/far matching, Remark~\ref{remark: an exclusion principle}, Theorem~\ref{thm: impossible}, Example~\ref{example: nicu U}, the MI paradigm, estimation of $\sigma$ in model $(\ref{eqn: Y model})$, and the proposed two-step debiased matching algorithm. Supplementary Material B contains the statement and proof of a more general version of Theorem~\ref{thm: ARE in independent case} (i.e., Theorem~\ref{thm: complete ARE in general case}), proofs of Theorems \ref{thm: ARE in independent case}, \ref{thm: impossible}, \ref{thm: asymp. Wald}, and $\ref{thm: asymp bias}$. Supplementary Material C contains additional simulations on Theorem~\ref{thm: ARE in independent case} and bias due to the unmeasured confounder. 
\end{abstract}

\begin{center}
{\large\bf Supplementary Material A: Discussions and Details on Examples and Matching}
\end{center}

\subsection*{A.1: More details on statistical matching}
We discuss how to apply \citet{lu2011optimal}'s optimal non-bipartite matching to pair babies with similar observed covariates but different excess travel times. Suppose there are $2I$ babies prior to matching. An $2I \times 2I$ distance matrix was defined between every pair of babies. An optimal non-bipartite matching then divides these $2I$ babies into $I$ non-overlapping pairs of two babies in an optimal way, meaning the sum of distances within the $I$ pairs is minimized. The distance between two babies consists of two parts: a distance that quantifies the distance between the observed covariates of each pair of babies, and a substantial penalty is added to the distance between any pair of babies whose $\widetilde{Z}$ differ (in absolute value) by at most $\Lambda$. For the vanilla matching algorithm $\mathcal{M}_{0}$, we set $\Lambda=0$. To strengthen the IV, we eliminate some babies in the matching by adding ``sinks'' (\citealp{lu2001matching, lu2011optimal}; \citealp{baiocchi2010building}). In order to eliminate $e$ babies, $e$ sinks are added to the dataset before matching, yielding a $(2I + e) \times (2I + e)$ distance matrix. The distance between every sink and every baby is $0$ and that between every pair of sinks is set to infinity. An optimal match pairs $e$ babies to the $e$ sinks in a way that minimizes the remaining total sum of distances of $I - e/2$ pairs of babies, i.e., an optimal size-$e$ set of babies is removed. Throughout our real data analysis, we used a robust rank-based Mahalanobis distance between observed covariates, and a penalty caliper $\Lambda = 25$ was applied. When strengthening an IV, we added $e = I$ sinks and eliminated half of the $2I$ babies.

\subsection*{A.2: \citet{rosenbaum1989sensitivity}'s semiparametric dose assignment model}
\label{app: a second Rosenbaum model}
\citet{rosenbaum1989sensitivity} considered a different dose assignment model where encouragement probability is allowed to depend on magnitude of continuous treatments or IVs. According to \citet{rosenbaum1989sensitivity}, the dose assignment probability for a multi-valued encouragement $\mathbf{\widetilde{Z}}$ follows the semiparametric model below:
\begin{equation}\label{model: semipara}
    \begin{split}
       &P(\widetilde{Z}_{11}=\widetilde{z}_{11}, \dots, \widetilde{Z}_{I2}=\widetilde{z}_{I2}\mid \mathcal{F}_{1})=\prod_{i=1}^{I}\prod_{j=1}^{2} \xi(\widetilde{z}_{ij}, \mathbf{x}_{ij}, u_{ij}),\\
    &\xi (\widetilde{z}_{ij}, \mathbf{x}_{ij}, u_{ij})=\alpha(\mathbf{x}_{ij}, u_{ij})\cdot \exp\{ \chi(\widetilde{z}_{ij}, \mathbf{x}_{ij})+\gamma \cdot \widetilde{z}_{ij}\cdot u_{ij}\}, 
    \end{split}
\end{equation}
where $\chi(\widetilde{z}_{ij}, \mathbf{x}_{ij})$ an unknown function, $\gamma$ an unknown scalar parameter, and $\alpha(\cdot, \cdot)$ the normalizing constant. When $\gamma=0$, two individuals with the same $\mathbf{X} = \mathbf{x}$ have the same distribution of dose $\widetilde{Z}$. When $\gamma > 0$, after adjusting for $\mathbf{X}$, individuals with higher values of $U$ tend to receive higher dose $\widetilde{Z}$: for two individuals $m$ and $n$ such that $\mathbf{x}_{m}=\mathbf{x}_{n}$, $P(\widetilde{Z}_{m}=\widetilde{z}\mid \mathcal{F}_{1})/P(\widetilde{Z}_{n}=\widetilde{z}\mid \mathcal{F}_{1})\propto \exp\{ \gamma \cdot (u_{m}-u_{n}) \cdot 
\widetilde{z}\}$ increases in $\widetilde{z}$ when $\gamma>0$ and $u_{m}>u_{n}$. Under model (\ref{model: semipara}), \cite{rosenbaum1989sensitivity} showed that:
\begin{equation}
    \begin{split}
    P(Z_{i1}=1, Z_{i2}=0\mid \mathcal{F}_{1}, \widetilde{\mathbf{Z}}_{\vee}, \widetilde{\mathbf{Z}}_{\wedge})&=P(\widetilde{Z}_{i1}>\widetilde{Z}_{i2}\mid \mathcal{F}_{1}, \widetilde{\mathbf{Z}}_{\vee}, \widetilde{\mathbf{Z}}_{\wedge})\\
    &=P(\widetilde{Z}_{i1}>\widetilde{Z}_{i2}\mid \mathcal{F}_{1}, \widetilde{Z}_{i1}\wedge \widetilde{Z}_{i2}, \widetilde{Z}_{i1}\vee \widetilde{Z}_{i2})\\
    &=\frac{\exp\{\gamma (\widetilde{Z}_{i1}-\widetilde{Z}_{i2}) (u_{i1}-u_{i2})\}}{1+\exp\{\gamma (\widetilde{Z}_{i1}-\widetilde{Z}_{i2}) (u_{i1}-u_{i2})\}},
    \end{split}
    \label{eqn: semi model}
\end{equation}
when $\widetilde{Z}_{i1}\wedge \widetilde{Z}_{i2} \neq \widetilde{Z}_{i1}\vee \widetilde{Z}_{i2}$. Equation (\ref{eqn: semi model}) suggests that dose assignment probability is in general correlated with the information of continuous doses through $\widetilde{Z}_{i1}\wedge \widetilde{Z}_{i2}$ and $\widetilde{Z}_{i1}\vee \widetilde{Z}_{i2}$.

\subsection*{A.3: Illustrating Theorem~\ref{thm: impossible} using the NICU study}

We consider a hypothetical version of the dataset introduced in Section~\ref{sec: introduction}, where, contrary to the fact, the single birth indicator is not collected. We further assume that there is no other unmeasured confounder other than the unmatched single birth indicator. Therefore, the unmeasured confounder $U = 1$ if that person's single birth indicator is $1$ and $0$ otherwise. Consider a subset of people $\mathcal{A}$ with excess travel time $\widetilde{Z} \in [5 ~\text{min}, 8~\text{min}]$ and $U = 1$. We perform a near/far matching as described in Supplementary Material A.1 and keep track of whom $\mathcal{A}$ are paired to after matching. Consider those among $\mathcal{A}$ who are paired to someone with $U = 0$ and calculate the probability that these people are assigned encouragement (i.e., paired to someone with a larger $\widetilde{Z}$). According to the $\Gamma$ sensitivity analysis model (\ref{eqn: sens model gamma equiv}), this probability depends only on the difference in $U$ and not on the continuous IVs. Since $U_{i1} - U_{i2}$ is held fixed at $1$, this probability is a constant and does not vary with the absolute difference in the continuous IV: $|\widetilde{Z}_{i1} - \widetilde{Z}_{i2}|=\widetilde{Z}_{i1}\vee \widetilde{Z}_{i2}-\widetilde{Z}_{i1}\wedge \widetilde{Z}_{i2}$. Figure~\ref{fig: gamma is problematic} plots this encouragement probability as a function of $|\widetilde{Z}_{i1} - \widetilde{Z}_{i2}|$. We test the null hypothesis that this assignment probability is not correlated to  $|\widetilde{Z}_{i1} - \widetilde{Z}_{i2}|$ using Spearman's correlation test, and the p-value is $2.07\times 10^{-6}$. 

\begin{figure}[ht]
\caption{\small Probability that a subject with excess travel time $Z \in [5, 8]$ and $U = 1$ is assigned to encouragement indicator $Z=1$ after being matched to a subject with $U = 0$ and varying excess travel time. Spearman's rank correlation test yields a p-value of $2.07\times 10^{-6}$.}
\label{fig: gamma is problematic}
\centering
\includegraphics[width = 7 cm, height = 5 cm]{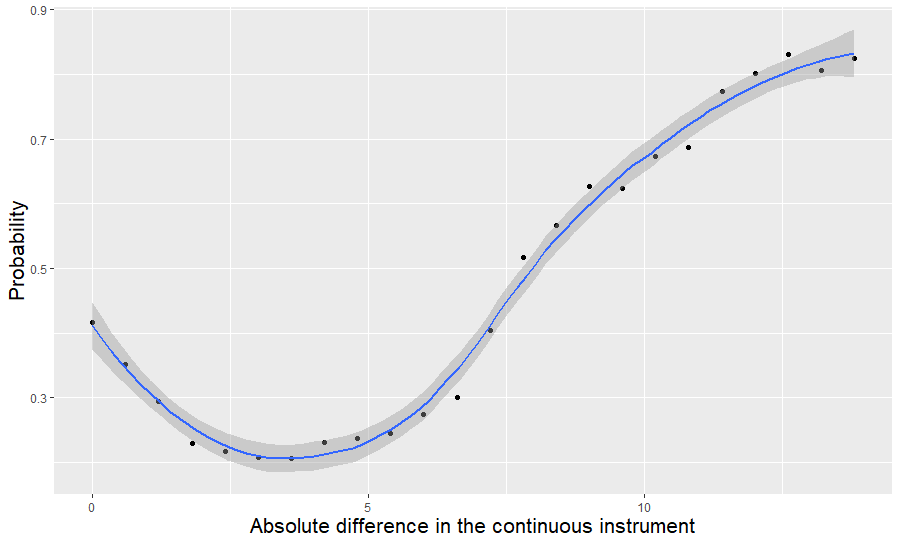}
\end{figure}

\subsection*{A.4: More details on Example~\ref{example: nicu U}}
\label{app: details on ex 2}
\begin{table}[H]
    \caption{Two hypothetical scenario where the single birth indicator and the race indicator (white or not) are not observed or matched on.}
    \label{tbl: AOB niuc example}
    \begin{tabular}{ c c c c c } 
  \hline
  \multirow{2}{*}{} &  \multicolumn{2}{c}{$U_{\text{tot}}=$ single birth indicator}  &  \multicolumn{2}{c}{$U_{\text{tot}}=$ race indicator (white or not)} \\ 
   & Original ($\mathcal{M}_0$) & Strengthened ($\mathcal{M}_1$)& Original ($\mathcal{M}_0$)& Strengthened ($\mathcal{M}_1$) \\
\hline
Strength & 0.21 & 0.42 & 0.22  & 0.45 \\
Difference in $U_{\text{tot}}$ & $1.89 \times 10^{-3}$ & $5.01 \times 10^{-3}$ & $7.96 \times 10^{-2}$ & $2.08 \times 10^{-1}$ \\
Bias & $\delta \times 9.16 \times 10^{-3}$ &  $\delta \times 1.19 \times 10^{-2}$  & $\delta \times 0.37$ & $\delta \times 0.46$ \\
Ratio of bias $\Delta$ &\multicolumn{2}{c}{$1.30> 1$} &\multicolumn{2}{c}{$1.26> 1$} \\
 \hline 
\end{tabular}
\end{table}

\subsection*{A.5: Details on constructing a finite-sample sensitivity interval using the MI paradigm}
We regard $\{U_{\text{tot},i1}, U_{\text{tot},i2}, i = 1,...,I\}$ as missing covariates and describe how to construct a finite-sample sensitivity interval using the multiple imputation (MI) paradigm. Note that the only difference between the standard MI paradigm and our procedure here is that the data-generating process of $U_{\text{tot}}$ is the sensitivity analysis model under consideration and completely known (\citealp{rubin1987multiple}; \citealp{ichino2008temporary}). Fix $(\delta, \boldsymbol{\theta}) \in \boldsymbol{I}$ and impute $\{(U_{\text{tot},i1}, U_{\text{tot},i2}), i = 1,...,I\}$ using the parametric model relating $U_{\text{tot}}$ to $\mathbf{X}$ and $\widetilde{Z}$ discussed above. For each imputed dataset $k$, Theorem~\ref{thm: asymp. Wald} implies a bias-corrected point estimate of $\beta$ is 
\begin{equation*}
   \widehat{\beta}_{\text{IV}, k} = \widehat{\beta}_{\text{IV}} - \widehat{\text{Bias}}(\widehat{\beta}_{\text{IV}})=\widehat{\beta}_{\text{IV}} - \delta \cdot \frac{\sum_{i=1}^{I}(Z_{i1}-Z_{i2})(U_{\text{tot},i1}^{(k)}-U_{\text{tot},i2}^{(k)})}{\sum_{i=1}^{I}(Z_{i1}-Z_{i2})(D_{i1}-D_{i2})}, \quad k=1,\dots, K,
\end{equation*}
where $\{(U_{\text{tot},i1}^{(k)}, U_{\text{tot},i2}^{(k)}), i = 1,...,I\}$ is the $k$-th imputed set of $U_{\text{tot}}$, and the associated variance is $\text{Var}(\widehat{\beta}_{\text{IV}, k}) = 2\sigma^2\cdot [I\cdot \widehat{\iota}^2_C]^{-1}$. According to Rubin's rule, the overall estimate for $\beta$ is $K^{-1}\sum_{k = 1}^K \widehat{\beta}_{\text{IV}, k}$, and $\text{the variance of the estimate} = \text{within-imputation variance} + \text{across-imputation variance}$ (\citealp{rubin1996multiple}). So the estimated variance of $K^{-1}\sum_{k = 1}^K \widehat{\beta}_{\text{IV}, k}$ is 
\begin{equation}\label{eqn: total variance with known variance}
    \widehat{\sigma}_{\text{total}}^{2}= \frac{1 + K^{-1}}{K - 1}\sum_{k = 1}^K \Big(\widehat{\beta}_{\text{IV}, k} - K^{-1}\sum_{k = 1}^K \widehat{\beta}_{\text{IV}, k}\Big)^2 + \frac{2\sigma^2}{I \cdot \widehat{\iota}^2_C}.
\end{equation}
Therefore, for each fixed $(\delta, \boldsymbol{\theta}) \in \boldsymbol{I}$, a $100(1-\alpha)\%$ confidence interval of $\beta$ is 
\begin{equation*}
  \text{CI}(\delta, \boldsymbol{\theta})=\Big [K^{-1}\sum_{k = 1}^K \widehat{\beta}_{\text{IV}, k}-t_{\nu}^{-1}(1-\alpha/2)\cdot \widehat{\sigma}_{\text{total}}, \ K^{-1}\sum_{k = 1}^K \widehat{\beta}_{\text{IV}, k}+t_{\nu}^{-1}(1-\alpha/2)\cdot \widehat{\sigma}_{\text{total}} \Big],
\end{equation*}
where $t_{\nu}^{-1}$ is the quantile function of the Student's t-distribution with degrees of freedom 
\begin{equation*}
    \nu = (K - 1)\Bigg [1 + \frac{2\sigma^2\cdot [I\cdot \widehat{\iota}^2_C]^{-1}}{\frac{1 + K^{-1}}{K - 1}\sum_{k = 1}^K \Big(\widehat{\beta}_{\text{IV}, k} - K^{-1}\sum_{k = 1}^K \widehat{\beta}_{\text{IV}, k}\Big)^2}\Bigg ]^2.
\end{equation*}
Then a $100(1-\alpha)\%$ sensitivity interval (SI) of $\beta$ given the sensitivity zone $\boldsymbol{I}$ is $\text{SI}=\bigcup_{(\delta, \boldsymbol{\theta}) \in \boldsymbol{I}}\text{CI}(\delta, \boldsymbol{\theta})$.

We further incorporate the uncertainty of estimating $\sigma$ in the MI paradigm by replacing $\text{Var}(\widehat{\beta}_{\text{IV}, k}) = 2\sigma^2\cdot [I \cdot \widehat{\iota}^2_C]^{-1}$ in (\ref{eqn: total variance with known variance}) with $K^{-1}\sum_{k=1}^{K}2\widehat{\sigma}_k^2\cdot [I \cdot \widehat{\iota}^2_C]^{-1}$, resulting in (\ref{eqn: total variance}), where $\widehat{\sigma}_k$ estimates $\sigma$ in the $k$-th imputed dataset under some parsimonious model of $f(\mathbf{X}_{n})$ in (\ref{eqn: Y model}).

\subsection*{A.6: Estimation of $\sigma$ in the outcome generating model (\ref{eqn: Y model})}
Under the outcome generating model (\ref{eqn: Y model}), since $\{\epsilon_{n}: n=1,\dots, N\} \ \indep \ \{(\widetilde{Z}_{n}, \mathbf{X}_{n}, U_{\text{tot},n}): n=1,\dots, N\}$, by Definition \ref{definition: matching alg is valid}, we have $\{\epsilon_{n}: n=1,\dots, N\} \ \indep \ \mathcal{M}$. Thus, $\epsilon_{T1}, \dots, \epsilon_{TI}$, $\epsilon_{C1}, \dots, \epsilon_{CI}$ are i.i.d. with expectation zero and variance $\sigma^{2}$. Thus, by the law of large numbers, we have
\begin{align*}
  \frac{1}{I} \sum_{i=1}^{I}(\epsilon_{i1}-\epsilon_{i2})^{2}= \frac{1}{I} \sum_{i=1}^{I}(\epsilon_{Ti}-\epsilon_{Ci})^{2}\xrightarrow{a.s.} 2\sigma^{2}.
\end{align*}
Conditional on exact matching $\mathbf{X}_{i1}=\mathbf{X}_{i2}$, we have 
\begin{align*}
    (\epsilon_{i1}-\epsilon_{i2})^{2}&=(R_{i1}-R_{i2}-\beta D_{i1}+\beta D_{i2}-\delta U_{\text{tot},i1}+\delta U_{\text{tot},i2})^{2}\\
    &=(R_{i1}-R_{i2})^{2}+\beta^{2}\cdot (D_{i1}-D_{i2})^{2}+\delta^{2}\cdot (U_{\text{tot},i1}-U_{\text{tot},i2})^{2}- \beta \cdot (R_{i1}-R_{i2})(D_{i1}-D_{i2})\\
    &\quad \quad -\delta \cdot (R_{i1}-R_{i2})(U_{\text{tot},i1}-U_{\text{tot},i2})+\beta \cdot \delta \cdot (D_{i1}-D_{i2})(U_{\text{tot},i1}-U_{\text{tot},i2}).
\end{align*}
Assume that under the matching algorithm $\mathcal{M}$, we have $\frac{1}{I}\sum_{i=1}^{I}(R_{i1}-R_{i2})^{2}\xrightarrow{p}\mathbb{E}_{\mathcal{M}}[(R_{i1}-R_{i2})^{2}]$, $\frac{1}{I}\sum_{i=1}^{I}(D_{i1}-D_{i2})^{2}\xrightarrow{p}\mathbb{E}_{\mathcal{M}}[(D_{i1}-D_{i2})^{2}]$, $\frac{1}{I}\sum_{i=1}^{I}(U_{\text{tot},i1}-U_{\text{tot},i2})^{2}\xrightarrow{p}\mathbb{E}_{\mathcal{M}}[(U_{\text{tot},i1}-U_{\text{tot},i2})^{2}]$, $\frac{1}{I}\sum_{i=1}^{I}(R_{i1}-R_{i2})(D_{i1}-D_{i2})\xrightarrow{p}\mathbb{E}_{\mathcal{M}}[(R_{i1}-R_{i2})(D_{i1}-D_{i2})]$, $\frac{1}{I}\sum_{i=1}^{I}(R_{i1}-R_{i2})(U_{\text{tot},i1}-U_{\text{tot},i2})\xrightarrow{p}\mathbb{E}_{\mathcal{M}}[(R_{i1}-R_{i2})(U_{\text{tot},i1}-U_{\text{tot},i2})]$, and $\frac{1}{I}\sum_{i=1}^{I}(D_{i1}-D_{i2})(U_{\text{tot},i1}-U_{\text{tot},i2})\xrightarrow{p}\mathbb{E}_{\mathcal{M}}[(D_{i1}-D_{i2})(U_{\text{tot},i1}-U_{\text{tot},i2})]$. Thus, we can estimate $2\sigma^{2}$ by
\begin{align*}
    2 \widehat{\sigma}^{2}&=\frac{1}{I}\sum_{i=1}^{I}(R_{i1}-R_{i2})^{2}+\beta^{2}\cdot \frac{1}{I}\sum_{i=1}^{I} (D_{i1}-D_{i2})^{2}+\delta^{2}\cdot \frac{1}{I}\sum_{i=1}^{I}(U_{\text{tot},i1}-U_{\text{tot},i2})^{2}\\
    &\quad \quad - \beta \cdot \frac{1}{I}\sum_{i=1}^{I}(R_{i1}-R_{i2})(D_{i1}-D_{i2})-\delta \cdot \frac{1}{I}\sum_{i=1}^{I}(R_{i1}-R_{i2})(U_{\text{tot},i1}-U_{\text{tot},i2})\\
    &\quad \quad +\beta \cdot \delta \cdot \frac{1}{I}\sum_{i=1}^{I} (D_{i1}-D_{i2})(U_{\text{tot},i1}-U_{\text{tot},i2}).
\end{align*}
However, we cannot get $2 \widehat{\sigma}^{2}$ without positing some further models. For example, note that although the last term $\frac{1}{I}\sum_{i=1}^{I} (D_{i1}-D_{i2})(U_{\text{tot},i1}-U_{\text{tot},i2})$ can be approximated by $\frac{1}{I}\sum_{i=1}^{I} (D_{i1}-D_{i2})\mathbb{E}[(U_{\text{tot},i1}-U_{\text{tot},i2})\mid D_{i1}, D_{i2}, \mathbf{X}_{i1}, \mathbf{X}_{i2}, \widetilde{Z}_{i1}, \widetilde{Z}_{i2} ]$, it requires imposing an model on $\mathbb{E}[(U_{\text{tot},i1}-U_{\text{tot},i2})\mid D_{i1}, D_{i2}, \mathbf{X}_{i1}, \mathbf{X}_{i2}, \widetilde{Z}_{i1}, \widetilde{Z}_{i2}]$ in addition to $\mathbb{E}[U_{\text{tot}}\mid \mathbf{X}, \widetilde{Z}]$ because $U_{\text{tot}} \not\!\perp\!\!\!\perp D \mid \mathbf{X}, \widetilde{Z}$.

\subsection*{A.7: A two-step debiased matching algorithm}
Our proposed matching algorithm consists of two stages. In the first stage, we construct an IV using a matching algorithm (e.g., optimal non-bipartite matching) without strengthening the IV. Denote the difference in means of each observed covariate $X_i$ in this design as $\delta_{i}$. The second stage uses the mixed integer programming (MIP) framework introduced in \citet{zubizarreta2012using}. We impose a constraint to ensure that the difference in the continuous IV formed by the second-stage matching is on average at least $k$ times as large as that in the initial IV formed by the first-stage matching, where $k\in [1,+\infty)$. In practice, we can treat $k$ as a tuning parameter and try different values of $k$ to ensure both a increase in the average paired difference of $\widetilde{Z}$ (i.e., $k$ not too small), and no dramatic decrease in the sample size (i.e., $k$ not too large). We also add constraints to make sure that the difference in means of each observed covariate $X_i$ can be no larger than $\delta_{i}$. Algorithm~\ref{alg: matching} in Supplementary Material A.8 summarizes the proposed matching algorithm. For more details of integer programming and its application to statistical matching, see Supplementary Material A.8, \citet{zubizarreta2012using}, and \citet{zubizarreta2013stronger}.

\begin{Remark}\rm
By constraining the difference in means of each observed covariate in $\mathcal{M}_1$ to be no larger than that in $\mathcal{M}_0$, we make sure no bias is introduced as a by-product of the strengthening-IV algorithm when $f(\mathbf{X})$ is linear in $\mathbf{X}$ as in (\ref{eqn: AOB due to X}). One can put further constraints on the higher moments and even marginal distributions of $X_i$ and interactions $X_i X_j$. These objectives can all be incorporated in \citet{zubizarreta2012using}'s mixed-integer-programming (MIP) framework. 
\end{Remark}

We illustrate the proposed matching algorithm using a subset of NICU data (2005 NICU study). We consider the following six covariates as a simple example: the birth weight in kilos, gestational age in weeks, gestational diabetes indicator, single birth indicator, parity, and mother's age in years. Table~\ref{tbl: compare three matching algs} contrasts the standardized difference of each of the six covariates in three designs: a first design that uses the vanilla near/far matching algorithm ($\mathcal{M}_0$), a second design that strengthens the IV by throwing away half of the samples as in \citet{baiocchi2010building} ($\mathcal{M}_1$), and a third design that uses the proposed two-stage debiased matching algorithm ($\mathcal{M}_{\text{two-stage}}$). Equation~(\ref{eqn: AOB due to X}) suggests the bias due to inexact matching in $\mathbf{X}$ is amplified by the imperfect compliance (see also \citealp{jackson2015toward}; \citealp{zhao2018graphical}). To account for this, the standardized difference of each covariate in Table \ref{tbl: compare three matching algs} is further complemented by a version of it that is normalized by the estimated compliance rate $\widehat{\iota}_{C}$. From Table~\ref{tbl: compare three matching algs}, we see some covariates have worse balance in design $\mathcal{M}_1$ compared to $\mathcal{M}_0$, before the standardized difference is normalized by $\widehat{\iota}_{C}$. Even after we take into account that $\mathcal{M}_1$ has higher $\widehat{\iota}_{C}$, many covariates are still less balanced compared to $\mathcal{M}_0$. See numbers in bold in the parentheses. On the other hand, the standardized difference of every covariate in $\mathcal{M}_{\text{two-stage}}$ is, by the design of our proposed matching algorithm, no larger than that in $\mathcal{M}_0$ before being divided by $\widehat{\iota}_{C}$. In fact, since $\widehat{\iota}_{C}$ increases in $\mathcal{M}_{\text{two-stage}}$ compared to $\mathcal{M}_0$, the normalized standardized differences are guaranteed to be smaller than those in $\mathcal{M}_0$. Meanwhile, $\mathcal{M}_{\text{two-stage}}$ significantly strengthened the IV compared to $\mathcal{M}_{0}$, with the estimated compliance rate $0.51>0.26$.

\begin{table}[H]
\centering
\caption{Covariate balance in three designs: the vanilla near/far matching algorithm $\mathcal{M}_0$, the strengthening-IV algorithm $\mathcal{M}_1$ that throws away half of data, and the two-stage debiased matching algorithm $\mathcal{M}_{\text{two-stage}}$. We only focus on the 2005 NICU data here. The three absolute standardized differences of each covariate after matching with $\mathcal{M}_{0}$, $\mathcal{M}_{1}$ and $\mathcal{M}_{\text{two-stage}}$ are reported, followed by a version of it that is normalized by the estimated compliance rate in the parenthesis.}
\label{tbl: compare three matching algs}
\begin{tabular}{lccc}
\hline
& $\mathcal{M}_0$ & $\mathcal{M}_1$ & $\mathcal{M}_{\text{two-stage}}$ \\ \hline
Estimated compliance rate  &0.26   &0.45  &0.51 \\
Covariates &           &          &          \\ 
\hspace{0.5 cm}Birth weight, g & 0.00 (0.00)     & 0.00 (0.01)     & 0.00 (0.00)    \\
\hspace{0.5 cm}Gestational age, weeks & 0.00 (0.00)      & 0.00 (0.01)     & 0.00 (0.00)     \\
\hspace{0.5 cm}Gestational diabetes, 1/0 & 0.00 (0.01)     & 0.01 (0.02)    & 0.00 (0.00)    \\ 
\hspace{0.5 cm}Single birth, 1/0  & 0.00 (0.01)     & 0.00 (0.00)   & 0.00 (0.00)    \\ 
\hspace{0.5 cm}Parity   & 0.01 (0.05)      & 0.01 (0.03)    & 0.00  (0.01)\\
\hspace{0.5 cm}Mother's age, years    & 0.00 (0.01)       & 0.00 (0.01)      & 0.00 (0.00) \\
Number of matched pairs  & 5161   & 2579  &3138 \\
\hline
\end{tabular}
\end{table}

\subsection*{A.8: An overview on matching via mixed integer programming (MIP) and a two-step debiased matching algorithm in detail}
\label{app: alg matching}
We start by briefly introducing the integer programming and its connection to statistical matching. Readers who are interested in more details should refer to \citet{zubizarreta2012using, zubizarreta2013stronger}. 

An integer program takes the form 
 \[
 \mbox{minimize}_{\boldsymbol{a}} ~\boldsymbol{\eta}^T\boldsymbol{a} \quad \mbox{subject to} \quad \boldsymbol{B}\boldsymbol{a} \leq \boldsymbol{b}, \boldsymbol{a} \geq \boldsymbol{0} \qquad \mbox{with}~\boldsymbol{a}~\mbox{integer}.
 \]
 
 In the context of statistical matching, the decision variable $\boldsymbol{a}$ is further binary. Suppose there are $L$ subjects in the study and let $a_{lm}, 1 \leq l < m \leq L$ be binary variables such that $a_{lm} = 1$ if the subject $l$ and $m$ are paired. As introduced in Zubizzareta (2012, 2013), the following constraints need to be imposed:
 \begin{enumerate}
     \item Each subject $l$ appears in at most one matched pair: \[
     \sum_{m = 1}^{l - 1} a_{ml} + \sum_{m = l+1}^L a_{lm} \leq 1;
     \]
     \item Balance the means of observed covariate $X_i$:\[
     \left|\sum_{l = 1}^{L - 1} \sum_{m = l + 1}^{L} a_{lm}v_{i, l} - \sum_{l = 1}^{L - 1} \sum_{m = l + 1}^{L} a_{lm}v_{i, m} \right| \leq \delta_i \sum_{l = 1}^{L - 1} \sum_{m = l + 1}^{L} a_{lm},
     \] where $v_{i, l}$ is the covariate $X_i$ of the subject $l$ and $\delta_i$ is the difference in means of covariate $X_i$ in design $\mathcal{M}_0$;
     \item Force pairs to differ with respect to the mean of the IV:\[
     \sum_{l = 1}^{L - 1} \sum_{m = l + 1}^{L} a_{lm} v_l - \sum_{l = 1}^{L - 1} \sum_{m = l + 1}^{L} a_{lm}v_m \geq \phi \sum_{l = 1}^{L - 1} \sum_{m = l + 1}^{L} a_{lm},
     \] where $v_l$ here represents the IV of the subject $l$ and $\phi$ encodes the strength of the strengthening-IV design.
 \end{enumerate}
 
 Finally, the objective function encodes our desire to have as many matched pairs as possible:
 \[
 \mbox{maximize} ~\sum_{l = 1}^{L - 1} \sum_{m = l + 1}^{L} a_{lm}.
 \]
 
 Other constraints, such as the fine-balance and near-fine balance constraints, can also be formulated as linear constraints and added to the above mathematical program. For detailed discussion, see \citet{zubizarreta2012using,zubizarreta2013stronger}. 

Algorithm \ref{alg: matching} summarizes the two-step matching algorithm proposed in Section~\ref{subsec: Bias due to residual imbalance after matching}.

\begin{algorithm}[ht]
\caption{A two-step debiased matching algorithm}
\label{alg: matching}
\begin{algorithmic}[1]
\STATE Apply optimal non-bipartite matching without strengthening the IV. 
\STATE Output $\delta_i$, the difference in means of observed covariates $X_1, X_2, ..., X_{p}$.
\STATE For a fixed $\phi$, solve the following integer program:
\[
\mbox{maximize} ~\sum_{l = 1}^{L - 1} \sum_{m = l + 1}^{L} a_{lm}
\]
subject to 
\begin{align*}
    \begin{split}
        &\sum_{m = 1}^{l - 1} a_{ml} + \sum_{m = l+1}^L a_{lm} \leq 1, \\
        &\left|\sum_{l = 1}^{L - 1} \sum_{m = l + 1}^{L} a_{lm}v_{i, l} - \sum_{l = 1}^{L - 1} \sum_{m = l + 1}^{L} a_{lm}v_{i, m} \right| \leq \delta_i \sum_{l = 1}^{L - 1} \sum_{m = l + 1}^{L} a_{lm}, \quad i=1,\dots,p, \\
         &\sum_{l = 1}^{L - 1} \sum_{m = l + 1}^{L} a_{lm} v_l - \sum_{l = 1}^{L - 1} \sum_{m = l + 1}^{L} a_{lm}v_m \geq \phi \sum_{l = 1}^{L - 1} \sum_{m = l + 1}^{L} a_{lm}.
    \end{split}
\end{align*}
\STATE Tune the parameter $\phi$ to balance the trade-off between the average paired difference of the IV and the sample size.
\end{algorithmic}
\end{algorithm}

\clearpage
\begin{center}
{\large\bf Supplementary Material B: Proofs}
\end{center}

\subsection*{B.1: Proof of Theorem~\ref{thm: ARE in independent case}}

\begin{Theorem}[The complete and general form of Theorem~\ref{thm: ARE in independent case}]
\label{thm: complete ARE in general case} 
Consider a sequence of testing problems consisting of a null hypothesis $H_{0}: \beta=\beta_{0}$ versus $H_{1}=\beta_{n}$. Suppose that $\beta_{n}=\beta_{0}+\Delta/\sqrt{n} + o(1/\sqrt{n})$ and without loss of generality, suppose that $\Delta>0$. Let $\pi_{I}(\cdot)$ be the power function of the one-sided Wilcoxon signed rank test (or the sign test) when testing the proportional treatment effect model in a randomized encouragement design using $I$ matched pairs. Let $I_{n}$ be the minimal number of matched pairs needed such that $\pi_{I_{n}}(\beta_0)\leq \alpha$ and $\pi_{I_{n}}(\beta_n)\geq \gamma$ for some $\alpha \in (0,1)$ and $\gamma \in (\alpha,1)$. Let $I_{n,1}$ and $I_{n,2}$ be the number of matched pairs needed for two given sequences of tests using two different valid IVs: IV I with parameters $\varrho_{1}=(\iota_{C,1},\iota_{A,1},\iota_{N,1},\mu_{C,1}, \mu_{A,1}, \mu_{N,1})$ and IV II with parameters $\varrho_{2}=(\iota_{C,2},\iota_{A,2},\iota_{N,2},\mu_{C,2}, \mu_{A,2}, \mu_{N,2})$. Let $f(x)$ be the density function of the error term $\epsilon_{i}$, and let $f^{\prime}(x)$ be the derivative function of $f(x)$. Suppose that $f$ is continuously differentiable, $\underset{x \to \infty}{\lim} f(x)=0$, and $f \in L^{2}$. We have under the Wilcoxon signed rank test, 
\begin{equation*}
    \lim_{n \rightarrow \infty}\frac{I_{n,1}}{I_{n,2}}=\frac{\psi_{\text{wilc}}^{2}(\iota_{C,2},\iota_{A,2},\iota_{N,2},\mu_{C,2}, \mu_{A,2}, \mu_{N,2})}{\psi_{\text{wilc}}^{2}(\iota_{C,1},\iota_{A,1},\iota_{N,1},\mu_{C,1}, \mu_{A,1}, \mu_{N,1})},
\end{equation*}
where 
\begingroup
\allowdisplaybreaks
\begin{align*}
	\psi_{\text{wilc}}(\iota_{C},\iota_{A},\iota_{N},\mu_{C}, \mu_{A}, \mu_{N})&=A\cdot \iota_{C}^{4}+B_{1}\cdot \iota_{C}^{3}\ \iota_{A}+B_{2} \cdot \iota_{C}^{2}\ \iota_{A}^{2}+B_{3}\cdot \iota_{C}\ \iota_{A}^{3}\\
	&\quad +C_{1}\cdot \iota_{C}^{3}\ \iota_{N}+C_{2}\cdot \iota_{C}^{2}\ \iota_{N}^{2}+C_{3}\cdot \iota_{C}\ \iota_{N}^{3}\\
	&\quad +D_{1}\cdot \iota_{C}^{2}\ \iota_{A}\ \iota_{N}+D_{2}\cdot \iota_{C}\ \iota_{A}^{2} \ \iota_{N}+D_{3}\cdot \iota_{C}\ \iota_{A}\ \iota_{N}^{2},
\end{align*}
with 
\begin{align*}
	&A=2\cdot \{f*f\}(0), ~B_{1}=6 \cdot \{f*f\}(\mu_{A}-\mu_{C}), \\
	&B_{2}=4 \cdot \{f*f\}(0)+2 \cdot \{f*f\}(2\mu_{A}-2\mu_{C}),
	~B_{3}=2\cdot \{f*f\}(\mu_{A}-\mu_{C}),  \\
	&C_{1}=6\cdot \{f*f\}(\mu_{N}-\mu_{C}), 
	~C_{2}=4\cdot \{f*f\}(0)+2\cdot \{f*f\}(2\mu_{N}-2\mu_{C}), \\
	&C_{3}=2\cdot \{f*f\}(\mu_{N}-\mu_{C}),
	~D_{1}=8 \cdot \{f*f\}(\mu_{A}-\mu_{N})+4\cdot \{f*f\}(\mu_{A}+\mu_{N}-2\mu_{C}),\\
	&D_{2}=4 \cdot \{f*f\}(\mu_{N}-\mu_{C})+2 \cdot \{f*f\}(2\mu_{A}-\mu_{N}-\mu_{C}),\\
	&D_{3}=4 \cdot \{f*f\}(\mu_{A}-\mu_{C})+2 \cdot \{f*f\}(\mu_{A}-2\mu_{N}+\mu_{C}),
\end{align*}
\endgroup
where $\{f*g \}$ denotes the convolution of two functions $f$ and $g$, that is, $\{f*g\}(x)=\int_{-\infty}^{+\infty}f(x-y)g(y)dy$. Under the same set-up but testing the null hypothesis using the sign test, we have
\begin{equation*}
    \lim_{n \rightarrow \infty}\frac{I_{n,1}}{I_{n,2}}=\frac{\psi_{\text{sign}}^{2}(\iota_{C,2},\iota_{A,2},\iota_{N,2},\mu_{C,2}, \mu_{A,2}, \mu_{N,2})}{\psi_{\text{sign}}^{2}(\iota_{C,1},\iota_{A,1},\iota_{N,1},\mu_{C,1}, \mu_{A,1}, \mu_{N,1})},
\end{equation*}
where
\begin{equation*}
    \psi_{\text{sign}} = \iota_{C}^{2} \ f(0)+\iota_{C}\ \iota_{N} \ f(\mu_{N}-\mu_{C})+\iota_{C}\ \iota_{A}\ f(-\mu_{A}+\mu_{C}).
\end{equation*}
\end{Theorem}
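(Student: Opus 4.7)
The plan is to use the Pitman (relative) efficiency framework: since $\lim_{n \to \infty} I_{n,1}/I_{n,2} = (c_2/c_1)^2$ whenever two test sequences admit regular local expansions, the theorem reduces to explicit formulas for the efficacies $c_{\mathrm{sign}}$ and $c_{\mathrm{wilc}}$ of the sign and Wilcoxon signed rank tests applied to $\{Y_i - \beta_0 S_i\}_{i=1}^I$ under the IV setup. The claimed $\psi_{\mathrm{sign}}$ and $\psi_{\mathrm{wilc}}$ will turn out to be these efficacies up to common constants, which cancel in the ratio. The regularity hypotheses on $f$ (continuous differentiability, vanishing tails, $L^2$-integrability) are exactly what is needed to justify the standard first-order local expansions below.

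For both tests, the starting decomposition under the proportional treatment effect model is $Y_i - \beta_0 S_i = (\beta - \beta_0) S_i + W_i$ with $W_i = (Z_{i1}-Z_{i2})(\epsilon_{i1}-\epsilon_{i2})$. Under IV randomization $W_i$ is symmetric about $0$, and under $\beta_n = \beta_0 + \Delta/\sqrt{n}$ the observable in pair $i$ experiences the additive shift $\Delta S_i/\sqrt{n}$. I would then condition on the compliance pattern $(T_{i1},T_{i2}) \in \{C,A,N\}^2$, which occurs independently across pairs with probability $\iota_{T_{i1}} \iota_{T_{i2}}$. Given the pattern and the encouragement orientation, $S_i \in \{0,\pm 1\}$ is a deterministic function (nonzero exactly for the patterns $(C,C)$, $(C,A)$, $(C,N)$, $(A,N)$ and their reflections), and by the location-shift structure on the $r_C$ distributions of the three types, the conditional density of $W_i$ evaluated at any fixed point is a value of $f$ (resp.\ $\{f*f\}$, for Wilcoxon where sums of two $W_i$'s appear) at a shift argument that is an integer combination of $\mu_C-\mu_A$, $\mu_C-\mu_N$, $\mu_A-\mu_N$.

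For the sign test $T_{\mathrm{sign}} = \sum_i \mathbbm{1}(Y_i - \beta_0 S_i > 0)$, a first-order Taylor expansion yields $E_{\beta_n}[T_{\mathrm{sign}}]/I = 1/2 + (\Delta/\sqrt{n})\, \mathbb{E}[S_i \phi_i(0)] + o(n^{-1/2})$ with $\phi_i$ the conditional density of $W_i$; since the null variance per summand is $1/4$, the efficacy is $2\,\mathbb{E}[S_i \phi_i(0)]$. Enumerating the nine compliance patterns — the $(A,N)/(N,A)$ pattern drops out because $S_i$ is $\pm 1$ with equal probability but $\phi_i(0)$ does not depend on orientation, and $(A,A)$, $(N,N)$ give $S_i \equiv 0$ — leaves exactly the three surviving terms of $\psi_{\mathrm{sign}}$. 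For the Wilcoxon signed rank test I would use the Hoeffding U-statistic form $T_W = \sum_{i \le j} \mathbbm{1}(W_i + W_j + (\beta - \beta_0)(S_i + S_j) > 0)$. The analogous local expansion yields an efficacy proportional to $\mathbb{E}[(S_i + S_j)\, \phi_{i,j}(0)]$, where $\phi_{i,j}$ is the conditional density of $W_i + W_j$ and evaluates to $\{f*f\}$ at a shift argument determined by the four compliance types and the two orientations. Grouping the $9 \times 9$ pair-of-pattern combinations by their monomial in $(\iota_C,\iota_A,\iota_N)$ and by distinct shift argument should produce exactly the ten coefficient blocks $A, B_1, B_2, B_3, C_1, C_2, C_3, D_1, D_2, D_3$ of the statement.

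The main obstacle is the combinatorial bookkeeping for Wilcoxon, where the $9 \times 9$ enumeration with four IV-randomization orientations per pair must be executed without sign or count errors; the reflection symmetry $A \leftrightarrow N$ (i.e., the simultaneous swaps $\mu_A \leftrightarrow \mu_N$, $\iota_A \leftrightarrow \iota_N$) should be exploited to halve the essentially distinct cases. Once $\psi_{\mathrm{sign}}$ and $\psi_{\mathrm{wilc}}$ are in hand, Theorem \ref{thm: ARE in independent case} follows immediately: substituting $\mu_C = \mu_A = \mu_N$ collapses every $\{f*f\}(\cdot)$ to $\{f*f\}(0)$ and every $f(\cdot)$ to $f(0)$, and using $\iota_C + \iota_A + \iota_N = 1$ the resulting polynomials in $(\iota_C,\iota_A,\iota_N)$ factor as $2\iota_C (\iota_C+\iota_A+\iota_N)^3 \{f*f\}(0) = 2\iota_C \{f*f\}(0)$ for Wilcoxon and $\iota_C (\iota_C+\iota_A+\iota_N) f(0) = \iota_C f(0)$ for the sign test; both are proportional to $\iota_C$, so the squared ratio $\psi^2_2/\psi^2_1$ equals $\iota_{C,2}^2/\iota_{C,1}^2$ as claimed.
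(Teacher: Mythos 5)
Your proposal is correct and follows essentially the same route as the paper: both compute the Pitman efficacies of the sign and Wilcoxon tests via Lehmann's Examples 3.3.5/3.3.6, enumerating the compliance-pattern combinations of the mixture distribution of $Y_i-\beta_0 S_i$ (the paper's $G=\sum_{k=1}^{8}G_k$ is exactly your conditioning on $(T_{i1},T_{i2})$ and orientation), and both reduce to the ratio of squared slopes since the null variances are distribution-free and cancel. The only cosmetic difference is that you obtain the $\{f*f\}$ terms directly as the density of $W_i+W_j$, whereas the paper differentiates $\sum_{k,l}\int G_k(-x)g_l(x)\,dx$ term by term and needs its integration-by-parts identity $F*f'=f*f$ (Lemma 1) plus the symmetry of $f$ to reach the same convolution expressions.
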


\begin{Lemma}\label{lemma: convolution}
Given two functions $f$ and $g$, let $\{f*g \}$ denote the convolution of $f$ and $g$, that is, $\{f*g\}(x)=\int_{-\infty}^{+\infty}f(x-y)g(y)dy$. Let $F(x)$ be a distribution function, $f(x)$ be the corresponding density function, and $f^{\prime}(x)$ be the derivative function of $f(x)$. Suppose that $f^{\prime}$ is continuous and $f \in L^{2}$. If $f(x) \rightarrow 0$ as $x \rightarrow \infty$, we have, 
\begin{equation*}
	F*f^{\prime}=f*f.
\end{equation*} 

If we additionally assume that $f(x)=f(-x)$ for any $x \in \mathbb{R}$, we have for any $x \in 
\mathbb{R}$,
\begin{equation*}
	\{F*f^{\prime}\}(x)=\{ F*f^{\prime} \}(-x), \quad \{f*f\}(x)=\{f*f\}(-x).
\end{equation*}
\end{Lemma}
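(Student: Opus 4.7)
The plan is to prove the identity $F \ast f' = f \ast f$ by integration by parts on the convolution integral, and then deduce the two symmetry claims from this equality together with an elementary change of variables.

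For the first identity, I would write
\[
(F \ast f')(x) \;=\; \int_{-\infty}^{\infty} F(x-y)\, f'(y)\, dy
\]
and apply integration by parts with $u = F(x-y)$, so $du = -f(x-y)\, dy$, and $dv = f'(y)\, dy$, so $v = f(y)$. This produces a boundary term $\bigl[F(x-y)\, f(y)\bigr]_{-\infty}^{\infty}$ plus the integral $\int_{-\infty}^{\infty} f(x-y)\, f(y)\, dy = (f \ast f)(x)$. The boundary term vanishes at $+\infty$ because $F(x-y)\to 0$ (since $F$ is a cdf) while $f(y)\to 0$ by hypothesis; at $-\infty$, $F(x-y)\to 1$ and $f(y)\to 0$ (which we use analogously, since a continuous $L^2$ density that decays at $+\infty$ must also decay at $-\infty$ under the stated regularity, or we simply adopt the two-sided decay as part of the hypothesis as the authors appear to do).

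For the symmetry claim, assume $f(x)=f(-x)$ for all $x$. Starting from
\[
(f\ast f)(-x) \;=\; \int_{-\infty}^{\infty} f(-x-y)\, f(y)\, dy,
\]
I would substitute $y = -y'$, flipping the limits, which gives $\int_{-\infty}^{\infty} f(-x+y')\, f(-y')\, dy'$. Because $f$ is even, $f(-x+y') = f(x-y')$ and $f(-y') = f(y')$, so the expression collapses to $\int f(x-y')\, f(y')\, dy' = (f\ast f)(x)$. Thus $f\ast f$ is even, and evenness of $F\ast f'$ follows immediately from the already-established identity $F\ast f' = f\ast f$.

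The only step that requires care is justifying the vanishing of the boundary terms in the integration by parts; everything else is a mechanical change of variables. The hypotheses on $f$ (continuous differentiability, $L^2$, and the decay $f(x)\to 0$ at infinity) are precisely what make the boundary argument go through, so I do not anticipate any deeper analytic obstacle beyond writing this step out cleanly.
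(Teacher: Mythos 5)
Your proof is correct and follows essentially the same route as the paper's: a single integration by parts linking $F*f^{\prime}$ to $f*f$ (the paper starts from $f*f$ and you start from $F*f^{\prime}$, but the boundary-term analysis is identical), followed by the same change of variables $y \mapsto -y$ for the evenness claim. One small caveat: your parenthetical assertion that a continuous $L^{2}$ density decaying at $+\infty$ must also decay at $-\infty$ is not true in general, but your fallback of reading the hypothesis as two-sided decay is exactly what the paper itself does, so nothing is lost.
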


\begin{proof}
For any $x \in \mathbb{R}$, since $\lim \limits_{x \rightarrow -\infty} F(x)=0$, $\lim \limits_{x \rightarrow +\infty} F(x)=1$, $\lim \limits_{x \rightarrow -\infty}f(x)=\lim \limits_{x \rightarrow +\infty}f(x)=0$, we have
   \begin{align*}
   \{f*f\}(x)&=\int_{-\infty}^{+\infty}f(x-y)f(y)dy=-F(x-y)f(y)|_{-\infty}^{+\infty}+\int_{-\infty}^{+\infty}F(x-y)f^{\prime}(y)dy\\
   &=\int_{-\infty}^{+\infty}F(x-y)f^{\prime}(y)dy=\{F*f^{\prime}\}(x).
   \end{align*}

If for any $x\in \mathbb{R}$, we have $f(x)=f(-x)$, then we have
\begin{align*}
	\{f*f\}(x)&=\int_{-\infty}^{+\infty}f(x-y)f(y)dy=\int_{-\infty}^{+\infty}f(-x+y)f(y)dy\\
	&=\int_{-\infty}^{+\infty}f(y)f(x+y)dy=\int_{-\infty}^{+\infty}f(y)f(-x-y)dy\\
	&=\{f*f\}(-x).
\end{align*}

Thus, we have
\begin{equation*}
	\{F*f^{\prime}\}(x)=\{f*f\}(x)=\{f*f\}(-x)=\{F*f^{\prime}\}(-x).
\end{equation*}

\end{proof}

\begin{proof}[Proof of Theorem~\ref{thm: complete ARE in general case}] 
\label{app subsec: proof of power wilcoxon}
Under the setting as in Section~\ref{subsec: Quantitative evaluation of the trade-off between IV strength and sample size}, let $G$ be the distribution function of $Y_{i}-\beta_{0}D_{i}$. As described in \cite{ertefaie2018quantitative}, we have $G(x)=\sum_{i=1}^{8}G_{i}(x)$ where
	\begin{align*}
		&G_{1}(x)=\iota_{C}^{2} \ F(x-\beta+\beta_{0}),\\
		&G_{2}(x)=\iota_{C}\ \iota_{N} \ F(x+\mu_{N}-\mu_{C}-\beta+\beta_{0}),\\
		&G_{3}(x)=\iota_{C}\ \iota_{N}\ F(x-\mu_{N}+\mu_{C}),\\
		&G_{4}(x)=\iota_{C}\ \iota_{A}\ F(x-\mu_{A}+\mu_{C}-\beta+\beta_{0}),\\
		&G_{5}(x)=\iota_{C}\ \iota_{A}\ F(x+\mu_{A}-\mu_{C}),\\		
		&G_{6}(x)=(\iota_{A}^{2}+\iota_{N}^{2}) \ F(x),\\
		&G_{7}(x)=\iota_{A}\ \iota_{N} \ F(x-\mu_{A}+\mu_{N}-\beta+\beta_{0}),\\
		&G_{8}(x)=\iota_{A}\ \iota_{N} \ F(x-\mu_{N}+\mu_{A}+\beta-\beta_{0}). 
	\end{align*}
Let $g(x)=G^{\prime}(x)=\sum_{i=1}^{8}g_{i}(x)$, where $g_{i}(x)=G_{i}^{\prime}(x)$ for all $i$. Following Example 3.3.6 in \cite{lehmann2004elements}, we just need to figure out $\varphi^{\prime}(\beta_{0})$, the derivative of $\varphi(\beta)=P_{\beta}(X+Y>0)$ at $\beta=\beta_{0}$, where $X \sim G$, $Y \sim G$, $X$ and $Y$ are independent. Then the ARE under the Wilcoxon signed rank test is the ratio of $\{\varphi^{\prime}(\beta_{0})\}^{2}$ under $\varrho_{1}=(\iota_{C,1},\iota_{A,1},\iota_{N,1},\mu_{C,1}, \mu_{A,1}, \mu_{N,1})$ and $\varrho_{2}=(\iota_{C,2},\iota_{A,2},\iota_{N,2},\mu_{C,2}, \mu_{A,2}, \mu_{N,2})$. We have
    \begin{align*}
    	1-\varphi(\beta)&=P_{\beta}(X+Y\leq 0) = \iint_{x+y\leq 0}g(x,y)dydx\\
    	 &=\int_{-\infty}^{+\infty}\Big[ \int_{-\infty}^{-x}g(y)dy \Big] g(x)dx =\int_{-\infty}^{+\infty}G(-x)g(x)dx\\
    	 &=\sum_{i=1}^{8}\sum_{j=1}^{8}\int_{-\infty}^{+\infty}G_{i}(-x)g_{j}(x)dx.
    \end{align*}
    Thus, we have,
    \begin{align*}
    	\varphi^{\prime}(\beta_{0})
    	&= -\sum_{i=1}^{8}\sum_{j=1}^{8}\frac{\partial}{\partial \beta} \int_{-\infty}^{+\infty}G_{i}(-x)g_{j}(x)dx \Big|_{\beta=\beta_{0}}   \\
    	&=\sum_{i=1}^{8}\sum_{j=1}^{8}\Big(- \int_{-\infty}^{+\infty}\frac{\partial}{\partial \beta} \Big[G_{i}(-x)g_{j}(x)\Big]dx \Big|_{\beta=\beta_{0}} \Big).
    \end{align*}
    
    \begingroup
\allowdisplaybreaks
As discussed in Section~\ref{subsec: Quantitative evaluation of the trade-off between IV strength and sample size}, if the IV is valid, then $F(\cdot)$ is symmetric about zero. We have
    \begin{align*}
    	&-\int_{-\infty}^{+\infty}\frac{\partial}{\partial \beta} \Big[G_{1}(-x)g_{1}(x)\Big]dx \Big|_{\beta=\beta_{0}}=\iota_{C}^{4}\ \Big(\int_{-\infty}^{+\infty}f^{2}(x)dx+\int_{-\infty}^{+\infty}F(-x)f^{\prime}(x)dx \Big),	\\
    	&-\int_{-\infty}^{+\infty}\frac{\partial}{\partial \beta} \Big[G_{1}(-x)g_{2}(x)\Big]dx \Big|_{\beta=\beta_{0}}=\iota_{C}^{3}\ \iota_{N}\  \Big(\int_{-\infty}^{+\infty}f(x)f(x+\mu_{N}-\mu_{C})dx\\
    	&\qquad \qquad \qquad \qquad \qquad \qquad \qquad \qquad \qquad \qquad \qquad +\int_{-\infty}^{+\infty}F(-x)f^{\prime}(x+\mu_{N}-\mu_{C})dx \Big), \\
    	&-\int_{-\infty}^{+\infty}\frac{\partial}{\partial \beta} \Big[G_{1}(-x)g_{3}(x)\Big]dx \Big|_{\beta=\beta_{0}}=\iota_{C}^{3}\ \iota_{N} \  \Big(\int_{-\infty}^{+\infty}f(x)f(x-\mu_{N}+\mu_{C})dx\Big), \\
    	&-\int_{-\infty}^{+\infty}\frac{\partial}{\partial \beta} \Big[G_{1}(-x)g_{4}(x)\Big]dx \Big|_{\beta=\beta_{0}}=\iota_{C}^{3}\ \iota_{A} \  \Big(\int_{-\infty}^{+\infty}f(x)f(x-\mu_{A}+\mu_{C})dx\\
    	&\qquad \qquad \qquad \qquad \qquad \qquad \qquad \qquad \qquad \qquad \qquad +\int_{-\infty}^{+\infty}F(-x)f^{\prime}(x-\mu_{A}+\mu_{C})dx \Big),\\
    	&-\int_{-\infty}^{+\infty}\frac{\partial}{\partial \beta} \Big[G_{1}(-x)g_{5}(x)\Big]dx \Big|_{\beta=\beta_{0}}=\iota_{C}^{3}\ \iota_{A} \ \Big(\int_{-\infty}^{+\infty}f(x)f(x+\mu_{A}-\mu_{C})dx \Big),\\
    	&-\int_{-\infty}^{+\infty}\frac{\partial}{\partial \beta} \Big[G_{1}(-x)g_{6}(x)\Big]dx \Big|_{\beta=\beta_{0}}=\iota_{C}^{2}\ (\iota_{A}^{2}+\iota_{N}^{2}) \ \Big(\int_{-\infty}^{+\infty}f^{2}(x)dx\Big), \\
      	&-\int_{-\infty}^{+\infty}\frac{\partial}{\partial \beta} \Big[G_{1}(-x)g_{7}(x)\Big]dx \Big|_{\beta=\beta_{0}}=\iota_{C}^{2}\ \iota_{A} \ \iota_{N}  \Big(\int_{-\infty}^{+\infty}f(x)f(x-\mu_{A}+\mu_{N})dx\\
    	&\qquad \qquad \qquad \qquad \qquad \qquad \qquad \qquad \qquad \qquad \qquad
    	+\int_{-\infty}^{+\infty}F(-x)f^{\prime}(x-\mu_{A}+\mu_{N})dx \Big),\\
    	&-\int_{-\infty}^{+\infty}\frac{\partial}{\partial \beta} \Big[G_{1}(-x)g_{8}(x)\Big]dx \Big|_{\beta=\beta_{0}}=\iota_{C}^{2}\ \iota_{A}\ \iota_{N} \ \Big(\int_{-\infty}^{+\infty}f(x)f(x-\mu_{N}+\mu_{A})dx\\
    	&\qquad \qquad \qquad \qquad \qquad \qquad \qquad \qquad \qquad \qquad \qquad
    	-\int_{-\infty}^{+\infty}F(-x)f^{\prime}(x-\mu_{N}+\mu_{A})dx \Big),\\
    	&-\int_{-\infty}^{+\infty}\frac{\partial}{\partial \beta} \Big[G_{2}(-x)g_{1}(x)\Big]dx \Big|_{\beta=\beta_{0}}=\iota_{C}^{3}\ \iota_{N} \ \Big(\int_{-\infty}^{+\infty}f(x-\mu_{N}+\mu_{C})f(x)dx\\
    	&\qquad \qquad \qquad \qquad \qquad \qquad \qquad \qquad \qquad \qquad \qquad
    	+\int_{-\infty}^{+\infty}F(-x+\mu_{N}-\mu_{C})f^{\prime}(x)dx \Big), \\
    	&-\int_{-\infty}^{+\infty}\frac{\partial}{\partial \beta} \Big[G_{2}(-x)g_{2}(x)\Big]dx \Big|_{\beta=\beta_{0}}=\iota_{C}^{2}\ \iota_{N}^{2} \  \Big(\int_{-\infty}^{+\infty}f(x-\mu_{N}+\mu_{C})f(x+\mu_{N}-\mu_{C})dx\\
    	&\qquad \qquad \qquad \qquad \qquad \qquad \qquad \qquad \qquad \qquad \qquad
    	+\int_{-\infty}^{+\infty}F(-x+\mu_{N}-\mu_{C})f^{\prime}(x+\mu_{N}-\mu_{C})dx \Big),\\
    	&-\int_{-\infty}^{+\infty}\frac{\partial}{\partial \beta} \Big[G_{2}(-x)g_{3}(x)\Big]dx \Big|_{\beta=\beta_{0}}=\iota_{C}^{2}\ \iota_{N}^{2} \  \Big(\int_{-\infty}^{+\infty}f^{2}(x-\mu_{N}+\mu_{C})dx \Big),\\
    	&-\int_{-\infty}^{+\infty}\frac{\partial}{\partial \beta} \Big[G_{2}(-x)g_{4}(x)\Big]dx \Big|_{\beta=\beta_{0}}=\iota_{C}^{2}\ \iota_{A} \ \iota_{N}\ \Big(\int_{-\infty}^{+\infty}f(x-\mu_{N}+\mu_{C})f(x-\mu_{A}+\mu_{C})dx\\
    	&\qquad \qquad \qquad \qquad \qquad \qquad \qquad \qquad \qquad \qquad \qquad
    	+\int_{-\infty}^{+\infty}F(-x+\mu_{N}-\mu_{C})f^{\prime}(x-\mu_{A}+\mu_{C})dx \Big),\\
    	&-\int_{-\infty}^{+\infty}\frac{\partial}{\partial \beta} \Big[G_{2}(-x)g_{5}(x)\Big]dx \Big|_{\beta=\beta_{0}}=\iota_{C}^{2}\ \iota_{A}\ \iota_{N} \ \Big(\int_{-\infty}^{+\infty}f(x-\mu_{N}+\mu_{C})f(x+\mu_{A}-\mu_{C})dx \Big),\\
       	&-\int_{-\infty}^{+\infty}\frac{\partial}{\partial \beta} \Big[G_{2}(-x)g_{6}(x)\Big]dx \Big|_{\beta=\beta_{0}}=(\iota_{C}\ \iota_{A}^{2}\ \iota_{N}+\iota_{C}\ \iota_{N}^{3})\  \Big(\int_{-\infty}^{+\infty}f(x-\mu_{N}+\mu_{C})f(x)dx \Big),\\
    	&-\int_{-\infty}^{+\infty}\frac{\partial}{\partial \beta} \Big[G_{2}(-x)g_{7}(x)\Big]dx \Big|_{\beta=\beta_{0}}=\iota_{C} \ \iota_{A} \ \iota_{N}^{2} \ \Big( \int_{-\infty}^{+\infty}f(x-\mu_{N}+\mu_{C})f(x-\mu_{A}+\mu_{N})dx\\
    	&\qquad \qquad \qquad \qquad \qquad \qquad \qquad \qquad \qquad \qquad \qquad
    	+\int_{-\infty}^{+\infty}F(-x+\mu_{N}-\mu_{C})f^{\prime}(x-\mu_{A}+\mu_{N})dx\Big), \\
    	&-\int_{-\infty}^{+\infty}\frac{\partial}{\partial \beta} \Big[G_{2}(-x)g_{8}(x)\Big]dx \Big|_{\beta=\beta_{0}}=\iota_{C} \ \iota_{A} \ \iota_{N}^{2} \ \Big( \int_{-\infty}^{+\infty}f(x-\mu_{N}+\mu_{C})f(x-\mu_{N}+\mu_{A})dx\\
    	&\qquad \qquad \qquad \qquad \qquad \qquad \qquad \qquad \qquad \qquad \qquad
    	-\int_{-\infty}^{+\infty}F(-x+\mu_{N}-\mu_{C})f^{\prime}(x-\mu_{N}+\mu_{A})dx\Big), \\
       	&-\int_{-\infty}^{+\infty}\frac{\partial}{\partial \beta} \Big[G_{3}(-x)g_{1}(x)\Big]dx \Big|_{\beta=\beta_{0}}=\iota_{C}^{3}\ \iota_{N} \ \Big(\int_{-\infty}^{+\infty}F(-x-\mu_{N}+\mu_{C})f^{\prime}(x)dx \Big),\\
    	&-\int_{-\infty}^{+\infty}\frac{\partial}{\partial \beta} \Big[G_{3}(-x)g_{2}(x)\Big]dx \Big|_{\beta=\beta_{0}}=\iota_{C}^{2} \ \iota_{N}^{2} \ \Big(\int_{-\infty}^{+\infty}F(-x-\mu_{N}+\mu_{C})f^{\prime}(x+\mu_{N}-\mu_{C})dx \Big),\\
    	&-\int_{-\infty}^{+\infty}\frac{\partial}{\partial \beta} \Big[G_{3}(-x)g_{3}(x)\Big]dx \Big|_{\beta=\beta_{0}}=0,\\
       	&-\int_{-\infty}^{+\infty}\frac{\partial}{\partial \beta} \Big[G_{3}(-x)g_{4}(x)\Big]dx \Big|_{\beta=\beta_{0}}=\iota_{C}^{2}\ \iota_{A}\ \iota_{N}\ \Big(\int_{-\infty}^{+\infty}F(-x-\mu_{N}+\mu_{C})f^{\prime}(x-\mu_{A}+\mu_{C})dx \Big),\\    
    	&-\int_{-\infty}^{+\infty}\frac{\partial}{\partial \beta} \Big[G_{3}(-x)g_{5}(x)\Big]dx \Big|_{\beta=\beta_{0}}=0, \\
    	&-\int_{-\infty}^{+\infty}\frac{\partial}{\partial \beta} \Big[G_{3}(-x)g_{6}(x)\Big]dx \Big|_{\beta=\beta_{0}}=0,\\
    	&-\int_{-\infty}^{+\infty}\frac{\partial}{\partial \beta} \Big[G_{3}(-x)g_{7}(x)\Big]dx \Big|_{\beta=\beta_{0}}=\iota_{C}\ \iota_{A}\ \iota_{N}^{2}\ \Big(\int_{-\infty}^{+\infty}F(-x-\mu_{N}+\mu_{C})f^{\prime}(x-\mu_{A}+\mu_{N})dx \Big),\\
    	&-\int_{-\infty}^{+\infty}\frac{\partial}{\partial \beta} \Big[G_{3}(-x)g_{8}(x)\Big]dx \Big|_{\beta=\beta_{0}}=\iota_{C}\ \iota_{A}\ \iota_{N}^{2}\ \Big(-\int_{-\infty}^{+\infty}F(-x-\mu_{N}+\mu_{C})f^{\prime}(x-\mu_{N}+\mu_{A})dx \Big),\\
    	&-\int_{-\infty}^{+\infty}\frac{\partial}{\partial \beta} \Big[G_{4}(-x)g_{1}(x)\Big]dx \Big|_{\beta=\beta_{0}}=\iota_{C}^{3}\ \iota_{A}\ \Big(\int_{-\infty}^{+\infty}f(x+\mu_{A}-\mu_{C})f(x)dx\\
    	&\qquad \qquad \qquad \qquad \qquad \qquad \qquad \qquad \qquad \qquad \qquad
    	+\int_{-\infty}^{+\infty}F(-x-\mu_{A}+\mu_{C})f^{\prime}(x)dx \Big),	\\
    	&-\int_{-\infty}^{+\infty}\frac{\partial}{\partial \beta} \Big[G_{4}(-x)g_{2}(x)\Big]dx \Big|_{\beta=\beta_{0}}=\iota_{C}^{2}\ \iota_{A}\ \iota_{N}\  \Big(\int_{-\infty}^{+\infty}f(x+\mu_{A}-\mu_{C})f(x+\mu_{N}-\mu_{C})dx\\
    	&\qquad \qquad \qquad \qquad \qquad \qquad \qquad \qquad \qquad \qquad \qquad
    	+\int_{-\infty}^{+\infty}F(-x-\mu_{A}+\mu_{C})f^{\prime}(x+\mu_{N}-\mu_{C})dx \Big), \\
    	&-\int_{-\infty}^{+\infty}\frac{\partial}{\partial \beta} \Big[G_{4}(-x)g_{3}(x)\Big]dx, \Big|_{\beta=\beta_{0}}=\iota_{C}^{2}\ \iota_{A}\ \iota_{N}\  \Big(\int_{-\infty}^{+\infty}f(x+\mu_{A}-\mu_{C})f(x-\mu_{N}+\mu_{C})dx\Big), \\
    	&-\int_{-\infty}^{+\infty}\frac{\partial}{\partial \beta} \Big[G_{4}(-x)g_{4}(x)\Big]dx \Big|_{\beta=\beta_{0}}=\iota_{C}^{2}\ \iota_{A}^{2} \  \Big(\int_{-\infty}^{+\infty}f(x+\mu_{A}-\mu_{C})f(x-\mu_{A}+\mu_{C})dx,\\
    	&\qquad \qquad \qquad \qquad \qquad \qquad \qquad \qquad \qquad \qquad \qquad
    	+\int_{-\infty}^{+\infty}F(-x-\mu_{A}+\mu_{C})f^{\prime}(x-\mu_{A}+\mu_{C})dx \Big),\\
    	&-\int_{-\infty}^{+\infty}\frac{\partial}{\partial \beta} \Big[G_{4}(-x)g_{5}(x)\Big]dx \Big|_{\beta=\beta_{0}}=\iota_{C}^{2}\ \iota_{A}^{2} \ \Big(\int_{-\infty}^{+\infty}f^{2}(x+\mu_{A}-\mu_{C})dx \Big),\\
    	&-\int_{-\infty}^{+\infty}\frac{\partial}{\partial \beta} \Big[G_{4}(-x)g_{6}(x)\Big]dx \Big|_{\beta=\beta_{0}}=(\iota_{C}\ \iota_{A}^{3}+\iota_{C}\ \iota_{A}\ \iota_{N}^{2} ) \ \Big(\int_{-\infty}^{+\infty}f(x+\mu_{A}-\mu_{C})f(x)dx\Big), \\
      	&-\int_{-\infty}^{+\infty}\frac{\partial}{\partial \beta} \Big[G_{4}(-x)g_{7}(x)\Big]dx \Big|_{\beta=\beta_{0}}=\iota_{C}\ \iota_{A}^{2} \ \iota_{N}  \Big(\int_{-\infty}^{+\infty}f(x+\mu_{A}-\mu_{C})f(x-\mu_{A}+\mu_{N})dx\\
    	&\qquad \qquad \qquad \qquad \qquad \qquad \qquad \qquad \qquad \qquad \qquad
    	+\int_{-\infty}^{+\infty}F(-x-\mu_{A}+\mu_{C})f^{\prime}(x-\mu_{A}+\mu_{N})dx \Big),\\
    	&-\int_{-\infty}^{+\infty}\frac{\partial}{\partial \beta} \Big[G_{4}(-x)g_{8}(x)\Big]dx \Big|_{\beta=\beta_{0}}=\iota_{C}\ \iota_{A}^{2} \ \iota_{N}  \Big(\int_{-\infty}^{+\infty}f(x+\mu_{A}-\mu_{C})f(x-\mu_{N}+\mu_{A})dx\\
    	&\qquad \qquad \qquad \qquad \qquad \qquad \qquad \qquad \qquad \qquad \qquad
    	-\int_{-\infty}^{+\infty}F(-x-\mu_{A}+\mu_{C})f^{\prime}(x-\mu_{N}+\mu_{A})dx \Big), \\
    	&-\int_{-\infty}^{+\infty}\frac{\partial}{\partial \beta} \Big[G_{5}(-x)g_{1}(x)\Big]dx \Big|_{\beta=\beta_{0}}=\iota_{C}^{3}\ \iota_{A} \ \Big(\int_{-\infty}^{+\infty}F(-x+\mu_{A}-\mu_{C})f^{\prime}(x)dx \Big), \\
    	&-\int_{-\infty}^{+\infty}\frac{\partial}{\partial \beta} \Big[G_{5}(-x)g_{2}(x)\Big]dx \Big|_{\beta=\beta_{0}}=\iota_{C}^{2}\ \iota_{A}\ \iota_{N} \  \Big(\int_{-\infty}^{+\infty}F(-x+\mu_{A}-\mu_{C})f^{\prime}(x+\mu_{N}-\mu_{C})dx \Big),\\
    	&-\int_{-\infty}^{+\infty}\frac{\partial}{\partial \beta} \Big[G_{5}(-x)g_{3}(x)\Big]dx \Big|_{\beta=\beta_{0}}=0,\\
    	&-\int_{-\infty}^{+\infty}\frac{\partial}{\partial \beta} \Big[G_{5}(-x)g_{4}(x)\Big]dx \Big|_{\beta=\beta_{0}}=\iota_{C}^{2}\ \iota_{A}^{2}\ \Big(\int_{-\infty}^{+\infty}F(-x+\mu_{A}-\mu_{C})f^{\prime}(x-\mu_{A}+\mu_{C})dx \Big),\\
    	&-\int_{-\infty}^{+\infty}\frac{\partial}{\partial \beta} \Big[G_{5}(-x)g_{5}(x)\Big]dx \Big|_{\beta=\beta_{0}}=0,\\
       	&-\int_{-\infty}^{+\infty}\frac{\partial}{\partial \beta} \Big[G_{5}(-x)g_{6}(x)\Big]dx \Big|_{\beta=\beta_{0}}=0,\\
    	&-\int_{-\infty}^{+\infty}\frac{\partial}{\partial \beta} \Big[G_{5}(-x)g_{7}(x)\Big]dx \Big|_{\beta=\beta_{0}}=\iota_{C} \ \iota_{A}^{2} \ \iota_{N} \ \Big( \int_{-\infty}^{+\infty}F(-x+\mu_{A}-\mu_{C})f^{\prime}(x-\mu_{A}+\mu_{N})dx\Big), \\
    	&-\int_{-\infty}^{+\infty}\frac{\partial}{\partial \beta} \Big[G_{5}(-x)g_{8}(x)\Big]dx \Big|_{\beta=\beta_{0}}=\iota_{C} \ \iota_{A}^{2} \ \iota_{N} \ \Big(-\int_{-\infty}^{+\infty}F(-x+\mu_{A}-\mu_{C})f^{\prime}(x-\mu_{N}+\mu_{A})dx\Big), \\
       	&-\int_{-\infty}^{+\infty}\frac{\partial}{\partial \beta} \Big[G_{6}(-x)g_{1}(x)\Big]dx \Big|_{\beta=\beta_{0}}=(\iota_{C}^{2}\ \iota_{A}^{2}+ \iota_{C}^{2} \ \iota_{N}^{2}) \Big(\int_{-\infty}^{+\infty}F(-x)f^{\prime}(x)dx \Big),\\
    	&-\int_{-\infty}^{+\infty}\frac{\partial}{\partial \beta} \Big[G_{6}(-x)g_{2}(x)\Big]dx \Big|_{\beta=\beta_{0}}=(\iota_{C}\ \iota_{A}^{2} \ \iota_{N}+ \iota_{C}\ \iota_{N}^{3}) \ \Big(\int_{-\infty}^{+\infty}F(-x)f^{\prime}(x+\mu_{N}-\mu_{C})dx \Big),\\
    	&-\int_{-\infty}^{+\infty}\frac{\partial}{\partial \beta} \Big[G_{6}(-x)g_{3}(x)\Big]dx \Big|_{\beta=\beta_{0}}=0,\\
       	&-\int_{-\infty}^{+\infty}\frac{\partial}{\partial \beta} \Big[G_{6}(-x)g_{4}(x)\Big]dx \Big|_{\beta=\beta_{0}}=(\iota_{C}\ \iota_{A}^{3}+ \iota_{C}\ \iota_{A}\ \iota_{N}^{2}) \Big(\int_{-\infty}^{+\infty}F(-x)f^{\prime}(x-\mu_{A}+\mu_{C})dx \Big),\\    
    	&-\int_{-\infty}^{+\infty}\frac{\partial}{\partial \beta} \Big[G_{6}(-x)g_{5}(x)\Big]dx \Big|_{\beta=\beta_{0}}=0, \\
    	&-\int_{-\infty}^{+\infty}\frac{\partial}{\partial \beta} \Big[G_{6}(-x)g_{6}(x)\Big]dx \Big|_{\beta=\beta_{0}}=0,\\
    	&-\int_{-\infty}^{+\infty}\frac{\partial}{\partial \beta} \Big[G_{6}(-x)g_{7}(x)\Big]dx \Big|_{\beta=\beta_{0}}=(\iota_{A}^{3} \ \iota_{N}+ \iota_{A}\ \iota_{N}^{3})\Big(\int_{-\infty}^{+\infty}F(-x)f^{\prime}(x-\mu_{A}+\mu_{N})dx \Big),\\
    	&-\int_{-\infty}^{+\infty}\frac{\partial}{\partial \beta} \Big[G_{6}(-x)g_{8}(x)\Big]dx \Big|_{\beta=\beta_{0}}=(\iota_{A}^{3} \ \iota_{N}+ \iota_{A}\ \iota_{N}^{3})\Big(-\int_{-\infty}^{+\infty}F(-x)f^{\prime}(x-\mu_{N}+\mu_{A})dx \Big), \\
        &-\int_{-\infty}^{+\infty}\frac{\partial}{\partial \beta} \Big[G_{7}(-x)g_{1}(x)\Big]dx \Big|_{\beta=\beta_{0}}=\iota_{C}^{2}\ \iota_{A}\ \iota_{N} \ \Big(\int_{-\infty}^{+\infty}f(x+\mu_{A}-\mu_{N})f(x)dx\\
    	&\qquad \qquad \qquad \qquad \qquad \qquad \qquad \qquad \qquad \qquad \qquad
    	+\int_{-\infty}^{+\infty}F(-x-\mu_{A}+\mu_{N})f^{\prime}(x)dx \Big),\\
    	&-\int_{-\infty}^{+\infty}\frac{\partial}{\partial \beta} \Big[G_{7}(-x)g_{2}(x)\Big]dx \Big|_{\beta=\beta_{0}}=\iota_{C}\ \iota_{A}\ \iota_{N}^{2} \ \Big(\int_{-\infty}^{+\infty}f(x+\mu_{A}-\mu_{N})f(x+\mu_{N}-\mu_{C})dx\\
    	&\qquad \qquad \qquad \qquad \qquad \qquad \qquad \qquad \qquad \qquad \qquad
    	+\int_{-\infty}^{+\infty}F(-x-\mu_{A}+\mu_{N})f^{\prime}(x+\mu_{N}-\mu_{C})dx \Big), \\
    	&-\int_{-\infty}^{+\infty}\frac{\partial}{\partial \beta} \Big[G_{7}(-x)g_{3}(x)\Big]dx \Big|_{\beta=\beta_{0}}=\iota_{C}\ \iota_{A} \ \iota_{N}^{2}\ \Big(\int_{-\infty}^{+\infty}f(x+\mu_{A}-\mu_{N})f(x-\mu_{N}+\mu_{C})dx\Big), \\
    	&-\int_{-\infty}^{+\infty}\frac{\partial}{\partial \beta} \Big[G_{7}(-x)g_{4}(x)\Big]dx \Big|_{\beta=\beta_{0}}=\iota_{C}\ \iota_{A}^{2}\ \iota_{N}\ \Big(\int_{-\infty}^{+\infty}f(x+\mu_{A}-\mu_{N})f(x-\mu_{A}+\mu_{C})dx\\
    	&\qquad \qquad \qquad \qquad \qquad \qquad \qquad \qquad \qquad \qquad \qquad
    	+\int_{-\infty}^{+\infty}F(-x-\mu_{A}+\mu_{N})f^{\prime}(x-\mu_{A}+\mu_{C})dx \Big),\\
    	&-\int_{-\infty}^{+\infty}\frac{\partial}{\partial \beta} \Big[G_{7}(-x)g_{5}(x)\Big]dx \Big|_{\beta=\beta_{0}}=\iota_{C}\ \iota_{A}^{2} \ \iota_{N} \ \Big(\int_{-\infty}^{+\infty}f(x+\mu_{A}-\mu_{N})f(x+\mu_{A}-\mu_{C})dx\Big),\\
     	&-\int_{-\infty}^{+\infty}\frac{\partial}{\partial \beta} \Big[G_{7}(-x)g_{6}(x)\Big]dx \Big|_{\beta=\beta_{0}}=(\iota_{A}^{3}\ \iota_{N}+\iota_{A}\ \iota_{N}^{3}) \ \Big(\int_{-\infty}^{+\infty}f(x+\mu_{A}-\mu_{N})f(x)dx\Big), \\
        &-\int_{-\infty}^{+\infty}\frac{\partial}{\partial \beta} \Big[G_{7}(-x)g_{7}(x)\Big]dx \Big|_{\beta=\beta_{0}}=\iota_{A}^{2}\ \iota_{N}^{2} \ \Big(\int_{-\infty}^{+\infty}f(x+\mu_{A}-\mu_{N})f(x-\mu_{A}+\mu_{N})dx\\
    	&\qquad \qquad \qquad \qquad \qquad \qquad \qquad \qquad \qquad \qquad \qquad
    	+\int_{-\infty}^{+\infty}F(-x-\mu_{A}+\mu_{N})f^{\prime}(x-\mu_{A}+\mu_{N})dx \Big),\\
    	&-\int_{-\infty}^{+\infty}\frac{\partial}{\partial \beta} \Big[G_{7}(-x)g_{8}(x)\Big]dx \Big|_{\beta=\beta_{0}}=\iota_{A}^{2}\ \iota_{N}^{2} \ \Big(\int_{-\infty}^{+\infty}f(x+\mu_{A}-\mu_{N})f(x-\mu_{N}+\mu_{A})dx\\
    	&\qquad \qquad \qquad \qquad \qquad \qquad \qquad \qquad \qquad \qquad \qquad
    	-\int_{-\infty}^{+\infty}F(-x-\mu_{A}+\mu_{N})f^{\prime}(x-\mu_{N}+\mu_{A})dx \Big), \\
    	&-\int_{-\infty}^{+\infty}\frac{\partial}{\partial \beta} \Big[G_{8}(-x)g_{1}(x)\Big]dx \Big|_{\beta=\beta_{0}}=\iota_{C}^{2}\ \iota_{A}\ \iota_{N} \ \Big(-\int_{-\infty}^{+\infty}f(x+\mu_{N}-\mu_{A})f(x)dx\\
    	&\qquad \qquad \qquad \qquad \qquad \qquad \qquad \qquad \qquad \qquad \qquad
    	+\int_{-\infty}^{+\infty}F(-x-\mu_{N}+\mu_{A})f^{\prime}(x)dx \Big),\\
    	&-\int_{-\infty}^{+\infty}\frac{\partial}{\partial \beta} \Big[G_{8}(-x)g_{2}(x)\Big]dx \Big|_{\beta=\beta_{0}}=\iota_{C}\ \iota_{A}\ \iota_{N}^{2} \ \Big(-\int_{-\infty}^{+\infty}f(x+\mu_{N}-\mu_{A})f(x+\mu_{N}-\mu_{C})dx\\
    	&\qquad \qquad \qquad \qquad \qquad \qquad \qquad \qquad \qquad \qquad \qquad
    	+\int_{-\infty}^{+\infty}F(-x-\mu_{N}+\mu_{A})f^{\prime}(x+\mu_{N}-\mu_{C})dx \Big), \\
    	&-\int_{-\infty}^{+\infty}\frac{\partial}{\partial \beta} \Big[G_{8}(-x)g_{3}(x)\Big]dx \Big|_{\beta=\beta_{0}}=\iota_{C}\ \iota_{A} \ \iota_{N}^{2}\ \Big(-\int_{-\infty}^{+\infty}f(x+\mu_{N}-\mu_{A})f(x-\mu_{N}+\mu_{C})dx\Big), \\
    	&-\int_{-\infty}^{+\infty}\frac{\partial}{\partial \beta} \Big[G_{8}(-x)g_{4}(x)\Big]dx \Big|_{\beta=\beta_{0}}=\iota_{C}\ \iota_{A}^{2}\ \iota_{N}\ \Big(-\int_{-\infty}^{+\infty}f(x+\mu_{N}-\mu_{A})f(x-\mu_{A}+\mu_{C})dx\\
    	&\qquad \qquad \qquad \qquad \qquad \qquad \qquad \qquad \qquad \qquad \qquad
    	+\int_{-\infty}^{+\infty}F(-x-\mu_{N}+\mu_{A})f^{\prime}(x-\mu_{A}+\mu_{C})dx \Big),\\
    	&-\int_{-\infty}^{+\infty}\frac{\partial}{\partial \beta} \Big[G_{8}(-x)g_{5}(x)\Big]dx \Big|_{\beta=\beta_{0}}=\iota_{C}\ \iota_{A}^{2} \ \iota_{N} \ \Big(-\int_{-\infty}^{+\infty}f(x+\mu_{N}-\mu_{A})f(x+\mu_{A}-\mu_{C})dx\Big),\\
     	&-\int_{-\infty}^{+\infty}\frac{\partial}{\partial \beta} \Big[G_{8}(-x)g_{6}(x)\Big]dx \Big|_{\beta=\beta_{0}}=(\iota_{A}^{3}\ \iota_{N}+\iota_{A}\ \iota_{N}^{3}) \ \Big(-\int_{-\infty}^{+\infty}f(x+\mu_{N}-\mu_{A})f(x)dx\Big), \\
        &-\int_{-\infty}^{+\infty}\frac{\partial}{\partial \beta} \Big[G_{8}(-x)g_{7}(x)\Big]dx \Big|_{\beta=\beta_{0}}=\iota_{A}^{2}\ \iota_{N}^{2} \ \Big(-\int_{-\infty}^{+\infty}f(x+\mu_{N}-\mu_{A})f(x-\mu_{A}+\mu_{N})dx\\
    	&\qquad \qquad \qquad \qquad \qquad \qquad \qquad \qquad \qquad \qquad \qquad
    	+\int_{-\infty}^{+\infty}F(-x-\mu_{N}+\mu_{A})f^{\prime}(x-\mu_{A}+\mu_{N})dx \Big),\\
    	&-\int_{-\infty}^{+\infty}\frac{\partial}{\partial \beta} \Big[G_{8}(-x)g_{8}(x)\Big]dx \Big|_{\beta=\beta_{0}}=\iota_{A}^{2}\ \iota_{N}^{2} \ \Big(-\int_{-\infty}^{+\infty}f(x+\mu_{N}-\mu_{A})f(x-\mu_{N}+\mu_{A})dx\\
    	&\qquad \qquad \qquad \qquad \qquad \qquad \qquad \qquad \qquad \qquad \qquad
    	-\int_{-\infty}^{+\infty}F(-x-\mu_{N}+\mu_{A})f^{\prime}(x-\mu_{N}+\mu_{A})dx \Big),
     \end{align*}
As discussed in Section~\ref{subsec: Quantitative evaluation of the trade-off between IV strength and sample size}, if the IV is valid, we have $f(x)=f(-x)$ for any $x \in \mathbb{R}$. Invoking Lemma~\ref{lemma: convolution}, we can rewrite these 64 terms as
\allowdisplaybreaks
    \begin{align*}
    	&-\int_{-\infty}^{+\infty}\frac{\partial}{\partial \beta} \Big[G_{1}(-x)g_{1}(x)\Big]dx \Big|_{\beta=\beta_{0}}=\iota_{C}^{4}\ \Big(2 \cdot \{f*f\}(0) \Big),	\\	&-\int_{-\infty}^{+\infty}\frac{\partial}{\partial \beta} \Big[G_{1}(-x)g_{2}(x)\Big]dx \Big|_{\beta=\beta_{0}}=\iota_{C}^{3}\ \iota_{N}\  \Big( 2\cdot \{f*f\}(\mu_{N}-\mu_{C}) \Big), \\
    	&-\int_{-\infty}^{+\infty}\frac{\partial}{\partial \beta} \Big[G_{1}(-x)g_{3}(x)\Big]dx \Big|_{\beta=\beta_{0}}=\iota_{C}^{3}\ \iota_{N} \  \Big( \{f*f\}(\mu_{N}-\mu_{C}) \Big), \\
    	&-\int_{-\infty}^{+\infty}\frac{\partial}{\partial \beta} \Big[G_{1}(-x)g_{4}(x)\Big]dx \Big|_{\beta=\beta_{0}}=\iota_{C}^{3}\ \iota_{A} \  \Big( 2\cdot \{f*f\}(\mu_{A}-\mu_{C}) \Big),\\
    	&-\int_{-\infty}^{+\infty}\frac{\partial}{\partial \beta} \Big[G_{1}(-x)g_{5}(x)\Big]dx \Big|_{\beta=\beta_{0}}=\iota_{C}^{3}\ \iota_{A} \ \Big(\{f*f\}(\mu_{A}-\mu_{C}) \Big),\\
    	&-\int_{-\infty}^{+\infty}\frac{\partial}{\partial \beta} \Big[G_{1}(-x)g_{6}(x)\Big]dx \Big|_{\beta=\beta_{0}}=\iota_{C}^{2}\ (\iota_{A}^{2}+\iota_{N}^{2}) \ \Big( \{f*f\}(0) \Big), \\
      	&-\int_{-\infty}^{+\infty}\frac{\partial}{\partial \beta} \Big[G_{1}(-x)g_{7}(x)\Big]dx \Big|_{\beta=\beta_{0}}=\iota_{C}^{2}\ \iota_{A} \ \iota_{N}  \Big(2\cdot \{f*f\}(\mu_{A}-\mu_{N}) \Big),\\
    	&-\int_{-\infty}^{+\infty}\frac{\partial}{\partial \beta} \Big[G_{1}(-x)g_{8}(x)\Big]dx \Big|_{\beta=\beta_{0}}=0,\\
    	&-\int_{-\infty}^{+\infty}\frac{\partial}{\partial \beta} \Big[G_{2}(-x)g_{1}(x)\Big]dx \Big|_{\beta=\beta_{0}}=\iota_{C}^{3}\ \iota_{N} \ \Big( 2 \{f*f\}(\mu_{N}-\mu_{C}) \Big), \\
    	&-\int_{-\infty}^{+\infty}\frac{\partial}{\partial \beta} \Big[G_{2}(-x)g_{2}(x)\Big]dx \Big|_{\beta=\beta_{0}}=\iota_{C}^{2}\ \iota_{N}^{2} \  \Big(2 \cdot \{f*f\}(2\mu_{N}-2\mu_{C}) \Big),\\
    	&-\int_{-\infty}^{+\infty}\frac{\partial}{\partial \beta} \Big[G_{2}(-x)g_{3}(x)\Big]dx \Big|_{\beta=\beta_{0}}=\iota_{C}^{2}\ \iota_{N}^{2} \  \Big(\{f*f\}(0) \Big),\\
    	&-\int_{-\infty}^{+\infty}\frac{\partial}{\partial \beta} \Big[G_{2}(-x)g_{4}(x)\Big]dx \Big|_{\beta=\beta_{0}}=\iota_{C}^{2}\ \iota_{A} \ \iota_{N}\ \Big(2\cdot \{f*f\}(\mu_{A}-\mu_{N}) \Big),\\
    	&-\int_{-\infty}^{+\infty}\frac{\partial}{\partial \beta} \Big[G_{2}(-x)g_{5}(x)\Big]dx \Big|_{\beta=\beta_{0}}=\iota_{C}^{2}\ \iota_{A}\ \iota_{N} \ \Big(\{f*f\}(\mu_{A}+\mu_{N}-2\mu_{C}) \Big),\\
       	&-\int_{-\infty}^{+\infty}\frac{\partial}{\partial \beta} \Big[G_{2}(-x)g_{6}(x)\Big]dx \Big|_{\beta=\beta_{0}}=(\iota_{C}\ \iota_{A}^{2}\ \iota_{N}+\iota_{C}\ \iota_{N}^{3})\  \Big(\{f*f\}(\mu_{N}-\mu_{C}) \Big),\\
    	&-\int_{-\infty}^{+\infty}\frac{\partial}{\partial \beta} \Big[G_{2}(-x)g_{7}(x)\Big]dx \Big|_{\beta=\beta_{0}}=\iota_{C} \ \iota_{A} \ \iota_{N}^{2} \ \Big(2\cdot \{f*f\}(\mu_{A}-2\mu_{N}+\mu_{C})  \Big), \\
    	&-\int_{-\infty}^{+\infty}\frac{\partial}{\partial \beta} \Big[G_{2}(-x)g_{8}(x)\Big]dx \Big|_{\beta=\beta_{0}}=0, \\
       	&-\int_{-\infty}^{+\infty}\frac{\partial}{\partial \beta} \Big[G_{3}(-x)g_{1}(x)\Big]dx \Big|_{\beta=\beta_{0}}=\iota_{C}^{3}\ \iota_{N} \ \Big(\{f*f\}(\mu_{N}-\mu_{C}) \Big),\\
    	&-\int_{-\infty}^{+\infty}\frac{\partial}{\partial \beta} \Big[G_{3}(-x)g_{2}(x)\Big]dx \Big|_{\beta=\beta_{0}}=\iota_{C}^{2} \ \iota_{N}^{2} \ \Big(\{f*f\}(0) \Big),\\
    	&-\int_{-\infty}^{+\infty}\frac{\partial}{\partial \beta} \Big[G_{3}(-x)g_{3}(x)\Big]dx \Big|_{\beta=\beta_{0}}=0,\\
       	&-\int_{-\infty}^{+\infty}\frac{\partial}{\partial \beta} \Big[G_{3}(-x)g_{4}(x)\Big]dx \Big|_{\beta=\beta_{0}}=\iota_{C}^{2}\ \iota_{A}\ \iota_{N}\ \Big(\{f*f\}(\mu_{A}+\mu_{N}-2\mu_{C}) \Big),\\    
    	&-\int_{-\infty}^{+\infty}\frac{\partial}{\partial \beta} \Big[G_{3}(-x)g_{5}(x)\Big]dx \Big|_{\beta=\beta_{0}}=0, \\
    	&-\int_{-\infty}^{+\infty}\frac{\partial}{\partial \beta} \Big[G_{3}(-x)g_{6}(x)\Big]dx \Big|_{\beta=\beta_{0}}=0,\\
    	&-\int_{-\infty}^{+\infty}\frac{\partial}{\partial \beta} \Big[G_{3}(-x)g_{7}(x)\Big]dx \Big|_{\beta=\beta_{0}}=\iota_{C}\ \iota_{A}\ \iota_{N}^{2}\ \Big(\{f*f\}(\mu_{A}-\mu_{C}) \Big),\\
    	&-\int_{-\infty}^{+\infty}\frac{\partial}{\partial \beta} \Big[G_{3}(-x)g_{8}(x)\Big]dx \Big|_{\beta=\beta_{0}}=\iota_{C}\ \iota_{A}\ \iota_{N}^{2}\ \Big(-\{f*f\}(\mu_{A}-2\mu_{N}+\mu_{C}) \Big),\\
    	&-\int_{-\infty}^{+\infty}\frac{\partial}{\partial \beta} \Big[G_{4}(-x)g_{1}(x)\Big]dx \Big|_{\beta=\beta_{0}}=\iota_{C}^{3}\ \iota_{A}\ \Big(2\cdot \{f*f\}(\mu_{A}-\mu_{C}) \Big),	\\
    	&-\int_{-\infty}^{+\infty}\frac{\partial}{\partial \beta} \Big[G_{4}(-x)g_{2}(x)\Big]dx \Big|_{\beta=\beta_{0}}=\iota_{C}^{2}\ \iota_{A}\ \iota_{N}\  \Big(2\cdot \{f*f\}(\mu_{A}-\mu_{N}) \Big), \\
    	&-\int_{-\infty}^{+\infty}\frac{\partial}{\partial \beta} \Big[G_{4}(-x)g_{3}(x)\Big]dx, \Big|_{\beta=\beta_{0}}=\iota_{C}^{2}\ \iota_{A}\ \iota_{N}\  \Big(\{f*f\}(\mu_{A}+\mu_{N}-2\mu_{C}) \Big), \\
    	&-\int_{-\infty}^{+\infty}\frac{\partial}{\partial \beta} \Big[G_{4}(-x)g_{4}(x)\Big]dx \Big|_{\beta=\beta_{0}}=\iota_{C}^{2}\ \iota_{A}^{2} \  \Big(2\cdot \{f*f\}(2\mu_{A}-2\mu_{C}) \Big),\\
    	&-\int_{-\infty}^{+\infty}\frac{\partial}{\partial \beta} \Big[G_{4}(-x)g_{5}(x)\Big]dx \Big|_{\beta=\beta_{0}}=\iota_{C}^{2}\ \iota_{A}^{2} \ \Big(\{f*f\}(0) \Big),\\
    	&-\int_{-\infty}^{+\infty}\frac{\partial}{\partial \beta} \Big[G_{4}(-x)g_{6}(x)\Big]dx \Big|_{\beta=\beta_{0}}=(\iota_{C}\ \iota_{A}^{3}+\iota_{C}\ \iota_{A}\ \iota_{N}^{2} ) \ \Big(\{f*f\}(\mu_{A}-\mu_{C}) \Big), \\
      	&-\int_{-\infty}^{+\infty}\frac{\partial}{\partial \beta} \Big[G_{4}(-x)g_{7}(x)\Big]dx \Big|_{\beta=\beta_{0}}=\iota_{C}\ \iota_{A}^{2} \ \iota_{N}  \Big(2\cdot \{f*f\}(2\mu_{A}-\mu_{N}-\mu_{C}) \Big),\\
    	&-\int_{-\infty}^{+\infty}\frac{\partial}{\partial \beta} \Big[G_{4}(-x)g_{8}(x)\Big]dx \Big|_{\beta=\beta_{0}}=0, \\
    	&-\int_{-\infty}^{+\infty}\frac{\partial}{\partial \beta} \Big[G_{5}(-x)g_{1}(x)\Big]dx \Big|_{\beta=\beta_{0}}=\iota_{C}^{3}\ \iota_{A} \ \Big(\{f*f\}(\mu_{A}-\mu_{C}) \Big), \\
    	&-\int_{-\infty}^{+\infty}\frac{\partial}{\partial \beta} \Big[G_{5}(-x)g_{2}(x)\Big]dx \Big|_{\beta=\beta_{0}}=\iota_{C}^{2}\ \iota_{A}\ \iota_{N} \  \Big(\{f*f\}(\mu_{A}+\mu_{N}-2\mu_{C}) \Big),\\
    	&-\int_{-\infty}^{+\infty}\frac{\partial}{\partial \beta} \Big[G_{5}(-x)g_{3}(x)\Big]dx \Big|_{\beta=\beta_{0}}=0,\\
    	&-\int_{-\infty}^{+\infty}\frac{\partial}{\partial \beta} \Big[G_{5}(-x)g_{4}(x)\Big]dx \Big|_{\beta=\beta_{0}}=\iota_{C}^{2}\ \iota_{A}^{2}\ \Big(\{f*f\}(0) \Big),\\
    	&-\int_{-\infty}^{+\infty}\frac{\partial}{\partial \beta} \Big[G_{5}(-x)g_{5}(x)\Big]dx \Big|_{\beta=\beta_{0}}=0,\\
       	&-\int_{-\infty}^{+\infty}\frac{\partial}{\partial \beta} \Big[G_{5}(-x)g_{6}(x)\Big]dx \Big|_{\beta=\beta_{0}}=0,\\
    	&-\int_{-\infty}^{+\infty}\frac{\partial}{\partial \beta} \Big[G_{5}(-x)g_{7}(x)\Big]dx \Big|_{\beta=\beta_{0}}=\iota_{C} \ \iota_{A}^{2} \ \iota_{N} \ \Big( \{f*f\}(\mu_{N}-\mu_{C}) \Big), \\
    	&-\int_{-\infty}^{+\infty}\frac{\partial}{\partial \beta} \Big[G_{5}(-x)g_{8}(x)\Big]dx \Big|_{\beta=\beta_{0}}=\iota_{C} \ \iota_{A}^{2} \ \iota_{N} \ \Big(-\{f*f\}(2\mu_{A}-\mu_{N}-\mu_{C}) \Big), \\
       	&-\int_{-\infty}^{+\infty}\frac{\partial}{\partial \beta} \Big[G_{6}(-x)g_{1}(x)\Big]dx \Big|_{\beta=\beta_{0}}=(\iota_{C}^{2}\ \iota_{A}^{2}+ \iota_{C}^{2} \ \iota_{N}^{2}) \Big(\{f*f\}(0) \Big),\\
    	&-\int_{-\infty}^{+\infty}\frac{\partial}{\partial \beta} \Big[G_{6}(-x)g_{2}(x)\Big]dx \Big|_{\beta=\beta_{0}}=(\iota_{C}\ \iota_{A}^{2} \ \iota_{N}+ \iota_{C}\ \iota_{N}^{3}) \ \Big(\{f*f\}(\mu_{N}-\mu_{C}) \Big),\\
    	&-\int_{-\infty}^{+\infty}\frac{\partial}{\partial \beta} \Big[G_{6}(-x)g_{3}(x)\Big]dx \Big|_{\beta=\beta_{0}}=0,\\
       	&-\int_{-\infty}^{+\infty}\frac{\partial}{\partial \beta} \Big[G_{6}(-x)g_{4}(x)\Big]dx \Big|_{\beta=\beta_{0}}=(\iota_{C}\ \iota_{A}^{3}+ \iota_{C}\ \iota_{A}\ \iota_{N}^{2}) \Big(\{f*f\}(\mu_{A}-\mu_{C}) \Big),\\    
    	&-\int_{-\infty}^{+\infty}\frac{\partial}{\partial \beta} \Big[G_{6}(-x)g_{5}(x)\Big]dx \Big|_{\beta=\beta_{0}}=0, \\
    	&-\int_{-\infty}^{+\infty}\frac{\partial}{\partial \beta} \Big[G_{6}(-x)g_{6}(x)\Big]dx \Big|_{\beta=\beta_{0}}=0,\\
    	&-\int_{-\infty}^{+\infty}\frac{\partial}{\partial \beta} \Big[G_{6}(-x)g_{7}(x)\Big]dx \Big|_{\beta=\beta_{0}}=(\iota_{A}^{3} \ \iota_{N}+ \iota_{A}\ \iota_{N}^{3})\Big(\{f*f\}(\mu_{A}-\mu_{N}) \Big),\\
    	&-\int_{-\infty}^{+\infty}\frac{\partial}{\partial \beta} \Big[G_{6}(-x)g_{8}(x)\Big]dx \Big|_{\beta=\beta_{0}}=(\iota_{A}^{3} \ \iota_{N}+ \iota_{A}\ \iota_{N}^{3})\Big(-\{f*f\}(\mu_{A}-\mu_{N}) \Big),\\
        &-\int_{-\infty}^{+\infty}\frac{\partial}{\partial \beta} \Big[G_{7}(-x)g_{1}(x)\Big]dx \Big|_{\beta=\beta_{0}}=\iota_{C}^{2}\ \iota_{A}\ \iota_{N} \ \Big(2\cdot \{f*f\}(\mu_{A}-\mu_{N}) \Big),\\
    	&-\int_{-\infty}^{+\infty}\frac{\partial}{\partial \beta} \Big[G_{7}(-x)g_{2}(x)\Big]dx \Big|_{\beta=\beta_{0}}=\iota_{C}\ \iota_{A}\ \iota_{N}^{2} \ \Big(2\cdot \{f*f\}(\mu_{A}-2\mu_{N}+\mu_{C} \Big), \\
    	&-\int_{-\infty}^{+\infty}\frac{\partial}{\partial \beta} \Big[G_{7}(-x)g_{3}(x)\Big]dx \Big|_{\beta=\beta_{0}}=\iota_{C}\ \iota_{A} \ \iota_{N}^{2}\ \Big(\{f*f\}(\mu_{A}-\mu_{C} \Big), \\
    	&-\int_{-\infty}^{+\infty}\frac{\partial}{\partial \beta} \Big[G_{7}(-x)g_{4}(x)\Big]dx \Big|_{\beta=\beta_{0}}=\iota_{C}\ \iota_{A}^{2}\ \iota_{N}\ \Big(2\cdot \{f*f\}(2\mu_{A}-\mu_{N}-\mu_{C}) \Big),\\
    	&-\int_{-\infty}^{+\infty}\frac{\partial}{\partial \beta} \Big[G_{7}(-x)g_{5}(x)\Big]dx \Big|_{\beta=\beta_{0}}=\iota_{C}\ \iota_{A}^{2} \ \iota_{N} \ \Big(\{f*f\}(\mu_{N}-\mu_{C}) \Big),\\
     	&-\int_{-\infty}^{+\infty}\frac{\partial}{\partial \beta} \Big[G_{7}(-x)g_{6}(x)\Big]dx \Big|_{\beta=\beta_{0}}=(\iota_{A}^{3}\ \iota_{N}+\iota_{A}\ \iota_{N}^{3}) \ \Big(\{f*f\}(\mu_{A}-\mu_{N}) \Big), \\
        &-\int_{-\infty}^{+\infty}\frac{\partial}{\partial \beta} \Big[G_{7}(-x)g_{7}(x)\Big]dx \Big|_{\beta=\beta_{0}}=\iota_{A}^{2}\ \iota_{N}^{2} \ \Big(2\cdot \{f*f\}(2\mu_{A}-2\mu_{N}) \Big),\\
    	&-\int_{-\infty}^{+\infty}\frac{\partial}{\partial \beta} \Big[G_{7}(-x)g_{8}(x)\Big]dx \Big|_{\beta=\beta_{0}}=0, \\
    	&-\int_{-\infty}^{+\infty}\frac{\partial}{\partial \beta} \Big[G_{8}(-x)g_{1}(x)\Big]dx \Big|_{\beta=\beta_{0}}=0,\\
    	&-\int_{-\infty}^{+\infty}\frac{\partial}{\partial \beta} \Big[G_{8}(-x)g_{2}(x)\Big]dx \Big|_{\beta=\beta_{0}}=0, \\
    	&-\int_{-\infty}^{+\infty}\frac{\partial}{\partial \beta} \Big[G_{8}(-x)g_{3}(x)\Big]dx \Big|_{\beta=\beta_{0}}=\iota_{C}\ \iota_{A} \ \iota_{N}^{2}\ \Big(-\{f*f\}(\mu_{A}-2\mu_{N}+\mu_{C})\Big), \\
    	&-\int_{-\infty}^{+\infty}\frac{\partial}{\partial \beta} \Big[G_{8}(-x)g_{4}(x)\Big]dx \Big|_{\beta=\beta_{0}}=0,\\
    	&-\int_{-\infty}^{+\infty}\frac{\partial}{\partial \beta} \Big[G_{8}(-x)g_{5}(x)\Big]dx \Big|_{\beta=\beta_{0}}=\iota_{C}\ \iota_{A}^{2} \ \iota_{N} \ \Big(-\{f*f\}(2\mu_{A}-\mu_{N}-\mu_{C})\Big),\\
     	&-\int_{-\infty}^{+\infty}\frac{\partial}{\partial \beta} \Big[G_{8}(-x)g_{6}(x)\Big]dx \Big|_{\beta=\beta_{0}}=(\iota_{A}^{3}\ \iota_{N}+\iota_{A}\ \iota_{N}^{3}) \ \Big(-\{f*f\}(\mu_{A}-\mu_{N})\Big), \\
        &-\int_{-\infty}^{+\infty}\frac{\partial}{\partial \beta} \Big[G_{8}(-x)g_{7}(x)\Big]dx \Big|_{\beta=\beta_{0}}=0,\\
    	&-\int_{-\infty}^{+\infty}\frac{\partial}{\partial \beta} \Big[G_{8}(-x)g_{8}(x)\Big]dx \Big|_{\beta=\beta_{0}}=\iota_{A}^{2}\ \iota_{N}^{2} \ \Big(-2\cdot \{f*f\}(2\mu_{A}-2\mu_{N}) \Big).
     \end{align*}
\endgroup
The desired result follows from summing up these $64$ terms.

We then find the ARE between the two IVs for the sign test under the same setting. Let  $\mu(\beta)=P_{\beta}(X>0)=1-G_{\beta}(0)$, then we have
	\begin{align*}
	    \mu(\beta)&=1-\iota_{C}^{2} \ F(-\beta+\beta_{0})-\iota_{C}\ \iota_{N} \ F(\mu_{N}-\mu_{C}-\beta+\beta_{0})-\iota_{C}\ \iota_{N}\ F(-\mu_{N}+\mu_{C})\\
		&\quad -\iota_{C}\ \iota_{A}\ F(-\mu_{A}+\mu_{C}-\beta+\beta_{0})-\iota_{C}\ \iota_{A}\ F(\mu_{A}-\mu_{C})-(\iota_{A}^{2}+\iota_{N}^{2}) \ F(0)\\
	&\quad -\iota_{A}\ \iota_{N} \ F(-\mu_{A}+\mu_{N}-\beta+\beta_{0})-\iota_{A}\ \iota_{N} \ F(-\mu_{N}+\mu_{A}+\beta-\beta_{0}). 
	\end{align*}
Then we have
	\begin{align*}
	    \mu^{\prime}(\beta_{0})&=\iota_{C}^{2} \ f(0)+\iota_{C}\ \iota_{N} \ f(\mu_{N}-\mu_{C})\\
	    &\quad +\iota_{C}\ \iota_{A}\ f(-\mu_{A}+\mu_{C})+\iota_{A}\ \iota_{N} \ f(-\mu_{A}+\mu_{N})-\iota_{A}\ \iota_{N} \ f(-\mu_{N}+\mu_{A})\\
	    &=\iota_{C}^{2} \ f(0)+\iota_{C}\ \iota_{N} \ f(\mu_{N}-\mu_{C})+\iota_{C}\ \iota_{A}\ f(-\mu_{A}+\mu_{C}).
	\end{align*}
By Example 3.3.5 in \cite{lehmann2004elements}, the ARE under the sign test is the ratio of $\{\mu^{\prime}(\beta_{0})\}^{2}$ under $\varrho_{1}$ and $\varrho_{2}$. So the desired result follows.
\end{proof}

\subsection*{B.2: Proof of Theorem~\ref{thm: ARE in independent case}}

When $\mu_{A} = \mu_{N} = \mu_{C}$, by Theorem~\ref{thm: complete ARE in general case}, we have 
\begin{align*}
	\psi_{\text{wilc}}(\iota_{C},\iota_{A},\iota_{N},\mu_{C}, \mu_{A}, \mu_{N})&= \{ f*f\}(0) \cdot (2\cdot \iota_{C}^{4}+6\cdot \iota_{C}^{3}\ \iota_{A}+6 \cdot \iota_{C}^{2}\ \iota_{A}^{2}+2\cdot \iota_{C}\ \iota_{A}^{3}\\
	&\quad +6\cdot \iota_{C}^{3}\ \iota_{N}+6\cdot \iota_{C}^{2}\ \iota_{N}^{2}+2\cdot \iota_{C}\ \iota_{N}^{3}\\
	&\quad +12\cdot \iota_{C}^{2}\ \iota_{A}\ \iota_{N}+6\cdot \iota_{C}\ \iota_{A}^{2} \ \iota_{N}+6\cdot \iota_{C}\ \iota_{A}\ \iota_{N}^{2})\\
	&=2 \cdot \{ f*f\}(0) \cdot \iota_{C} \cdot (\iota_{C}+\iota_{A}+\iota_{N})^{3}\\
	&=2 \cdot \{ f*f\}(0) \cdot \iota_{C}.
\end{align*}
Therefore, invoking Theorem~\ref{thm: complete ARE in general case}, we have for the Wilcoxon signed rank test,
\begin{equation*}
    \lim_{n \rightarrow \infty}\frac{I_{n,1}}{I_{n,2}}=\frac{\psi_{\text{wilc}}^{2}(\iota_{C,2},\iota_{A,2},\iota_{N,2},\mu_{C,2}, \mu_{A,2}, \mu_{N,2})}{\psi_{\text{wilc}}^{2}(\iota_{C,1},\iota_{A,1},\iota_{N,1},\mu_{C,1}, \mu_{A,1}, \mu_{N,1})}=\frac{\iota_{C,2}^{2}}{\iota_{C,1}^{2}}.
\end{equation*}
Similarly, when $\mu_{A}=\mu_{N}=\mu_{C}$, by Theorem~\ref{thm: complete ARE in general case}, we have
	\begin{align*}
    \psi_{\text{sign}} &= \iota_{C}^{2} \ f(0)+\iota_{C}\ \iota_{N} \ f(\mu_{N}-\mu_{C})+\iota_{C}\ \iota_{A}\ f(-\mu_{A}+\mu_{C})\\
    &=\iota_{C}^{2} \ f(0)+\iota_{C}\ \iota_{N} \ f(0)+\iota_{C}\ \iota_{A}\ f(0)\\
    &=\iota_{C}\ f(0).
	\end{align*}
Therefore, invoking Theorem~\ref{thm: complete ARE in general case}, we have for the sign test,
\begin{equation*}
    \lim_{n \rightarrow \infty}\frac{I_{n,1}}{I_{n,2}}=\frac{\psi_{\text{sign}}^{2}(\iota_{C,2},\iota_{A,2},\iota_{N,2},\mu_{C,2}, \mu_{A,2}, \mu_{N,2})}{\psi_{\text{sign}}^{2}(\iota_{C,1},\iota_{A,1},\iota_{N,1},\mu_{C,1}, \mu_{A,1}, \mu_{N,1})}=\frac{\iota_{C,2}^{2}}{\iota_{C,1}^{2}}.
\end{equation*}

\subsection*{B.3: Proof of Theorem~\ref{thm: impossible}}
It suffices to show that if $\xi(\widetilde{z}, \mathbf{x}, u)\in \mathcal{G}$, then (S2) implies that (S1) is not true, i.e., $\widetilde{Z}\indep U \mid \mathbf{X}$. Since $\xi (\widetilde{z}, \mathbf{x}, u)\in \mathcal{G}$, there exists nonnegative functions $\eta$, $\zeta$, and $\vartheta$, such that $\xi(\widetilde{z}, \mathbf{x}, u)=\eta(\mathbf{x}, u) \zeta (\widetilde{z}, \mathbf{x}) \vartheta(\widetilde{z},u)$, where $\vartheta$ is a smooth function over the support of $\xi$. Without loss of generality, we assume that the support of $\xi$ is $\mathbb{R}^{p+2}$. Otherwise, we just need to apply our argument over that support. To show that $\widetilde{Z}\indep U \mid \mathbf{X}$, since $\xi \in \mathcal{G}$, it suffices to show that $\vartheta(\widetilde{z}, u)=\vartheta_{1}(\widetilde{z})\vartheta_{2}(u)$ for some functions $\vartheta_{1}$ and $\vartheta_{2}$. This is because if $f(\widetilde{Z}=\widetilde{z}\mid \mathbf{X}=\mathbf{x}, U=u)=\xi (\widetilde{z}, \mathbf{x}, u)=\eta(\mathbf{x}, u) \zeta (\widetilde{z}, \mathbf{x}) \vartheta(\widetilde{z}, u)=\eta(\mathbf{x}, u) \zeta (\widetilde{z}, \mathbf{x})\vartheta_{1}(\widetilde{z})\vartheta_{2}(u)$, we have $f(\widetilde{Z}=\widetilde{z}\mid \mathbf{X}=\mathbf{x}, U=u)\propto \zeta (\widetilde{z}, \mathbf{x})\vartheta_{1}(\widetilde{z})$, that is, $\widetilde{Z}\indep U \mid \mathbf{X}$. Therefore, it suffices to show that if (S2) holds true, there exist two functions $\vartheta_{1}$ and $\vartheta_{2}$ such that $\vartheta(\widetilde{z}, u)=\vartheta_{1}(\widetilde{z})\vartheta_{2}(u)$. Assuming that $\widetilde{Z}_{i1} \neq \widetilde{Z}_{i2}$ in pair $i$, we have 
\begin{align*}
    &P(\widetilde{Z}_{i1}=\widetilde{Z}_{i1}\wedge \widetilde{Z}_{i2}, \widetilde{Z}_{i2}=\widetilde{Z}_{i1}\vee \widetilde{Z}_{i2} \mid \mathcal{F}_{1}, \mathcal{Z}, \widetilde{\mathbf{Z}}_{\vee}, \widetilde{\mathbf{Z}}_{\wedge})\\
    &\quad \quad =\frac{\xi(\widetilde{Z}_{i1}\wedge \widetilde{Z}_{i2}, \mathbf{x}_{i1}, u_{i1})\xi(\widetilde{Z}_{i1}\vee \widetilde{Z}_{i2}, \mathbf{x}_{i2}, u_{i2})}{\xi(\widetilde{Z}_{i1}\wedge \widetilde{Z}_{i2}, \mathbf{x}_{i1}, u_{i1})\xi(\widetilde{Z}_{i1}\vee \widetilde{Z}_{i2}, \mathbf{x}_{i2}, u_{i2})+\xi(\widetilde{Z}_{i1}\vee \widetilde{Z}_{i2}, \mathbf{x}_{i1}, u_{i1})\xi(\widetilde{Z}_{i1}\wedge \widetilde{Z}_{i2}, \mathbf{x}_{i2}, u_{i2})}\\
   &\quad \quad =\frac{\text{density odds ratio}}{\text{density odds ratio}+1},
\end{align*}
where we have 
\begin{align*}
    &\text{density odds ratio}\\
    &\quad \quad =\frac{\xi(\widetilde{Z}_{i1}\wedge \widetilde{Z}_{i2}, \mathbf{x}_{i1}, u_{i1})\xi(\widetilde{Z}_{i1}\vee \widetilde{Z}_{i2}, \mathbf{x}_{i2}, u_{i2})}{\xi(\widetilde{Z}_{i1}\vee \widetilde{Z}_{i2}, \mathbf{x}_{i1}, u_{i1})\xi(\widetilde{Z}_{i1}\wedge \widetilde{Z}_{i2}, \mathbf{x}_{i2}, u_{i2})}\\
    &\quad \quad =\frac{\eta(\mathbf{x}_{i1}, u_{i1}) \zeta (\widetilde{Z}_{i1}\wedge \widetilde{Z}_{i2}, \mathbf{x}_{i1}) \vartheta(\widetilde{Z}_{i1}\wedge \widetilde{Z}_{i2}, u_{i1})\eta(\mathbf{x}_{i2}, u_{i2}) \zeta (\widetilde{Z}_{i1}\vee \widetilde{Z}_{i2}, \mathbf{x}_{i2}) \vartheta(\widetilde{Z}_{i1}\vee \widetilde{Z}_{i2}, u_{i2})}{\eta(\mathbf{x}_{i1}, u_{i1}) \zeta (\widetilde{Z}_{i1}\vee \widetilde{Z}_{i2}, \mathbf{x}_{i1}) \vartheta(\widetilde{Z}_{i1}\vee \widetilde{Z}_{i2}, u_{i1})\eta(\mathbf{x}_{i2}, u_{i2}) \zeta (\widetilde{Z}_{i1}\wedge \widetilde{Z}_{i2}, \mathbf{x}_{i2}) \vartheta(\widetilde{Z}_{i1}\wedge \widetilde{Z}_{i2}, u_{i2})}\\
    &\quad \quad = \frac{\vartheta(\widetilde{Z}_{i1}\wedge \widetilde{Z}_{i2}, u_{i1})}{\vartheta(\widetilde{Z}_{i1}\wedge \widetilde{Z}_{i2}, u_{i2})}\cdot \frac{\vartheta(\widetilde{Z}_{i1}\vee \widetilde{Z}_{i2}, u_{i2})}{\vartheta(\widetilde{Z}_{i1}\vee \widetilde{Z}_{i2}, u_{i1})}. \quad \text{(given $\mathbf{x}_{i1}=\mathbf{x}_{i2}$)}
\end{align*}
If (S2) holds true, we have for all $u_{i1}$ and $u_{i2}$, the density odds ratio does not depend on $\widetilde{Z}_{i1}\wedge \widetilde{Z}_{i2}$ or $\widetilde{Z}_{i1}\vee \widetilde{Z}_{i2}$. Therefore, the function $\vartheta(\widetilde{z}, u_{i1})/\vartheta(\widetilde{z}, u_{i2})$, or equivalently, the function $\ln \vartheta(\widetilde{z}, u_{i1})-\ln \vartheta(\widetilde{z}, u_{i2})$ does not depend on $\widetilde{z}$, for all $u_{i1}$ and $u_{i2}$. Thus, we have for all $\widetilde{z}$, $u_{i1}$, and $u_{i2}$,
\begin{equation*}
    \frac{\partial}{\partial \widetilde{z}}\Big (\frac{\ln \vartheta(\widetilde{z}, u_{i1})-\ln \vartheta(\widetilde{z}, u_{i2})}{u_{i1}-u_{i2}}\Big)\equiv 0.
\end{equation*}
Therefore, for all $\widetilde{z}$ and $U$, by smoothness of $\vartheta$, we have
\begin{align*}
    \frac{\partial^{2} \ln \vartheta(\widetilde{z}, u)}{\partial \widetilde{z} \partial u}= \frac{\partial}{\partial \widetilde{z}}\Big( \lim_{u^{\prime} \rightarrow u}\frac{\ln \vartheta(\widetilde{z}, u^{\prime})-\ln \vartheta(\widetilde{z}, u)}{u^{\prime}-u}\Big)=\lim_{u^{\prime} \rightarrow u}\frac{\partial}{\partial \widetilde{z}}\Big (\frac{\ln \vartheta(\widetilde{z}, u^{\prime})-\ln \vartheta(\widetilde{z}, u)}{u^{\prime}-u}\Big)  \equiv 0.
\end{align*}
Let $h(\widetilde{z}, u)=\ln \vartheta(\widetilde{z}, u)$, we then get for all $\widetilde{z}$ and $u$,
\begin{align}\label{equa: pde}
    \frac{\partial^{2} h(\widetilde{z}, u)}{\partial \widetilde{z} \partial u} \equiv 0.
\end{align}
It is well known that if $h(\widetilde{z}, u)=\ln \vartheta(\widetilde{z}, u)$ is a smooth solution to the second order linear partial differential equation (\ref{equa: pde}), we have $h(\widetilde{z}, u)=\omega_{1}(\widetilde{z})+\omega_{2}(u)$ for some functions $\omega_{1}$ and $\omega_{2}$ (\citealp{strauss2007partial}). Thus, let $\vartheta_{1}(\widetilde{z})=\exp(\omega_{1}(\widetilde{z}))$ and $\vartheta_{2}(u)=\exp(\omega_{2}(u))$, we have $\vartheta(\widetilde{z}, u)=\vartheta_{1}(\widetilde{z})\vartheta_{2}(u)$. Therefore, the desired result follows.

\subsection*{B.4: Proof of Theorem \ref{thm: asymp. Wald}}
Note that 
\begin{align*}
    &\quad  \frac{1}{\sqrt{I}}\cdot  \frac{\widehat{\beta}_{\text{IV}}-\beta- \frac{\sum_{i=1}^{I}(Z_{i1}-Z_{i2})[f(\mathbf{X}_{i1})-f(\mathbf{X}_{i2})]}{\sum_{i=1}^{I}(Z_{i1}-Z_{i2})(D_{i1}-D_{i2})}-\delta \cdot \frac{\sum_{i=1}^{I}(Z_{i1}-Z_{i2})(U_{\text{tot},i1}-U_{\text{tot},i2})}{\sum_{i=1}^{I}(Z_{i1}-Z_{i2})(D_{i1}-D_{i2})}}{\frac{\sqrt{2} \sigma }{\sum_{i=1}^{I}(Z_{i1}-Z_{i2})(D_{i1}-D_{i2})}}\\
    &=\frac{1}{\sqrt{2}\sigma} \frac{1}{\sqrt{I}}\sum_{i=1}^{I}(\epsilon_{Ti}-\epsilon_{Ci}), 
\end{align*}
where $\epsilon_{Ti}=\epsilon_{i1}\cdot \mathbbm{1}(\widetilde{Z}_{i1}<\widetilde{Z}_{i2})+\epsilon_{i2}\cdot \mathbbm{1}(\widetilde{Z}_{i1}>\widetilde{Z}_{i2})$, and $\epsilon_{Ci}=\epsilon_{i1}\cdot \mathbbm{1}(\widetilde{Z}_{i1}>\widetilde{Z}_{i2})+\epsilon_{i2}\cdot \mathbbm{1}(\widetilde{Z}_{i1}<\widetilde{Z}_{i2})$. Note that since $\{\epsilon_{n}: n=1,\dots, N\} \ \indep \ \{(\widetilde{Z}_{n}, \mathbf{X}_{n}, U_{\text{tot},n}): n=1,\dots, N\}$, by Definition~\ref{definition: matching alg is valid}, we have $\{\epsilon_{n}: n=1,\dots, N\} \ \indep \ \mathcal{M}$. Thus, we have $\epsilon_{T1}, \dots, \epsilon_{TI}$, $\epsilon_{C1}, \dots, \epsilon_{CI}$ are i.i.d. with expectation $\mathbb{E}[\epsilon]=0$ and variance $\sigma^{2}$. Thus, $\epsilon_{Ti}-\epsilon_{Ci}, i=1,\dots, N$ are i.i.d. with expectation zero and variance $2\sigma^{2}$. By the central limit theorem, we have $\frac{1}{\sqrt{2}\sigma} \frac{1}{\sqrt{I}}\sum_{i=1}^{I}(\epsilon_{Ti}-\epsilon_{Ci})\xrightarrow{\mathcal{L}} \mathcal{N}(0,1)$. So the desired result follows.

\subsection*{B.5: Proof of Theorem~\ref{thm: asymp bias}}
Under the outcome generating model (\ref{eqn: Y model}), $\widehat{\beta}_{\text{IV}}$ can be decomposed into:
\begin{align}\label{equa: decomposition of the wald estimator}
\widehat{\beta}_{\text{IV}}&=\frac{\sum_{i=1}^{I}(Z_{i1}-Z_{i2})(R_{i1}-R_{i2})}{\sum_{i=1}^{I}(Z_{i1}-Z_{i2})(D_{i1}-D_{i2})\nonumber}\\
	&=\frac{\sum_{i=1}^{I}(Z_{i1}-Z_{i2})[(\beta {D}_{i1}+ f(\mathbf{X}_{i1}) +\delta U_{\text{tot},i1} +\epsilon_{i1})-(\beta {D}_{i2}+ f(\mathbf{X}_{i2})+\delta U_{\text{tot},i2} +\epsilon_{i2})]}{\sum_{i=1}^{I}(Z_{i1}-Z_{i2})(D_{i1}-D_{i2})}\nonumber \\
	&=\beta + \frac{\sum_{i=1}^{I}(Z_{i1}-Z_{i2})(f(\mathbf{X}_{i1})-f(\mathbf{X}_{i2}))}{\sum_{i=1}^{I}(Z_{i1}-Z_{i2})(D_{i1}-D_{i2})}+ \delta \cdot \frac{\sum_{i=1}^{I}(Z_{i1}-Z_{i2})(U_{\text{tot},i1}-U_{\text{tot},i2})}{\sum_{i=1}^{I}(Z_{i1}-Z_{i2})(D_{i1}-D_{i2})}\nonumber \\
	&\quad +\frac{\sum_{i=1}^{I}(Z_{i1}-Z_{i2})(\epsilon_{i1}-\epsilon_{i2})}{\sum_{i=1}^{I}(Z_{i1}-Z_{i2})(D_{i1}-D_{i2})}.
\end{align}
For the second term of (\ref{equa: decomposition of the wald estimator}), by Assumption~\ref{assumption: conv of sample quantities}, we have
 \begin{align*}
	\quad \frac{\sum_{i=1}^{I}(Z_{i1}-Z_{i2})(f(\mathbf{X}_{i1})-f(\mathbf{X}_{i2}))}{\sum_{i=1}^{I}(Z_{i1}-Z_{i2})(D_{i1}-D_{i2})}=\frac{\frac{1}{I}\sum_{i=1}^{I}f(\mathbf{X}_{Ti})-\frac{1}{I}\sum_{i=1}^{I}f(\mathbf{X}_{Ci})}{\frac{1}{I}\sum_{i=1}^{I}D_{Ti}-\frac{1}{I}\sum_{i=1}^{I}D_{Ci}}\xrightarrow{p}\frac{\mathbb{E}_{\mathcal{M},T}[f(\mathbf{X})]-\mathbb{E}_{\mathcal{M},C}[f(\mathbf{X})]}{\mathbb{E}_{\mathcal{M},T}[D]-\mathbb{E}_{\mathcal{M},C}[D]}.
 \end{align*} 
For the third term of (\ref{equa: decomposition of the wald estimator}), by Assumption~\ref{assumption: conv of sample quantities}, we have
 \begin{align*}
	\quad \frac{\sum_{i=1}^{I}(Z_{i1}-Z_{i2})(U_{\text{tot},i1}-U_{\text{tot},i2})}{\sum_{i=1}^{I}(Z_{i1}-Z_{i2})(D_{i1}-D_{i2})}=\frac{\frac{1}{I}\sum_{i=1}^{I}U_{\text{tot},Ti}-\frac{1}{I}\sum_{i=1}^{I}U_{\text{tot},Ci}}{\frac{1}{I}\sum_{i=1}^{I}D_{Ti}-\frac{1}{I}\sum_{i=1}^{I}D_{Ci}}\xrightarrow{p}\frac{\mathbb{E}_{\mathcal{M},T}[U_{\text{tot}}]-\mathbb{E}_{\mathcal{M},C}[U_{\text{tot}}]}{\mathbb{E}_{\mathcal{M},T}[D]-\mathbb{E}_{\mathcal{M},C}[D]}.
 \end{align*}
For the last term of (\ref{equa: decomposition of the wald estimator}), as discussed in the proof of Theorem~\ref{thm: asymp. Wald}, by Definition~\ref{definition: matching alg is valid}, we have $\epsilon_{T1}, \dots, \epsilon_{TI}$, $\epsilon_{C1}, \dots, \epsilon_{CI}$ are i.i.d. with expectation zero and variance $\sigma^{2}$. By Assumption~\ref{assumption: conv of sample quantities}, we have
\begin{align*}
	\quad \frac{\sum_{i=1}^{I}(Z_{i1}-Z_{i2})(\epsilon_{i1}-\epsilon_{i2})}{\sum_{i=1}^{I}(Z_{i1}-Z_{i2})(D_{i1}-D_{i2})}=\frac{\frac{1}{I}\sum_{i=1}^{I}\epsilon_{Ti}-\frac{1}{I}\sum_{i=1}^{I}\epsilon_{Ci}}{\frac{1}{I}\sum_{i=1}^{I}D_{Ti}-\frac{1}{I}\sum_{i=1}^{I}D_{Ci}} \xrightarrow{p} \frac{\mathbb{E}[\epsilon]-\mathbb{E}[\epsilon]}{\mathbb{E}_{\mathcal{M},T}[D]-\mathbb{E}_{\mathcal{M},C}[D]}=0.
 \end{align*}
Putting all the above results together, we have
\begin{align*}
     \widehat{\beta}_{\text{IV}}-\beta &=\frac{\sum_{i=1}^{I}(Z_{i1}-Z_{i2})(f(\mathbf{X}_{i1})-f(\mathbf{X}_{i2}))}{\sum_{i=1}^{I}(Z_{i1}-Z_{i2})(D_{i1}-D_{i2})}+ \delta \cdot \frac{\sum_{i=1}^{I}(Z_{i1}-Z_{i2})(U_{\text{tot},i1}-U_{\text{tot},i2})}{\sum_{i=1}^{I}(Z_{i1}-Z_{i2})(D_{i1}-D_{i2})}\\
     &\quad +\frac{\sum_{i=1}^{I}(Z_{i1}-Z_{i2})(\epsilon_{i1}-\epsilon_{i2})}{\sum_{i=1}^{I}(Z_{i1}-Z_{i2})(D_{i1}-D_{i2})}\\
     &\xrightarrow{p}\frac{\mathbb{E}_{\mathcal{M},T}[f(\mathbf{X})]-\mathbb{E}_{\mathcal{M},C}[f(\mathbf{X})]}{\mathbb{E}_{\mathcal{M}, T}[D]-\mathbb{E}_{\mathcal{M}, C}[D]}+\delta \cdot \frac{\mathbb{E}_{\mathcal{M}, T}[U_{\text{tot}}]-\mathbb{E}_{\mathcal{M},C}[U_{\text{tot}}]}{\mathbb{E}_{\mathcal{M},T}[D]-\mathbb{E}_{\mathcal{M},C}[D]}.
 \end{align*}

\clearpage
\begin{center}
{\large\bf Supplementary Material C: Additional Simulations}
\end{center}

\subsection*{C.1: More simulations on Theorem~\ref{thm: ARE in independent case}}
\label{app: simulation results laplace}
\begin{figure}[ht]
\caption{\small Panels (a), (b), and (c): power against sample size for three pairs of IVs with different strength: $\beta - \beta_0 = 0.1$, Laplace error, $\alpha = 0.05$. Panel (d): sample sizes needed to obtain a fixed power for the stronger and weaker IV in each pair. Lines with slopes equal to $\iota^2_{C, 2}/\iota^2_{C, 1}$ are imposed.}
\label{fig: two pics Laplace error}
\begin{subfigure}{.5\linewidth}
\centering
\includegraphics[width = 6 cm, height = 4.2 cm]{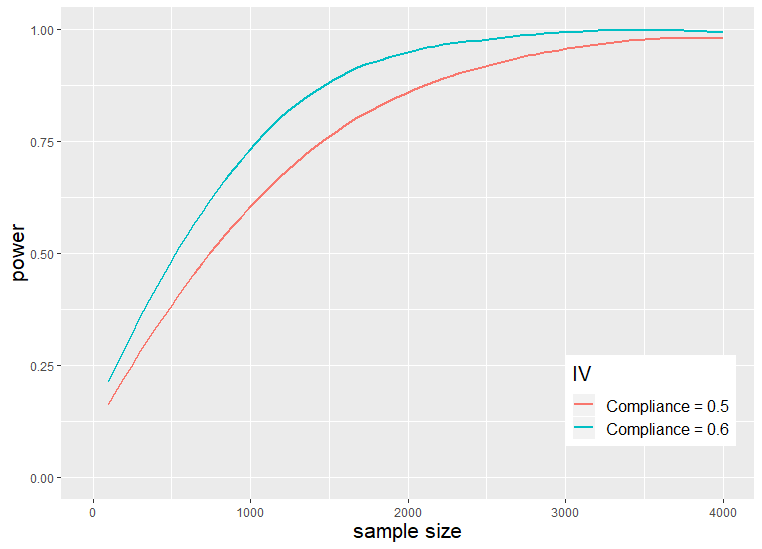}
\caption{Pair 1: $(\iota_{C, 1} = 0.5, ~\iota_{C, 2} = 0.6)$ }\label{fig: power pair 1 laplace}
\end{subfigure}%
\begin{subfigure}{.5\linewidth}
\centering
\includegraphics[width = 6 cm, height = 4.2 cm]{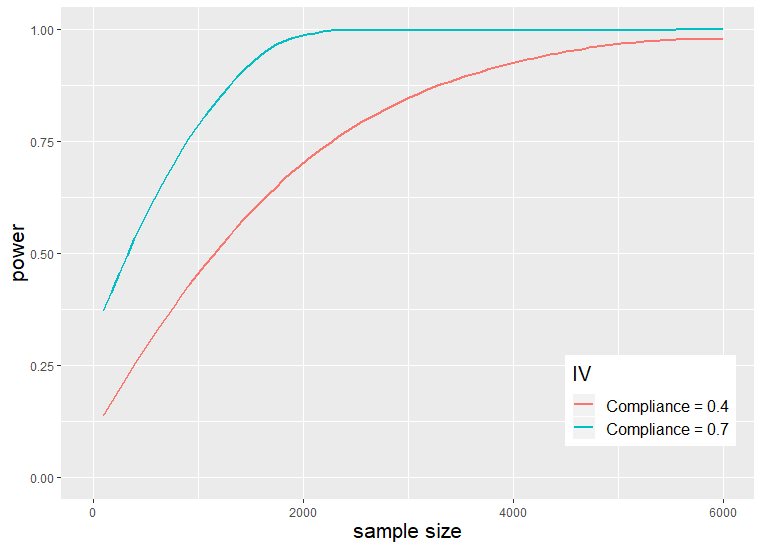}
\caption{Pair 2: $(\iota_{C, 1} = 0.4, ~\iota_{C, 2} = 0.7)$ }\label{fig: power pair 2 laplace}
\end{subfigure}\\[1ex]
\begin{subfigure}{.5\linewidth}
\centering
\includegraphics[width = 6 cm, height = 4.2 cm]{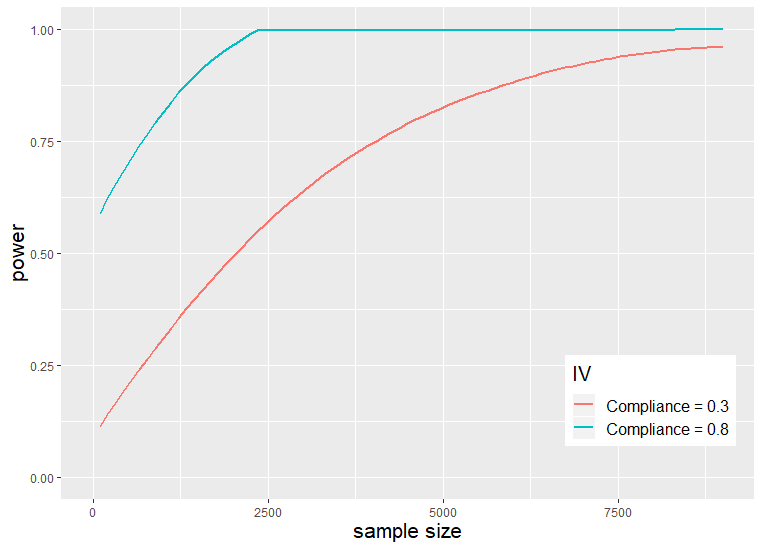}
\caption{Pair 3: $(\iota_{C, 1} = 0.3, ~\iota_{C, 2} = 0.8)$ }\label{fig: power pair 3 laplace}
\end{subfigure}%
\begin{subfigure}{.5\linewidth}
\centering
\includegraphics[width = 7 cm, height = 4.2 cm]{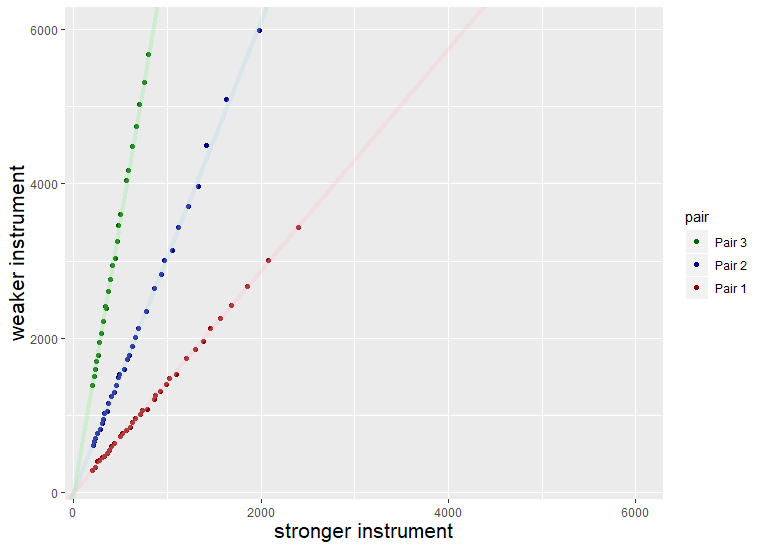}
\caption{Sample size ratio}\label{fig: Laplace sample size ratio}
\end{subfigure}
\end{figure}

\subsection*{C.2: Simulated bias of $\widehat{\beta}_{\text{IV}}$ for two designs}
\label{app: simulation of bias}
\begin{Simulation}[Bias of $\widehat{\beta}_{\text{IV}}$ due to unmeasured confounding]\rm
Let $R$ be the observed outcome, $D$ the binary treatment, $\widetilde{Z}$ the continuous doses of encouragement, $(X_{1}, X_{2})$ two observed covariates, and $U_{\text{tot}}$ an unmeasured confounder. $\{(R_{n}, D_{n}, \widetilde{Z}_{n}, X_{1n}, X_{2n}, U_{\text{tot},n}), n=1,\dots, 400 \}$ represents data before matching and are i.i.d random vectors from the following data generating process: 
\begin{equation*}
\begin{split}
    &R_{n}=\beta \cdot D_{n}+ \sin(X_{1n})+ X_{2n}^{3}+\delta \cdot U_{\text{tot},n} + \epsilon_{1n},\\
    &D_{n}=\mathbbm{1}\{c_{1} \cdot \widetilde{Z}_{n}^{3}+c_{2}\cdot \widetilde{Z}_{n}+0.2\cdot X_{1n}+0.4\cdot X_{2n}+\epsilon_{2n}>c_{3}\},\\
    &U_{\text{tot},n}=f(\widetilde{Z}_{n})+\epsilon_{3n},
\end{split}
\end{equation*}
where $\widetilde{Z}_{n} \overset{\text{i.i.d.}}{\sim} \mathcal{N}(\mu , \sigma_{\text{IV}}^{2})$, $X_{1n} \overset{\text{i.i.d.}}{\sim} \mathcal{N}(0,1)$, $X_{2n} \overset{\text{i.i.d.}}{\sim} \mathcal{N}(0,1)$, $\epsilon_{3n}\overset{\text{i.i.d.}}{\sim} \mathcal{N}(0,1)$ and 
\begin{equation*}
    (\epsilon_{1n}, \epsilon_{2n}) \overset{\text{i.i.d.}}\sim \mathcal{N}\left[\left(\begin{array}{c}
0\\
0
\end{array}\right),\left(\begin{array}{cc}
1 & 0.5 \\
0.5 & 1 
\end{array}\right)\right].\\
\end{equation*} 
Consider the following three specific models:
\begin{itemize}
    \item Model 1: $\delta=1, c_{1}=0, c_{2}=1, c_{3}=0, f(z)=z, \mu=0, \sigma_{\text{IV}}^{2}=1$.
    \item Model 2: $\delta=1, c_{1}=1, c_{2}=1, c_{3}=4, f(z)=1/(z-1), \mu=1, \sigma_{\text{IV}}^{2}=5$.
    \item Model 3: $\delta=10^{-6}, c_{1}=1, c_{2}=1, c_{3}=4, f(z)=\exp(z), \mu=1, \sigma_{\text{IV}}^{2}=5$.
\end{itemize}

We consider two designs $\mathcal{M}_0$ and $\mathcal{M}_1$. $\mathcal{M}_0$ forms $200$ pairs with similar $(X_1, X_2)$ but distinct $\widetilde{Z}$ using optimal non-bipartite matching; $\mathcal{M}_1$ forms $100$ matched pairs that are still similar in $(X_1, X_2)$ but more distinct in $\widetilde{Z}$ by adding $e = 200$ ``sinks'' and a penalty caliper $\Lambda = 8$. See Supplementary Material A.1 for more details on matching. 

In Table~\ref{tab: AOB}, we report the following three quantities averaging over 20,000 replications,  for both designs $\mathcal{M}=\mathcal{M}_0$ and $\mathcal{M}=\mathcal{M}_1$: 1) Estimated compliance rate: $I^{-1}\sum_{i=1}^{I}(Z_{i1}-Z_{i2})(D_{i1}-D_{i2})$; 2) Absolute average encouraged-minus-control difference in $U_{\text{tot}}$: $|I^{-1}\sum_{i=1}^{I}(Z_{i1}-Z_{i2})(U_{\text{tot},i1}-U_{\text{tot},i2})|$; 3) Absolute bias contributed by $U_{\text{tot}}$: $\delta \cdot |\sum_{i=1}^{I}(Z_{i1}-Z_{i2})(U_{\text{tot},i1}-U_{\text{tot},i2})/\sum_{i=1}^{I}(Z_{i1}-Z_{i2})(D_{i1}-D_{i2})|$. We also report the ratio of the magnitude of bias contributed by $U_{\text{tot}}$ for two designs.

\begin{table}[ht]
\footnotesize
    \centering
    \caption{Simulation results: strengthening an IV may amplify the bias}
\begin{tabular}{ c c c c c c c} 
  \hline
  \multirow{3}{*}{} &  \multicolumn{2}{c}{Model 1}  &  \multicolumn{2}{c}{Model 2} & \multicolumn{2}{c}{Model 3}\\ 
  \cmidrule(r){2-3} \cmidrule(r){4-5} \cmidrule(r){6-7}
   & $\mathcal{M}_0$ & $\mathcal{M}_1$ & $\mathcal{M}_0$ & $\mathcal{M}_1$  & $\mathcal{M}_0$ & $\mathcal{M}_1$ \\
\hline
Average compliance rate & 0.32  & 0.37&  0.50 &  0.89  &  0.50 & 0.89   \\
Absolute difference in $U_{\text{tot}}$ & 1.13 & 1.36 & 0.73 & 0.37 & $9.16 \times 10^{5}$ & $4.54 \times 10^{6}$  \\
Absolute bias contributed by $U_{\text{tot}}$ & 3.62 & 3.81& 1.56  & 0.41 & 1.82 & 5.11   \\
Ratio of bias $\Delta$ &\multicolumn{2}{c}{$1.05\approx  1$} &\multicolumn{2}{c}{$0.26 < 1$} &\multicolumn{2}{c}{$2.81 > 1$}\\
 \hline 
 \end{tabular}
 \label{tab: AOB}
\end{table}
\end{Simulation}

Table \ref{tab: AOB} shows that the strengthening-IV design $\mathcal{M}_1$ may render the bias larger ($\Delta > 1$), almost the same $(\Delta \approx 1)$, or smaller $\Delta < 1$, compared to $\mathcal{M}_0$. We see that $\mathcal{M}_1$ always has a larger estimated compliance rate, as expected. Meanwhile, the absolute average encouraged-minus-control difference in $U_{\text{tot}}$ also gets larger in all three data-generating processes, because larger difference in $\widetilde{Z}$ corresponds to larger difference in $U_{\text{tot}}$ according to the data-generating process. Whether or not the magnitude of bias contributed by $U_{\text{tot}}$ would increase or decrease depends on \emph{both} the magnitude of change in the estimated compliance rate and the magnitude of encouraged-minus-control difference in $U_{\text{tot}}$.

\clearpage

\end{document}